\newtheorem{theorem}{Theorem}[section]
\newtheorem{exa}{Example}[section]
\newtheorem{lemma}[theorem]{Lemma}
\newtheorem{proposition}[theorem]{Proposition}
\theoremstyle{definition}
\newtheorem{definition}[theorem]{Definition}
\newtheorem{remark}[theorem]{Remark}
\newtheorem{assumption}[theorem]{Assumption}
\numberwithin{equation}{section}
\DeclareMathOperator{\Tr}{Tr}
\begin{document} 



\setcounter{tocdepth}{4}


\title{Dynamic Portfolio Optimization with Looping Contagion Risk}

\author{
Longjie Jia\thanks{Department of Mathematics, Imperial College, London SW7 2AZ, UK (longjie.jia13@imperial.ac.uk)}
	\and 
Martijn Pistorius\thanks{Department of Mathematics, Imperial College, London SW7 2AZ, UK (m.pistorius@imperial.ac.uk)}
	\and
Harry Zheng\thanks{Corresponding author. Department of Mathematics, Imperial College, London SW7 2AZ, UK (h.zheng@imperial.ac.uk)}
}
\date{}
\maketitle

\abstract
In this paper we consider a utility maximization problem with defaultable stocks and looping contagion risk. We assume that the default intensity of one company depends on the stock prices of itself and other companies, and the default of the company induces immediate drops in the stock prices of the surviving companies. We prove that the value function is the unique viscosity solution of the HJB equation.  We also perform some numerical tests to compare and analyse the statistical distributions of the terminal wealth of log utility and power utility based on two strategies, one using the full information of intensity process and the other a proxy constant intensity process. 

\bigskip\noindent \textbf{Keywords:} dynamic portfolio optimization,  looping contagion risk, HJB equation, viscosity solution, robust tests, statistical comparisons.

\bigskip\noindent \textbf{AMS MSC2010:} 93E20, 90C39
 
\section{Introduction}
There has been extensive research in dynamic portfolio optimization and credit risk modelling, both in theory and applications (see Pham (2009), Brigo and Morini (2013),  and references therein).  Utility maximization with credit risk is one of the important research areas, which is to find the optimal value and optimal control in the presence of possible defaults of  underlying securities or names. The early work includes Korn and Kraft (2003) using the firm value structural approach and Hou and Jin (2002) using the reduced form intensity approach. Defaults are caused by exogenous risk factors such as correlated Brownian motions, Ornstein-Uhlenbeck or CIR intensity processes.   Bo et al. (2010) consider an infinite horizon portfolio optimization problem with a log utility and assume that both the default risk premium and the default intensity  dependant on an external factor following a diffusion process and show the pre-default value function  can be reduced to a solution of a quasilinear parabolic PDE (partial differential equation).  
Capponi and Figueroa-Lopez (2011)  assume a Markov regime switching model and derive the dynamics of the defaultable bond  and prove a verification theorem with applications to log and power utilities. Callegaro et al. (2012) consider a wealth allocation problem with several defaultable assets whose dynamics depend on a partially observed external factor process.

Contagion risk or endogenous risk has grown into a major topic of interest  as it is clear that the conventional dependence modelling of assets using  covariance matrix cannot capture the sudden market co-movements. The failure of one company will have direct impacts on the performance of other related companies. For example,
during the global financial crisis of 2007-2008, the default of Lehman Brothers led to  sharp falls in stock prices of other investment banks and stock indices such as Dow Jones US Financial Index. Since defaults are rare events, one may have to rely on the market information of other companies or indices to infer the  default probability of one specific company. For example, one can often observe in the financial market data that the stock price of one company has negative correlation with the CDS (credit default swap) spread (a proxy of default probability)  of another company. A commonly used contagion risk model  is the interacting intensity model (see  Jarrow and Yu (2001)) in which the default intensity of one name jumps whenever there are defaults of other names in a portfolio.  Contagion risk has great impact on pricing and hedging portfolio credit derivatives (see Gu et al. (2013)).

There is limited research in the literature on dynamic portfolio optimization with contagion risk.
Jiao and Pham (2011) consider a financial market with  one stock which jumps downward at the default time of a counterparty  which is not traded and not affected by the stock and, for power utility, solve  the post-default   problem by the convex duality method and show the process defined by the pre-default value function satisfies a BSDE (backward stochastic differential equations).  
Jiao and Pham (2013) discuss  multiple defaults of a portfolio  with exponential utility and prove a verification theorem for  the value function characterized by a system of BSDEs.
Bo and Capponi (2016) consider a market consisting of a risk-free bank account, a stock index, and a set of CDSs. The default of one name may trigger a jump in default intensities of other names in the portfolio, which in turn leads to jumps in the market valuation of CDSs referencing the surviving names and affects the optimal trading strategies. They solve the problem with the DPP (dynamic programming principle) and, for power utility, find the optimal trading strategy on the stock index is Merton\rq{}s strategy, and those on the CDSs can be determined by a system of recursive ODEs (ordinary differential equations). Capponi and Frei (2017) introduce an equity-credit portfolio with  a market consisting of a risk-free bank account, defaultable stocks, and CDSs  referencing these stocks. The default intensities of companies are functions of stock prices  and some  external factors, which provides a genuine looping contagion default structure. For a log utility investor, there exists an explicit optimal strategy which crucially depends on the existence of CDSs in the portfolio, see Remark \ref{CF(2007)comments} for details.

In this paper we analyse the interaction of market and credit risks and its impact on dynamic portfolio optimization. The market is assumed to have one risk-free savings account, and multiple defaultable stocks in which the underlying companies may default and the value of defaulted stock price becomes zero.   The default time of any stock is the first jump time of a pure jump process driven by an intensity process that depends on all the surviving stock prices, and the surviving stock prices jump at time of default. This setup characterizes an investment with multiple stocks that are closely dependent on each other, both endogenously and exogenously. Compared with exogenous factor models in the literature, which strongly depend on the historical calibration of factor parameters, the looping contagion model has the ability to adjust trading strategies automatically based on current stock prices in the portfolio. We  study  a  terminal wealth utility maximization  problem with general utility functions  under this  looping contagion framework.

The aforementioned papers by Bo and Capponi (2016)  and Capponi and Frei (2017)  characterize the value function as a solution of the HJB (Hamilton-Jacobi-Bellman) equation and,  for power and log utility respectively, find the optimal trading strategies with some implicit unknown functions.  For general utilities, it is essentially impossible one may  guess a solution form of the HJB equation nor can one apply the verification theorem. In that case, a standard approach to studying the value function is the viscosity solution method. We prove, in addition to the verification theorem, that
 the value function is the unique viscosity solution of the HJB equation.  The result is important  as it lays a solid theoretical foundation for numerical schemes to find the value function,  in contrast to the verification theorem that requires priori the existence of a classical solution to the HJB equation, which is  in general difficult to prove. To the best of our knowledge, this is the first time the viscosity solution properties of the value function are studied and established in the literature of utility maximization with looping contagion risk. This is one of the main contributions of the paper.

We perform some numerical and robust tests to compare the statistical distributions of terminal wealth of log utility and power utility based on two trading strategies,  one uses the full information of intensity process, the other a proxy constant intensity process. These two strategies may be considered respectively the active and passive  portfolio investment. The numerical examples show that, statistically, they have  similar terminal wealth distributions, but active portfolio investment is more volatile in general.  Furthermore,  we illustrate the financial insight of the looping contagion model via a similar numerical test, but with different initial stock prices. The numerical test assumes that the constant intensity is estimated from historical calibration window, but there are big falls of stock prices at the start of the investment. The numerical example shows that the terminal wealth based on strategies using stock dependent intensity would have  much higher expected return and standard deviation than the one using a constant intensity. Therefore,  one may greatly improve the performance of investment  if one uses the information of stock dependant default intensity in a financial crisis period.

The rest of the paper is organized as follows. In Section 2 we introduce  the market model and state the main results, including the continuity of the value function for one-sided contagion case (Theorem \ref{continuity property of value function}), the verification theorem (Theorem \ref{pre-default verification theorem}), and
the unique viscosity solution property of the value function (Theorems \ref{vis} and  \ref{vis_unique}).
In Section 3 we perform numerical and robust tests with statistical distribution analysis for log and  power utility.  In Section 4 we prove Theorems 
\ref{continuity property of value function}, \ref{pre-default verification theorem}, \ref{vis} and \ref{vis_unique}.
  Section 5   concludes the paper.

\section{Model Setting and Main Results}
Let $(\Omega,\mathcal{G},(\mathcal{G}_t)_{t\geq 0},\mathbb{P})$ be a complete probability space satisfying the usual conditions and $(\mathcal{G}_t)_{t\geq 0}$ a filtration to be specified below. 
Let the market consist of one risk-free bank account with value process $(B_t)_{t \geq 0}$ and interest rate $r$ and $N$ defaultable stocks with price process $(S_t)_{t \geq 0}:=(S_t^1,...,S_t^N)^T_{t\geq 0}$, where $a^T$ is the transpose of a vector $a$. Let $(\mathcal{F}_t)_{t\geq 0}$ be the filtration generated by $N$ correlated Brownian motions $(W_t)_{t\geq 0}:=(W^1_t,...,W^N_t)^T_{t\geq 0}$, which represents the market information. 
Let $\tau:=(\tau_1,...,\tau_N)$ be a vector of nonnegative random variables representing the default time of each defaultable stock, defined by
\[
\tau_i:=\inf \left\{s\geq t: \int_t^s h^i_u du \geq \mathcal{X}_i \right\},
\]
where $(h^i_t)_{t\geq 0}$ is an intensity rate process and $\mathcal{X}_i$ is a standard exponential variable on the probability space
$(\Omega,\mathcal{G},\mathbb{P})$ and is independent of the filtration $(\mathcal{F}_t)_{t\geq 0}$, which means that $\tau_i$ is a totally inaccessible stopping time. We make the further assumption that $\mathcal{X}_i$ is independent of $\mathcal{X}_j$ for $i\neq j$. Under this assumption, the default of each stock is independent. 

Let $(\mathcal{H}_t)_{t\geq 0}$ be the filtration generated by the default indicator process $(\mathbb{H}_t)_{t\geq 0}:=(H^1_t,...,H^N_t)^T_{t\geq 0}$ where each of the default process $H_t^i$  is associated with the intensity process $(h^i_t)_{t \geq 0}$ and defined by
$H_t^i := \mathbb{I}_{\{ \tau_i\leq t\} }$, the indicator function that equals 0 if $\tau_i>t$ and 1 otherwise. Denote the value of indicator process $\mathbb{H}_t$ by $z$, thus $z\in I:=\{0,1\}^N$. The indicator process $\mathbb{H}_t$ can only jump from $z:=(z_1,...z_N)^T$ to its neighbor state $z^i:=(z_1,...,1-z_i,...,z_N)^T$ with rate $(1-z_i)h^i_t$ for $i\in \{1,...,N\}$. We denote $N_z$ the number of surviving stocks when $\mathbb{H}_t=z$ and $I_z$ the set of surviving stock numbers.

Finally, let 
$(\mathcal{G}_t)_{t\geq 0}$ be an enlarged filtration, defined by $\mathcal{G}_t = \mathcal{F}_t \lor \mathcal{H}_t$, which contains both the market information and the default information. 
The stopping time $\tau_i$ defined in above way satisfies the so-called $H$-hypothesis, which means any $\mathcal{F}$-square integrable martingale is also a $\mathcal{G}$-square integrable martingale (see Bielecki and Rutkowski (2003)), a property we will use later in the proofs. The market model is driven by the following stochastic differential equations (SDEs):
$$
\frac{dS^i_t}{S^i_{t-}} = \mu_i dt+\sigma_i dW^i_t-L_i^T d\mathbb{H}_t,
\quad
\frac{dB_t}{B_{t}} = r dt,
$$
for integer $i\in \{1,...,N\}$ where $\mu_i$ is the growth rates of  $S^i$, respectively, $\sigma_i$ is the volatility rate. The vector $L_i:=(L_{i1},...,L_{iN})^T$ represents the default impact of each stock to the $i$th stock, thus $L_{ii}=1$. 

All coefficients are positive constants to simplify discussions. We assume that the defaults of stocks do not occur at the same time. At default time $\tau_i$ the defaultable stock price $S^i$ falls to zero and the other stock price $S^j$ is reduced by a percentage of $L_{ji}$ for $i\neq j$.  We require that $L_{ii}=1$ and $L_{ij}<1$ for $i\neq j$. $L_{ji}<1$ ensures the other stock  price $S^j$ does not fall to zero at default time of $\tau_i$.  We denote  by $K$ a generic constant which may have different values at different places. 

\begin{assumption} \label{h assumption}
The intensity process $(h_t^i)_{t\geq 0}$ of the default indicator process $(H_t^i)_{t\geq 0}$ can be represented by $h_t^i=h(S_{t-}^z,z)$, a function of surviving stock prices $S_{t-}^z:=(S_{t-}^i)_{i\in I_z}$ and the state of default indicator process $\mathbb{H}_{t-}=z$. For simplicity, we denote $h(S_{t-}^z,z)$ by $h_z^i(S_{t-})$. We further assume that $h_z^i$ is bounded and continuous in $S_{t-}^z$ for $\forall z\in I$ and $i\in \{1,\ldots,N\}$.
\end{assumption}

To classify the looping contagion model setting, we give two examples which contain only two stocks in the market, denoted by $(S_t)_{t\geq 0}$ and $(P_t)_{t\geq 0}$.

\begin{exa} \label{example1}
(\textbf{One-sided contagion}) In this case, $(S_t)_{t\geq 0}$ denotes the price of ETF (exchange-traded-fund) on DJ US Financial Index and $(P_t)_{t\geq 0}$ denotes the price of a US investment bank. We may treat the ETF as default-free and its stock price reflects the whole US banking industry and thus has impact on the performance of the individual bank. Then the model is given by
$$
\frac{dS_t}{S_{t-}} = \mu^Sdt+\sigma^SdW^S_t-L^SdH_t,
\quad
\frac{dP_t}{P_{t-}} = \mu^Pdt+\sigma^PdW^P_t-dH_t,
$$
where $\mu^S$ and $\mu^P$ are  the growth rates of  $S$ and $P$, respectively, $\sigma^S$ and $\sigma^P$  are the volatility rates, and $L^S<1$ is the percentage loss of the stock  $S$ upon the default of stock $P$. At default time $\tau$ the defaultable stock price $P$ falls to zero and the stock  price $S$ is reduced by a percentage of $L^S$.  The intensity process $(h_t)_{t\geq 0}$ of the default indicator process $(H_t)_{t\geq 0}$ can be represented by $h_t=h(S_{t-},P_{t-})$. 
\end{exa}

\begin{exa} \label{example2}
(\textbf{Looping contagion}) In this case, both $(S_t)_{t\geq 0}$ and $(P_t)_{t\geq 0}$ denote the prices of single stocks. Then the model is given by
$$
\frac{dS_t}{S_{t-}} = \mu^Sdt+\sigma^SdW^S_t-dH^S_t-L^SdH^P_t,
\quad
\frac{dP_t}{P_{t-}} = \mu^Pdt+\sigma^PdW^P_t-L^PdH^S_t-dH^P_t.
$$
At default time of $S$ (resp. $P$), the stock price $S$ (resp. $P$) falls to zero and the stock  price $P$ (resp. $S$) is reduced by a percentage of $L^P$ (resp. $L^S$).  The intensity process $h_{(0,0)}^S(t)$ (resp. $h_{(0,0)}^P(t)$) of the default indicator process $(H^S_t)_{t\geq 0}$ (resp. $(H^P_t)_{t\geq 0}$) can be represented by $h_{(0,0)}^S(t)=h_{(0,0)}^S(S_{t-},P_{t-})$ (resp. $h_{(0,0)}^P(t)=h_{(0,0)}^P(S_{t-},P_{t-})$). After the default of $S$ (resp. $P$), the intensity process $h_{(1,0)}^P(t)$ (resp. $h_{(0,1)}^S(t)$) of the default indicator process $(H^P_t)_{t\geq 0}$ (resp. $(H^S_t)_{t\geq 0}$) can be represented by $h_{(1,0)}^P(t)=h_{(1,0)}^P(P_{t-})$ (resp. $h_{(0,1)}^S(t)=h_{(0,1)}^S(S_{t-})$).
\end{exa}

An investor dynamically allocates proportions $(\pi^1, \ldots, \pi^N, 1-\sum_{i=1}^N \pi^i)$ of the total wealth into the stocks and the bank account. 
The admissible control set $\mathcal{A}$ is the set of control processes $\pi$ that are progressively measurable with respect to the filtration $(\mathcal{G}_t)$  and $\pi_t\in A$ for all $t\in[0,T]$. The set $A$ is defined by
\[
A:= \left\{\pi\in O \mbox{ and }1-\sum_{i=1}^N L_{ij}\pi^i\geq \epsilon_A  \mbox{ for } \forall j\in \{1,...,N\}
\right \},
\]
where $O$ is a bounded set in $\mathbb{R}^N$ and  $\epsilon_A$ is a positive constant.
The dynamics of the wealth process $(X_t)_{t \geq 0}$ is given by
\begin{equation} \label{wealth process}
\begin{split}
\frac{dX_t}{X_{t-}} = \left(r+\pi_{t}^TD_t\theta\right)dt + \pi_{t}^TD_t\sigma dW_t - \pi_{t-}^TD_tLd\mathbb{H}_t,
\end{split}
\end{equation} 
where 
$$ D_t:=
 \begin{pmatrix}
1-H_t^1 & \ldots & 0 \\
\vdots  & \vdots  & \vdots \\
0       & \ldots & 1-H_t^N
   \end{pmatrix},\;
\theta: =  \begin{pmatrix}  \mu_1-r\\  \vdots \\ \mu_N-r \end{pmatrix}, \;
\sigma:=
\begin{pmatrix}
\sigma_1 & \ldots & 0 \\
\vdots  & \vdots  & \vdots \\
0       & \ldots & \sigma_N
   \end{pmatrix},\;
L:=
 \begin{pmatrix}
L_{11} & \ldots & L_{1N} \\
\vdots  & \vdots  & \vdots \\
L_{N1}  & \ldots & L_{NN}
   \end{pmatrix}.   
$$
The matrix-valued process $(D_t)_{t\geq 0}$ is adapted to the filtration $(\mathcal{H}_t)_{t\geq0}$ and plays the role of removing the defaulted stocks. 
Even though the admissible control set is still $A$ after default time $\tau_i$,  $\pi^i_t=0$ and is not  a  variable but a constant.   The requirement  $1-\sum_{i=1}^N L_{ij}\pi^i\geq \epsilon_A$ for $\forall j\in \{1,...,N\}$ ensures that when $j$th stock defaults, the maximum percentage loss of the wealth does not exceed $1- \epsilon_A$, in other words, if $x$ is the pre-default wealth, then the post-default wealth is at least $ \epsilon_A x$. 

\begin{remark} \label{Proposition of wealth process}
For a given control process $\pi\in \mathcal{A}$, equation (\ref{wealth process}) admits a unique strong solution that satisfies
\begin{equation} \label{strong solution property}
\sup_{t\in[0,T]}\mathbb{E}\left[X_t^\alpha\right] \leq K x^\alpha
\end{equation}
for any $\alpha>0$. This can be easily verified 
as  $X_t^\alpha = x^\alpha N_t M_t$, where
\begin{eqnarray*}
N_t &:= &\exp \left(  \alpha\int_0^t\left(r+\pi_u^TD_u\theta \right) du + {1\over 2}(\alpha^2-\alpha)\int_0^t \pi_u^TD_u\Sigma D_u\pi_u du +\alpha \sum_{j=1}^N\int_0^t \ln\left(1-\sum_{i=1}^N L_{ij}\pi_{u-}^i\right) dH^j_u \right),\\
M_t&:=&\exp\left(\int_0^t \alpha\pi_u^TD_u\sigma dW_u - \frac{1}{2}\alpha^2\int_0^t \pi_u^TD_u\Sigma D_u\pi_u du \right),\\
\Sigma&:=&
 \begin{pmatrix}
(\sigma_1)^2 & \rho_{12}  \sigma_1\sigma_2 & \ldots & \rho_{1N}\sigma_1\sigma_N \\
\vdots & \vdots & \vdots & \vdots \\
  \rho_{N1} \sigma_1\sigma_N & \rho_{N2}\sigma_2\sigma_N & \ldots & (\sigma_N)^2
 \end{pmatrix}, \\
 \pi_u &:= & (\pi^1_u, \ldots, \pi^N_u)^T.
\end{eqnarray*}
Note that $\rho_{ij}$ is the correlation between Brownian motion $W^i$ and $W^j$. Since $A$ is a bounded set and $1-\sum_{i=1}^N L_{ij}\pi^i\geq\epsilon_A$ for $\forall j\in \{1,...,N\}$, we have $|N_t|<K$,  independent of $t$, and $M_t$ is an exponential martingale, thus $\mathbb{E}\left[M_t\right]=1$, which gives (\ref{strong solution property}). 
\end{remark}

Our objective is to maximize the expected utility of the terminal wealth, that is, 
$$ \sup_{\pi\in \mathcal{A}} \mathbb{E}[U(X_T^\pi)],$$
where $U$ is a utility function defined on $[0,\infty)$ and satisfies the following assumption.
\begin{assumption} \label{utility assumption}
The utility function $U$ is continuous, non-decreasing, concave, and satisfies $U(0)>-\infty$ and 
$\left|U(x)\right| \leq K\left(1+x^\gamma\right)$ for all $x\in [0,\infty)$, where $K>0$ and $0<\gamma<1$ are constants. 
\end{assumption}

Depending on the default scenarios, the value function is defined by
\begin{equation*} \label{value function}
v_z(t,x,s) = \sup_{\pi\in \mathcal{A}} \mathbb{E}\left[U(X_T^\pi)|X_t=x, S_t=s,\mathbb{H}_t=z\right]
\end{equation*} 
for $(t,x,s)\in [0,T]\times (0,\infty)^{N_z+1}$ and $z\in I$. Note that if  $h$ is independent of $s$, then the value function $v_z$ is a function of $t, x$ only.

\begin{remark}
Combining Assumption \ref{utility assumption} and Remark \ref{Proposition of wealth process}, we have $|v_z(t,x,s)|\leq K(1+x^\gamma)$.
\end{remark}

For the one-sided contagion model defined in Example \ref{example1}, the problem can be naturally split into pre-default case and post-default case. The latter is a standard utility maximization problem as stock $P$ disappears and the post-default value function $v_1$ is a function of time $t$ and wealth $x$ only,  see Pham (2009). We have the following continuity result for the pre-default value function $v_0$.

\begin{theorem}  \label{continuity property of value function}
For the one-sided contagion model (Example \ref{example1}), assume further that $h$ is non-increasing in $p$, monotone in $s$ and Lipschitz continuous in $s, p$, and $U$ satisfies $\left|U(x_1)-U(x_2)\right| \leq K\left|x_1-x_2\right|^\gamma$ for all $x_1,x_2\in [0,\infty)$. Then the pre-default value function $v_0$ is continuous in $(t,x,s,p)\in [0,T]\times [0,\infty)\times (0,\infty)^2$.
\end{theorem}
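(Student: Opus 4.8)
The plan is to use the one-sided structure to split $v_0$ according to whether or not a default occurs on $[t,T]$, the default part being summarised by the already-known continuous post-default value function $v_1$, and then to reduce everything to \emph{continuous} (jump-free) state processes, so that continuity follows from standard SDE-stability estimates.

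Concretely, I would first establish the representation
\begin{equation*}
v_0(t,x,s,p)=\sup_{\pi}\mathbb{E}\!\left[\mathrm{e}^{-\int_t^T h(\tilde S_u,\tilde P_u)\D u}U(\tilde X_T)+\int_t^T h(\tilde S_r,\tilde P_r)\,\mathrm{e}^{-\int_t^r h(\tilde S_u,\tilde P_u)\D u}\,v_1\!\big(r,\tilde X_r(1-L^S\pi^S_r-\pi^P_r)\big)\D r\right],
\end{equation*}
where $(\tilde X,\tilde S,\tilde P)$ solve the jump-free SDEs started at $(x,s,p)$ at time $t$ and the supremum is over $\mathcal{F}$-predictable, $A$-valued controls. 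This uses: (i) $\mathcal{G}_t=\mathcal{F}_t$ on $\{t<\tau\}$ and the fact, valid under the $H$-hypothesis, that a $\mathcal{G}$-predictable control agrees on $(0,\tau]$ with an $\mathcal{F}$-predictable one, so the pre-default part of any control may be taken $\mathcal{F}$-predictable; (ii) the exponential-clock construction of $\tau$, giving $\mathbb{P}(\tau>T\mid\mathcal{F}_\infty)=\mathrm{e}^{-\int_t^T h(\tilde S_u,\tilde P_u)\D u}$ and conditional default density $h(\tilde S_r,\tilde P_r)\mathrm{e}^{-\int_t^r h}$; (iii) the fact that after $\tau$ stock $P$ is gone, so the continuation value given post-default wealth $y$ at time $r$ is $v_1(r,y)$ (independent of the stock level), with an $\varepsilon$-optimal measurable selection of post-default controls needed to prove $v_0\ge(\text{RHS})$. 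In this representation $\tilde S,\tilde P$ are explicit geometric Brownian motions --- Lipschitz in their initial data, monotone in them, with moments of all orders --- while $\tilde X$ depends on $x$ but not on $(s,p)$, is Lipschitz in $x$, and satisfies $\sup_{u\le T}\mathbb{E}[\tilde X_u^\alpha]\le Kx^\alpha$ uniformly in $\pi$ exactly as in Remark \ref{Proposition of wealth process}, since $A$ is bounded.

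Next I would fix a control and bound the difference of the bracketed functionals at two initial data. Boundedness of $h$ (Assumption \ref{h assumption}) keeps $\mathrm{e}^{-\int h}\le1$ with exponent $\le KT$; Lipschitz continuity of $h$ together with the Lipschitz dependence of the GBMs on $(s,p)$ bounds $\|\int h(\tilde S,\tilde P)-\int h(\tilde S',\tilde P')\|$ by a constant times $|t-t'|^{1/2}+|s-s'|+|p-p'|$; the $\gamma$-Hölder continuity of $U$, together with the ensuing $\gamma$-Hölder continuity of $y\mapsto v_1(r,y)$ (obtained by applying the same post-default SDE estimate and Assumption \ref{utility assumption}) and the Lipschitz/moment bounds on $\tilde X$, controls the $U(\tilde X_T)$ and $v_1(r,\tilde X_r(\cdots))$ terms by $K|x-x'|^\gamma$; continuity of $v_1$ in time (see Pham (2009)) plus boundedness of the integrand absorbs the shift of the lower limit of integration and of the processes' starting time when $t\ne t'$; and polynomial growth of $U$ (hence of $v_1$) makes the Cauchy--Schwarz remainders finite. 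Collecting these gives a modulus of continuity $\omega$ independent of $\pi$; taking $\sup_\pi$ then yields $|v_0(t,x,s,p)-v_0(t',x',s',p')|\le\omega(|t-t'|+|x-x'|+|s-s'|+|p-p'|)$. Continuity up to $\{x=0\}$ is then immediate from $|v_0(t,x,s,p)-U(0)|\le K\sup_\pi\mathbb{E}[X_T^\gamma]\le Kx^\gamma$.

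The step I expect to require the most care is the representation itself --- reconciling the $\mathcal{G}$-measurability of controls with the fact that $\tau$ moves with the initial data, and carrying out the measurable selection of $\varepsilon$-optimal post-default controls. Interestingly, this route does not seem to need the monotonicity of $h$; I expect the authors instead run a direct path-coupling, driving both systems with the \emph{same} exponential random variable and exploiting that, along the pathwise-ordered GBMs $\tilde S,\tilde P$ and for $h$ monotone in $s$ and non-increasing in $p$ (varying $s$ and $p$ one at a time), the integrated intensities $\int_t^\cdot h\,\D u$ are pathwise ordered, hence the two default times are ordered --- which keeps a transplanted $\mathcal{F}$-predictable control admissible up to the smaller default time and makes the event that exactly one of the two systems defaults before $T$ have probability $\le K(|s-s'|+|p-p'|)$.
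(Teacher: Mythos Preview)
Your proposal is correct and takes a genuinely different route from the paper. You reduce the problem via the hazard-rate representation
\[
v_0(t,x,s,p)=\sup_{\pi}\mathbb{E}\Big[\mathrm{e}^{-\int_t^T h}\,U(\tilde X_T)+\int_t^T h\,\mathrm{e}^{-\int_t^r h}\,v_1\big(r,\tilde X_r(1-L^S\pi^S_r-\pi^P_r)\big)\,\D r\Big],
\]
so that all state processes are jump-free and continuity follows from standard SDE stability, Lipschitz continuity of $h$, H\"older continuity of $U$ and $v_1$, and boundedness of $h$. The paper instead works directly with the jump processes: it couples two systems (differing in the $s$ or $p$ initial value) via the \emph{same} exponential variable, uses the monotonicity of $h$ to pathwise order the default times $\tau_1\le\tau_2$, and then controls $\mathbb{E}[\sup_{[t,T]}|H^1_u-H^2_u|]$ via the Doob--Meyer decomposition $H^i=M^i+A^i$ and Doob's inequality, obtaining $\le K|s_1-s_2|$ (respectively $|p_1-p_2|$); continuity in $t$ is handled separately through the DPP. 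Your anticipation of this in the last paragraph is spot on. The trade-off: your route is more general---it dispenses with the monotonicity hypotheses on $h$ (only boundedness and Lipschitz are used), and yields a single modulus of continuity in all variables at once---but it front-loads the work into justifying the representation (reduction to $\mathcal{F}$-predictable pre-default controls under the $H$-hypothesis, plus the measurable selection of $\varepsilon$-optimal post-default controls), which you correctly flag as the delicate step. The paper's argument avoids that machinery and is more self-contained, at the price of the monotonicity assumptions and a variable-by-variable proof.
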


\begin{remark}
We assume $h$ is non-increasing in $p$ as intuitively the default probability of one company is non-increasing with its own stock price. We also assume that $h$ is monotone in $s$ as we consider $S$ and $P$ are strongly correlated in the sense that the default probability of stock $P$ is either positively or negatively affected by the  stock  $S$.
The continuity of pre-default value function for the one-sided contagion model relies on the special structure that there is only one default process in the place. For general looping contagion models,  the continuity of the value function is difficult to obtain as the order of multiple jumps is random. 
\end{remark}

Applying the DPP, one can show that the value function satisfies the following HJB equation:
\begin{equation} \label{General pre-default HJB equation}
-\sup_{\pi\in A}\mathcal{L}^\pi w_z(t,x,s)=0
\end{equation}
for $(t,x,s)\in [0,T)\times (0,\infty)^{N_z+1}$ and $z\in I$ with terminal condition $w_z(T,x,s)=U(x)$, where 
$\mathcal{L}^\pi$ is the infinitesimal generator of processes $S$, $\mathbb{H}$ and $X$  with control $\pi$,   given by
\begin{eqnarray}
\mathcal{L}^\pi w_z(t,x,s) &= &\frac{\partial w_z}{\partial t} + (r+\theta^T\pi)x \frac{\partial w_z}{\partial x}  + \sum_{i\in I_z}\mu_is_i \frac{\partial w_z}{\partial s_i} + \frac{1}{2}\pi^T\Sigma\pi x^2 \frac{\partial^2 w_z}{\partial x^2} + \frac{1}{2}\sum_{i\in I_z}\sigma_i^2s_i^2 \frac{\partial^2 w_z}{\partial s_i^2}   \nonumber \\
&& {} + \sum_{i,j\in I_z,i<j} \rho_{ij}\sigma_i\sigma_js_is_j \frac{\partial^2 w_z}{\partial s_i\partial s_j} + \sum_{i\in I_z}\rho_i^T\sigma\pi\sigma_i xs_i \frac{\partial^2 w_z}{\partial x\partial s_i} \nonumber \\
&& {} + \sum_{i\in I_z} h^i_z(s) \left(w_{z^i}\left(t,x\left(1-\sum_{j=1}^N L_{ji}\pi^j\right),s^i\right)-  w_z\right), \label{e2.4}
\end{eqnarray}
where $s^i:=\left(s_1(1-L_{1i}),\ldots,s_j(1-L_{ji}),\ldots,s_N(1-L_{Ni})\right)^T$ for $j\in I_{z^i}$ and $\rho_i:=\left(\rho_{i1},\ldots,\rho_{ij},\ldots,\rho_{iN}\right)^T$ for $j\in I_{z}$. Note that the dimension of $s^i$ is $N_{z^i}$ which is equal to $N_z-1$ as we have removed the $i$th defaulted stock.

We next give a verification theorem for the value function.
\begin{theorem} \label{pre-default verification theorem}
Assume  that the function tuple $w:=(w_z)_{z\in I}$ where $w_z\in C\big([0,T]\times (0,\infty)^{N_z+1}\big)  \cap C^{1,2,\ldots,2}\big([0,T)\times (0,\infty)^{N_z+1}\big)$ for any $z\in I$ solves (\ref{General pre-default HJB equation}) with the terminal condition $w_z(T,x,s)=U(x)$,
that $w_z$  satisfies a growth condition
$ |w_z(t,x,s)| \leq K\left(1+x^\gamma\right)$
for $0<\gamma<1$, that the maximum of the Hamiltonian in (\ref{General pre-default HJB equation}) is achieved at 
 $\widehat{\pi} (t, x, s, z)$ in $A$,  and that SDE (\ref{wealth process}) admits a unique strong solution  $X_t^{\widehat{\pi}}$ with control $\widehat{\pi}$.
Then $w_z$ coincides with the value function  $v_z$ and $\widehat{\pi}$ is the optimal control process.
\end{theorem}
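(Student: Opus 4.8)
The plan is to run the classical verification argument, adapted to the present jump-diffusion in which the default indicator $\mathbb{H}$ is a finite-state pure-jump driver. Fix $z\in I$ and $(t,x,s)\in[0,T]\times(0,\infty)^{N_z+1}$, take an arbitrary $\pi\in\mathcal{A}$, let $X=X^\pi$ be the wealth process solving \eqref{wealth process} with $X_t=x$, $S_t=s$, $\mathbb{H}_t=z$ (so $X_v>0$ and $S^i_v>0$ for all $v$, and the process stays in the open domain of $w$), and set $Y_u:=w_{\mathbb{H}_u}\bigl(u,X_u,S_u^{\mathbb{H}_u}\bigr)$ for $u\in[t,T)$, where $S_u^{\mathbb{H}_u}$ is the surviving-stock vector. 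Since each $w_z\in C^{1,2,\ldots,2}$ on the open domain and $\mathbb{H}$ takes finitely many values, $Y$ is a semimartingale, and the first step is its Itô/Dynkin decomposition
\[
Y_u=Y_t+\int_t^u\mathcal{L}^{\pi_v}w_{\mathbb{H}_v}\bigl(v,X_v,S_v^{\mathbb{H}_v}\bigr)\,dv+M^W_u+M^{\mathbb{H}}_u ,
\]
in which the diffusive part of the formula reproduces the first two lines of \eqref{e2.4}, while the jump part — a jump of $\mathbb{H}$ from $z$ to $z^i$ sends $X_{v-}$ to $X_{v-}\bigl(1-\sum_j L_{ji}\pi^j_{v-}\bigr)$, shrinks the stock vector $S^z_{v-}$ to $S^{z^i}_{v-}$ and the state $z$ to $z^i$ — reproduces, after compensation by the intensities $h^i_z$, the last line of \eqref{e2.4}. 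Here $M^W$ is the integral against $dW$, a $\mathcal{G}$-Brownian motion by the $H$-hypothesis, and $M^{\mathbb{H}}$ the integral against the compensated jump martingales $dH^i_v-(1-H^i_{v-})h^i_v\,dv$, so both are $\mathcal{G}$-local martingales on $[t,T)$.

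Next I would establish the upper bound $v_z\le w_z$. Choose a localizing sequence $\tau_n\uparrow T$ reducing $M^W$ and $M^{\mathbb{H}}$ to true martingales. Since $w$ solves \eqref{General pre-default HJB equation}, $\mathcal{L}^{\pi_v}w_{\mathbb{H}_v}\le\sup_{\pi'\in A}\mathcal{L}^{\pi'}w_{\mathbb{H}_v}=0$ pointwise, so taking expectations at $u\wedge\tau_n$ gives $\mathbb{E}[Y_{u\wedge\tau_n}]\le Y_t$. To pass $n\to\infty$ and $u\uparrow T$, I would use the assumed growth bound, which gives $|Y_{u\wedge\tau_n}|\le K\bigl(1+\sup_{v\in[t,T]}X_v^\gamma\bigr)$ with no dependence on $s$; the dominating variable is integrable because, writing $X_v^\gamma=x^\gamma N_v M_v$ as in Remark~\ref{Proposition of wealth process} with $|N_v|\le K$, the boundedness of $A$ and of $\sigma$ makes $M^p$ of bounded expectation for some $p>1$, so Doob's maximal inequality yields $\mathbb{E}[\sup_{v\le T}X_v^\gamma]<\infty$. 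Uniform integrability of $(Y_{u\wedge\tau_n})_n$ together with continuity of $w$ up to time $T$ then gives $\mathbb{E}[Y_T]=\lim_n\mathbb{E}[Y_{T\wedge\tau_n}]\le Y_t$, i.e. $\mathbb{E}[U(X_T^\pi)]\le w_z(t,x,s)$ by the terminal condition $w_z(T,\cdot,\cdot)=U$; a supremum over $\pi\in\mathcal{A}$ then gives $v_z(t,x,s)\le w_z(t,x,s)$.

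For the reverse inequality I would repeat the computation with the feedback control $\widehat\pi_v:=\widehat\pi(v,X^{\widehat\pi}_v,S^{\widehat\pi}_v,\mathbb{H}_v)$. Because the Hamiltonian maximum is attained in $A$, a measurable selector $\widehat\pi(\cdot)$ exists, so this process is $\mathcal{G}$-progressively measurable and $A$-valued, hence $\widehat\pi\in\mathcal{A}$; by hypothesis \eqref{wealth process} admits a unique strong solution under it. Along that path $\mathcal{L}^{\widehat\pi_v}w_{\mathbb{H}_v}=\sup_{\pi'\in A}\mathcal{L}^{\pi'}w_{\mathbb{H}_v}=0$ identically, so $\mathbb{E}[Y_{T\wedge\tau_n}]=Y_t$ for every $n$, and the same uniform-integrability argument upgrades this to $\mathbb{E}[U(X_T^{\widehat\pi})]=w_z(t,x,s)$. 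Hence $v_z(t,x,s)\ge\mathbb{E}[U(X_T^{\widehat\pi})]=w_z(t,x,s)$, which combined with the upper bound yields $v_z=w_z$ and shows that $\widehat\pi$ is an optimal control.

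The main obstacle is the localization/uniform-integrability step: the hypotheses bound $w_z$ itself but say nothing about its derivatives, so $M^W$ and $M^{\mathbb{H}}$ are only a priori local martingales, and the whole argument hinges on combining the a priori moment estimate \eqref{strong solution property} with Doob's maximal inequality to justify interchanging the limit with the expectation. The remaining points — producing the measurable selector $\widehat\pi$ and checking it defines an admissible control in $\mathcal{A}$, and carefully bookkeeping the dimension drop $S^z\mapsto S^{z^i}$ of the state variable in the jump part of the Itô formula for each possible default — are routine under the paper's standing assumptions.
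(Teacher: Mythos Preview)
Your proposal is correct and follows essentially the same classical verification argument as the paper: apply It\^o's formula to $w_{\mathbb{H}_u}(u,X_u,S_u)$, use the HJB inequality $\mathcal{L}^\pi w_z\le 0$ (with equality at $\widehat\pi$), localize to make the stochastic integrals true martingales, and pass to the limit via uniform integrability driven by the growth bound $|w_z|\le K(1+x^\gamma)$ together with the moment estimate \eqref{strong solution property}. The only cosmetic difference is in the uniform-integrability step: the paper localizes with the exit times $\widetilde\tau_n=\inf\{u\ge t:|X_u-x|+\sum_i|S^i_u-s_i|\ge n\}\wedge(T-1/n)$ and shows $\sup_n\mathbb{E}\bigl[|Y_{\widetilde\tau_n}|^\alpha\bigr]<\infty$ for some $\alpha>1$ directly from \eqref{strong solution property}, whereas you dominate by $K(1+\sup_{v\le T}X_v^\gamma)$ and invoke Doob's maximal inequality; both routes are standard and both work here.
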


\begin{remark}
For log utility $U(x)=\ln x$, the assumption $U(0)>-\infty$ is not satisfied. However, one may postulate that  the value function has a form 
$w_z(t,x,s)=\ln x + f_z(t,s)$, where $f$ is a solution of a linear PDE, see (\ref{Log utility HJB equation}). If we assume $f_z\in C\left([0,T]\times (0,\infty)^{N_z+1}\right)\cap C^{1,2,...,2}\left([0,T)\times (0,\infty)^{N_z+1}\right)$ and is bounded, then one can show that
$w_z$ is indeed the value function with 
the same proof as that of Theorem~\ref{pre-default verification theorem}   except one change: instead of using $ |w_z(t,x,s)| \leq K\left(1+x^\gamma\right)$, which does not hold for log utility, one uses $ |w_z(t,x,s)| \leq K\left(1+|\ln x|\right)$. Since 
$$ \ln X_u =\ln x +  \int_t^u\Big(r+\pi_{\bar u}^T D_{\bar u} \theta-\frac{1}{2}\pi_{\bar u}^TD_{\bar u}\Sigma D_{\bar u} \pi_{\bar u} \Big) d{\bar u} + \int_t^u \pi_{\bar u}^TD_{\bar u}\sigma dW_{\bar u}  + \sum_{j=1}^N\int_t^u \ln\left(1-\sum_{i=1}^N L_{ij}\pi_{\bar{u}-}^i\right) dH^j_{\bar{u}}
$$
for $u\in [t,T]$, we have  $\mathbb{E}\left[\left|\ln {X_u}\right|^2\right]
\leq K(1+(\ln x)^2)$, which provides the required uniform integrability property  in the proof. 
\end{remark}

The verification theorem assumes  the existence of a classical solution of the HJB equation (\ref{General pre-default HJB equation}), which may not be true for the value function $v_z$. Next we show that the value functions $\{v_z\}_{z\in I}$ is the unique viscosity solution to the PDE system characterized by (\ref{General pre-default HJB equation}) based on the following definition. 

To facilitate discussions of viscosity solution, we define $F$ function by
\[
F_z\left(t,x,s,w,\nabla_{(t,x,s)}w_z,\nabla^2_{(x,s)}w_z\right) = - \sup_{\pi\in A} \mathcal{L}^{\pi}w_z(t,x,s),
\]
where $\nabla_{(t,x,s)}w_z\in \mathbb{R}^{N_z+2}$ is the gradient vector of $w_z$ with respect to $(t,x,s)$, and $\nabla^2_{(x,s)}w_z\in \mathbb{R}^{(N_z+1)\times (N_z+1)}$ is the Hessian matrix of $w_z$ with respect to $(x,s)$. $w_z$ and its derivatives are evaluated at $(t,x,s)$. The HJB equation (\ref{General pre-default HJB equation}) is the same as
\[
F_z\left(t,x,s,v,\nabla_{(t,x,s)}v_z,\nabla^2_{(x,s)}v_z\right) = 0
\]
for $\forall z\in I$.

\begin{definition} \label{viscosity}
(i) $w:=(w_z)_{z\in I}$ is a  viscosity subsolution  of the PDE system (\ref{General pre-default HJB equation}) on $[0,T)\times (0,\infty)^{N+1}$ if
\[
F_{\bar{z}}\left(\bar{t},\bar{x},\bar{s},\varphi,\nabla_{(t,x,s)}\varphi_{\bar{z}},\nabla^2_{(x,s)}\varphi_{\bar{z}}\right ) \leq  0
\]
for all $\bar z\in I$, $(\bar{t},\bar{x},\bar{s})\in [0,T)\times (0,\infty)^{N_{\bar z}+1}$ and testing functions $\varphi:=(\varphi_z)_{z\in I}\in C^{1,2,...,2}\left([0,T)\times (0,\infty)^{N_z+1}\right)$ such that $(w_{\bar{z}})^{*}(\bar{t},\bar{x},\bar{s})=\varphi_{\bar{z}}(\bar{t},\bar{x},\bar{s})$ and $(w_{z})^{*}\leq \varphi_{z}$ for $\forall z\in I$ on $[0,T)\times (0,\infty)^{N_z+1}$, where
$(w_z)^{*}$ is the upper-semicontinuous envelope  of $w_z$, defined by $(w_z)^*(\bar t,\bar x,\bar s)=\limsup_{(t,x,s)\rightarrow (\bar t,\bar x,\bar s)}w_z(t,x,s)$.

(ii) $w:=(w_z)_{z\in\{0,1\}^N}$ is a viscosity supersolution of the PDE system (\ref{General pre-default HJB equation}) on $[0,T)\times (0,\infty)^{N+1}$ if
\[
F_{\bar{z}}\left(\bar{t},\bar{x},\bar{s},\varphi,\nabla_{(t,x,s)}\varphi_{\bar{z}},\nabla^2_{(x,s)}\varphi_{\bar{z}}\right ) \geq 0
\]
for all $\bar z\in I$, $(\bar{t},\bar{x},\bar{s})\in [0,T)\times (0,\infty)^{N_{\bar z}+1}$ and testing functions $\varphi:=(\varphi_z)_{z\in I}\in C^{1,2,...,2}\left([0,T)\times (0,\infty)^{N_z+1}\right)$ such that $(w_{\bar{z}})_{*}(\bar{t},\bar{x},\bar{s})=\varphi_{\bar{z}}(\bar{t},\bar{x},\bar{s})$ and $(w_{z})_{*}\geq \varphi_{z}$ for $\forall z\in I$ on $[0,T)\times (0,\infty)^{N_z+1}$, 
, where
$(w_z)_{*}$ is the lower-semicontinuous envelope  of $w_z$, defined by $(w_z)_*(\bar t,\bar x,\bar s)=\liminf_{(t,x,s)\rightarrow (\bar t,\bar x,\bar s)}w_z(t,x,s)$.

(iii) We say that $w$ is a viscosity solution of the PDE system (\ref{General pre-default HJB equation}) on $[0,T)\times (0,\infty)^{N+1}$ if it is both a viscosity subsolution and supersolution of (\ref{General pre-default HJB equation}).
\end{definition}

Based on the above definition, we have the following viscosity solution property for the value function.
\begin{theorem}\label{vis}
The value function $v=(v_z)_{z\in I}$ is 
a viscosity solution of the PDE system (\ref{General pre-default HJB equation}) on $[0,T)\times (0,\infty)^{N+1}$, satisfying the growth condition $|v_z(t,x,s)|\leq K(1+x^\gamma)$ for some constant $0<\gamma<1$. 
\end{theorem}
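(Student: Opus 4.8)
The plan is to derive Theorem \ref{vis} from the dynamic programming principle (DPP) by the usual ``perturbed test function'' argument, the only non-classical feature being that the generator \eqref{e2.4} couples the components of the PDE system through the default-jump term, so one must work with the tuple $v=(v_z)_{z\in I}$ throughout. The growth bound $|v_z(t,x,s)|\le K(1+x^\gamma)$ needs no new work: it is the Remark immediately after the definition of $v_z$, obtained by combining Assumption \ref{utility assumption} with the moment estimate \eqref{strong solution property} of Remark \ref{Proposition of wealth process}. For the viscosity property I would first record the DPP in the form: for every $z\in I$, $(t,x,s)\in[0,T)\times(0,\infty)^{N_z+1}$ and every $(\mathcal G_u)$-stopping time $\theta$ valued in $[t,T]$,
\[
v_z(t,x,s)=\sup_{\pi\in\mathcal A}\mathbb E\big[v_{\mathbb H_\theta}(\theta,X^\pi_\theta,S_\theta)\mid X_t=x,\,S_t=s,\,\mathbb H_t=z\big],
\]
where if a default of name $i$ has occurred by time $\theta$ the arguments $(\mathbb H,X,S)$ have already jumped to $(z^i,\,x(1-\sum_j L_{ji}\pi^j),\,s^i)$; since $1-\sum_i L_{ij}\pi^i\ge\epsilon_A>0$ and $L_{ji}<1$ this post-jump point lies in the interior $(0,\infty)^{N_{z^i}+1}$, so the relevant component of the tuple is always evaluated where it is defined. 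To avoid measurable-selection technicalities I would use the weak form of the DPP: the ``$\ge$'' inequality, valid with semicontinuous envelopes, is all the supersolution proof uses, and the ``$\le$'' inequality with $\varepsilon$-optimal admissible controls is all the subsolution proof uses.

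For the supersolution property I would fix $\bar z\in I$, a point $(\bar t,\bar x,\bar s)\in[0,T)\times(0,\infty)^{N_{\bar z}+1}$ and a test tuple $\varphi$ with $(v_{\bar z})_*(\bar t,\bar x,\bar s)=\varphi_{\bar z}(\bar t,\bar x,\bar s)$ and $(v_z)_*\ge\varphi_z$ for all $z$, and argue by contradiction. If $F_{\bar z}<0$ at the point, then $\mathcal L^{\pi_0}\varphi_{\bar z}(\bar t,\bar x,\bar s)>2\eta>0$ for some fixed $\pi_0\in A$, hence $\mathcal L^{\pi_0}\varphi_{\bar z}\ge\eta$ on a fixed ball $B\subset[0,T)\times(0,\infty)^{N_{\bar z}+1}$ about the point, by continuity of the coefficients and of the derivatives of $\varphi$. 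Choosing $(t_n,x_n,s_n)\to(\bar t,\bar x,\bar s)$ with $v_{\bar z}(t_n,x_n,s_n)\to\varphi_{\bar z}(\bar t,\bar x,\bar s)$, running the constant control $\pi_0$ from $(t_n,x_n,s_n)$, and stopping at $\theta_n:=\inf\{u\ge t_n:(u,X_u,S_u)\notin B\}\wedge(t_n+\delta)\wedge\tau^{(1)}$ (with $\tau^{(1)}$ the first default time strictly after $t_n$ and $\delta$ a small fixed constant), I would chain: the DPP inequality $v_{\bar z}(t_n,x_n,s_n)\ge\mathbb E[v_{\mathbb H_{\theta_n}}(\theta_n,X_{\theta_n},S_{\theta_n})]$; then $v_{\mathbb H_{\theta_n}}\ge(v_{\mathbb H_{\theta_n}})_*\ge\varphi_{\mathbb H_{\theta_n}}$; then Dynkin's formula for the full generator $\mathcal L^{\pi_0}$ applied to the tuple $\varphi$ along $(u,X_u,S_u)$. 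On $[t_n,\theta_n)$ one has $\mathbb H_u=\bar z$, so the $du$-integrand equals $\mathcal L^{\pi_0}\varphi_{\bar z}(u,X_u,S_u)\ge\eta$, while the Brownian and compensated-jump parts of Dynkin's formula have zero expectation (they are genuine martingales because $h$ is bounded, $\nabla\varphi$ is bounded on $\bar B$, and the multiplicative jump of $X$ lies in $[\epsilon_A,\,K]$). Since the intensities are bounded and the diffusion leaves $B$ slowly, $\mathbb E[\theta_n-t_n]\ge c\,\delta$ uniformly in large $n$, so we get $v_{\bar z}(t_n,x_n,s_n)\ge\varphi_{\bar z}(t_n,x_n,s_n)+\eta c\,\delta$, contradicting $v_{\bar z}(t_n,x_n,s_n)-\varphi_{\bar z}(t_n,x_n,s_n)\to0$.

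The subsolution property is symmetric: now $\varphi$ touches $v^*$ from above, $(v_z)^*\le\varphi_z$, and assuming $F_{\bar z}>0$ at the point gives $\mathcal L^\pi\varphi_{\bar z}(\bar t,\bar x,\bar s)<-2\eta$ for \emph{every} $\pi\in A$, which by compactness of $A$ and joint continuity persists as $\mathcal L^\pi\varphi_{\bar z}\le-\eta$ on $A\times B$. One replaces the constant control by an $\tfrac1n$-optimal control $\pi^n$ from $(t_n,x_n,s_n)$ (available by the weak DPP), uses $v_{\bar z}(t_n,x_n,s_n)\le\mathbb E[v_{\mathbb H_{\theta_n}}(\ldots)]+\tfrac1n\le\mathbb E[\varphi_{\mathbb H_{\theta_n}}(\ldots)]+\tfrac1n$, and the same Dynkin computation (valid for any admissible $\pi^n$) yields $v_{\bar z}(t_n,x_n,s_n)\le\varphi_{\bar z}(t_n,x_n,s_n)-\eta c\,\delta+\tfrac1n$, again a contradiction for $n$ large. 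Combining the two halves gives the claim.

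The diffusion part of this scheme is classical, so I expect the genuine obstacles to be exactly the two features the contagion model adds. First, because \eqref{e2.4} couples $\varphi_{\bar z}$ to the other components $\varphi_{z^i}$ through the jump term, one cannot argue componentwise: Dynkin's formula must be applied to the whole tuple along the joint process $(t,X_t,S_t,\mathbb H_t)$, and one needs the no-simultaneous-default convention together with the boundedness of the intensities (Assumption \ref{h assumption}) so that freezing the state on $[t_n,\theta_n)$ by stopping at $\tau^{(1)}$ still leaves a non-degenerate time increment $\mathbb E[\theta_n-t_n]\gtrsim\delta$. Second, making the Dynkin step rigorous requires that the compensated jump integral is a true (not merely local) martingale and that all remainder terms are uniformly integrable, which rests on the moment bound \eqref{strong solution property}, the boundedness of $A$, and the bounds $L_{ij}<1$, $\epsilon_A>0$ that keep the post-default wealth in a fixed positive range. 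The careful bookkeeping of the jump term and the handling of the $\varepsilon$-optimal controls through the weak DPP is the part I expect to require the most attention.
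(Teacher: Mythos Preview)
Your proposal is correct and follows essentially the same route as the paper: both use the weak DPP of Bouchard--Touzi, apply It\^o/Dynkin to the full tuple $\varphi=(\varphi_z)_{z\in I}$ along $(u,X_u,S_u,\mathbb H_u)$, localize by stopping at the exit from a ball, a small time increment, and the first default time, and exploit boundedness of $A$ and of the intensities to make the stochastic-integral terms genuine martingales. The only cosmetic differences are that the paper proves the supersolution half directly (showing $-\mathcal L^\pi\varphi_{\bar z}(\bar t,\bar x,\bar s)\ge0$ for each fixed $\pi$ via the mean-value theorem) rather than by contradiction, uses a shrinking time step $h_m\to0$ with $\gamma_m/h_m\to0$ instead of your fixed $\delta$, and isolates the continuity of $F_{\bar z}$ as a separate lemma before the subsolution argument.
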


To prove the uniqueness of the viscosity solution, we need to introduce a structure condition on the model.
\begin{assumption} \label{structure}
The following inequality holds:
 $$J_z(\pi)\leq \frac{K}{\epsilon}\left(|x_1-x_2|^2 + \sum_{i\in I_{z}} |s_{1i}-s_{2i}|^2 \right), \
 \forall \pi\in A, z\in I, $$  
 where
\begin{eqnarray}
J_z(\pi) &:= & \frac{1}{2}\pi^T\Sigma\pi \left(x_1^2 Q_{1,1} - x_2^2 Q_{1,1}^{\prime}\right) + \frac{1}{2}\sum_{i\in I_{z}} \sigma_i^2\left(s_{1i}^2 Q_{k_i,k_i} - s_{2i}^2 Q_{k_i,k_i}^{\prime}\right) \nonumber \\
&& {} + \sum_{i,j\in I_{z},i<j} \rho_{ij}\sigma_i\sigma_j\left(s_{1i}s_{1j}Q_{k_i,k_j} - s_{2i}s_{2j} Q_{k_i,k_j}^{\prime}\right)  + \sum_{i\in I_{z}}\rho_i^T\sigma\pi\sigma_i \left(x_1s_{1i} Q_{1,k_i} - x_2s_{2i} Q_{1,k_i}^{\prime}\right) \nonumber 
\end{eqnarray} 
and matrices $Q$ and $Q^{\prime}$ satisfy
\begin{equation*}
 \begin{pmatrix}
  Q & 0 \\
  0 & -Q^{\prime}
 \end{pmatrix}
\leq \frac{3}{\epsilon}
 \begin{pmatrix}
  I_{N_{z}+1} & -I_{N_{z}+1} \\
  -I_{N_{z}+1} & I_{N_{z}+1}
 \end{pmatrix}.
\end{equation*} 
\end{assumption}

\begin{remark}
The dimension of matrices $Q$ and $Q^{\prime}$ in Theorem \ref{vis_unique}  is $N_z+1$. We use $k_i$ to represent the right index of matrices which corresponds to $s_i$ where $i\in I_z$. The introduction of $k_i$ is to resolve the gap between $s$ index and matrix index. We use a simple example to illustrate the definition of $k_i$. For example, $I_z:=\{3,4,6\}$. In this case, there are three surviving stocks in the market, namely $s_3,s_4,s_6$. The dimension of matrices $Q$ and $Q^{\prime}$ is 4 (including 3 surviving stocks and the wealth process $x$). Then $k_3=2,k_4=3,k_6=4$.
\end{remark}

\begin{remark}
For the simplest case where there are only two defaultable stocks in the market, e.g. Example \ref{example1} and Example \ref{example2}, Assumption \ref{structure} holds for $\forall \rho\in(-1,1)$, as $J_z(\pi)$ can be written as
\[
\begin{split}
J_z(\pi) &= \frac{1}{2} \xi^T \begin{pmatrix}
  Q & 0 \\
  0 & -Q^{\prime}
\end{pmatrix}
\xi
 + \frac{1}{2}(1-\rho^2)(\sigma^P)^2 \big(\pi^Px_1, 0, p_1,  \pi^Px_2, 0, p_2 \big) 
\begin{pmatrix}
  Q & 0 \\
  0 & -Q^{\prime}
\end{pmatrix}
 \big(\pi^Px_1, 0, p_1,  \pi^Px_2, 0, p_2 \big)^T, \\
\end{split}
\]
where $\xi=\big(m^T \pi x_1, \sigma^Ss_1, \rho \sigma^Pp_1,   m^T \pi x_2,  \sigma^Ss_2, \rho\sigma^Pp_2 \big)^T$, $m=(\sigma^S, \rho  \sigma^P)^T$, $n=(\rho\sigma^S, \sigma^P)^T$ and $\pi:=(\pi^S,\pi^P)^T$. Using the matrix inequality and  simple algebraic calculation, one can show that
$$
J_z(\pi) \leq \frac{3}{2\epsilon}\big( (m^T\pi)^2 |x_1-x_2|^2 + (\sigma^S)^2 |s_1-s_2|^2 + \rho^2(\sigma^P)^2 |p_1-p_2|^2 \big) + \frac{3}{2\epsilon}(1-\rho^2)(\sigma^P)^2\left((\pi^P)^2 |x_1-x_2|^2 + |p_1-p_2|^2\right).
$$
By the boundedness of control set $A$, Assumption \ref{structure}  holds for all $\rho\in (-1,1)$. 
\end{remark}

The next result states the uniqueness of the viscosity solution. 

\begin{theorem}\label{vis_unique} Let Assumption \ref{structure}  hold. Assume 
the value function $v=(v_z)_{z\in I_z}$, satisfies the terminal condition $v_z(T-,x,s)=U(x)$ and the boundary conditions $(v_z)^*(t,x,s)= (v_z)_*(t,x,s)$ for $(x,s)$ on the boundary of $[0,\infty)^{N_z+1}$. Then 
$v$ is the unique viscosity solution of the PDE system (\ref{General pre-default HJB equation}) on $[0,T)\times (0,\infty)^{N+1}$.
\end{theorem}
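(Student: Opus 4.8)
The plan is to derive Theorem \ref{vis_unique} from a comparison principle together with Theorem \ref{vis}. It suffices to prove the following statement: if $u=(u_z)_{z\in I}$ is a viscosity subsolution and $w=(w_z)_{z\in I}$ a viscosity supersolution of (\ref{General pre-default HJB equation}) on $[0,T)\times(0,\infty)^{N+1}$, both satisfying $|u_z|,|w_z|\le K(1+x^\gamma)$ with $0<\gamma<1$, the terminal inequalities $(u_z)^*(T-,x,s)\le U(x)\le (w_z)_*(T-,x,s)$, and matching upper/lower envelopes on the boundary of $[0,\infty)^{N_z+1}$, then $u_z\le w_z$ on $[0,T)\times(0,\infty)^{N_z+1}$ for every $z\in I$. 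Granting this, Theorem \ref{vis} shows that $v$ is a viscosity solution meeting these hypotheses, so applying the comparison result with $(u,w)=(v,\tilde v)$ and then with $(u,w)=(\tilde v,v)$ yields $v=\tilde v$ for any competing viscosity solution $\tilde v$ in the same class. Throughout one may replace $u_z$ by its upper-semicontinuous envelope and $w_z$ by its lower-semicontinuous one, so that $u_z$ is USC and $w_z$ is LSC.

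Two preliminary reductions would be carried out. First, to exploit the terminal data, set $w^\lambda_z:=w_z+\lambda(T-t)$ for $\lambda>0$; since the zeroth-order coefficient of the jump part in (\ref{e2.4}) equals $-\sum_{i\in I_z}h^i_z(s)$, a direct computation gives $\mathcal{L}^\pi w^\lambda_z=\mathcal{L}^\pi w_z-\lambda$, so $w^\lambda$ is a strict supersolution, $-\sup_{\pi\in A}\mathcal{L}^\pi w^\lambda_z\ge\lambda$, and it suffices to prove $u_z\le w^\lambda_z$ for every $\lambda>0$ and then let $\lambda\downarrow0$. Second, to compactify the unbounded state space, one subtracts from the doubling functional a penalty $\delta'\beta(x,s)$ with $\beta$ coercive on $(0,\infty)^{N_z+1}$ (for instance $\beta(x,s)=x^{\gamma'}+x^{-1}+\sum_i(s_i^{\gamma'}+s_i^{-1})$ with $\gamma<\gamma'<1$) chosen so that $\mathcal{L}^\pi\beta\le C\beta$; the growth bound with $\gamma<1$, the boundedness of the intensities (Assumption \ref{h assumption}), the boundedness of $A$, and the boundary/terminal conditions then ensure that, for each $\delta'>0$, the supremum considered below is attained in a compact subset of $[0,T)\times(0,\infty)^{N_z+1}$ bounded away from $\{t=T\}$ and from the spatial boundary, with value converging to the unpenalized supremum as $\delta'\downarrow0$.

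Now assume for contradiction that $M:=\sup_{z\in I}\sup_{[0,T)\times(0,\infty)^{N_z+1}}(u_z-w^\lambda_z)>0$ and fix $\bar z\in I$ attaining it. For small $\epsilon>0$ maximise
\[
\Phi_\epsilon(t,x,s,y,q):=u_{\bar z}(t,x,s)-w^\lambda_{\bar z}(t,y,q)-\frac{1}{2\epsilon}\big(|x-y|^2+|s-q|^2\big)-\delta'\beta(x,s)-\delta'\beta(y,q)
\]
over $[0,T)\times(0,\infty)^{N_{\bar z}+1}\times(0,\infty)^{N_{\bar z}+1}$; note that the time variable is not doubled, as (\ref{General pre-default HJB equation}) is first order in $t$. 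At a maximiser $(t_\epsilon,x_\epsilon,s_\epsilon,y_\epsilon,q_\epsilon)$ the standard penalisation lemma gives $\tfrac{1}{\epsilon}(|x_\epsilon-y_\epsilon|^2+|s_\epsilon-q_\epsilon|^2)\to0$, $x_\epsilon-y_\epsilon\to0$, $s_\epsilon-q_\epsilon\to0$ and $\Phi_\epsilon\to M_{\delta'}$ (the $\delta'$-penalised supremum, $M_{\delta'}\to M$ as $\delta'\downarrow0$) as $\epsilon\downarrow0$. The parabolic Crandall--Ishii lemma then provides a common $a\in\mathbb{R}$ and symmetric $(N_{\bar z}+1)\times(N_{\bar z}+1)$ matrices $Q,Q'$ with $(a,p_u,Q)\in\overline{\mathcal{P}}^{2,+}u_{\bar z}(t_\epsilon,x_\epsilon,s_\epsilon)$ and $(a,p_w,Q')\in\overline{\mathcal{P}}^{2,-}w^\lambda_{\bar z}(t_\epsilon,y_\epsilon,q_\epsilon)$, where $p_u=p_w=\tfrac{1}{\epsilon}(x_\epsilon-y_\epsilon,s_\epsilon-q_\epsilon)$ up to $O(\delta')$ corrections, together with precisely the matrix inequality
\[
\begin{pmatrix} Q & 0 \\ 0 & -Q' \end{pmatrix}\le\frac{3}{\epsilon}\begin{pmatrix} I_{N_{\bar z}+1} & -I_{N_{\bar z}+1} \\ -I_{N_{\bar z}+1} & I_{N_{\bar z}+1} \end{pmatrix}
\]
assumed in Assumption \ref{structure}.

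Finally, write the subsolution inequality for $u_{\bar z}$ at $(t_\epsilon,x_\epsilon,s_\epsilon)$ with an $\epsilon_0$-optimal control $\widehat\pi$ and the strict supersolution inequality for $w^\lambda_{\bar z}$ at $(t_\epsilon,y_\epsilon,q_\epsilon)$ with the same $\widehat\pi$, and subtract; the common $a$ cancels. The difference of the first-order terms is bounded by a constant (depending on the Lipschitz constant of the drift in $(x,s)$ and on the bound on $A$) times $\tfrac{1}{\epsilon}(|x_\epsilon-y_\epsilon|^2+|s_\epsilon-q_\epsilon|^2)\to0$; the difference of the second-order terms assembles exactly into $J_{\bar z}(\widehat\pi)$, which by Assumption \ref{structure} and the matrix inequality above is $\le\tfrac{K}{\epsilon}(|x_\epsilon-y_\epsilon|^2+|s_\epsilon-q_\epsilon|^2)\to0$; the smooth penalty contributes $O(\delta')$. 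The zeroth-order plus nonlocal (jump) terms are where care is needed: since a default acts only on the spatial variables, the shifted and unshifted arguments both lie at time $t_\epsilon$, so the $\lambda(T-t)$-perturbation cancels between them, and using the semicontinuity of $u_{\bar z^i},w_{\bar z^i}$, the continuity and boundedness of the $h^i_z$, the convergence of the shifted points to a common limit, and the global definition of $M$, one obtains that $\limsup_{\epsilon\downarrow0}$ of the difference of these terms is $\le\bar H(M-M_{\delta'})$ for a uniform bound $\bar H$ on $\sum_i h^i_{\bar z}$. Passing to $\epsilon\downarrow0$ and then $\delta'\downarrow0$ in the subtracted inequality gives $0\ge\lambda-\epsilon_0$, a contradiction once $\epsilon_0<\lambda$; hence $M\le0$, i.e.\ $u_z\le w^\lambda_z$, and $\lambda\downarrow0$ completes the comparison, whence the theorem. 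I expect the main obstacle to be the second reduction: the penalisation must simultaneously control the unbounded domain, the coefficients of (\ref{e2.4}) that grow in $s$, and the $x^\gamma$-growth of the solutions, while keeping the maximiser in a region where the Crandall--Ishii lemma and the above estimates apply and without spoiling the sub/supersolution inequalities; the nonlocal coupling of the components $\{v_z\}$, by contrast, is handled routinely once one works with the global maximum over $z$ and uses that a default only moves the spatial variables.
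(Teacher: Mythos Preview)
Your proposal is correct and reaches the same comparison principle as the paper, but through a different set of technical choices. The paper (Proposition~\ref{Comparison principle}) proceeds as follows: it first multiplies sub- and supersolution by $e^{\Gamma t}$, which inserts an extra $-\Gamma w_z$ into the operator and, after subtraction, produces the clean inequality $\Gamma M+\sum_i h^i_{\bar z}(\hat s)M\le \sum_i h^i_{\bar z}(\hat s)M$, i.e.\ $\Gamma M\le0$; it handles the unbounded domain by adding the smooth penalty $\tfrac{1}{n}e^{-\lambda t}(1+x^{2\gamma}+\sum_i s_i^{2\gamma})$ to the supersolution (so the modified $\widetilde V^n$ is itself a supersolution), and it relies on the boundary inequalities $W_z\le V_z$ rather than a singular barrier to keep the maximiser in the interior; finally, it doubles the full variable $X=(t,x,s)$ and lets the equal time-derivatives cancel.

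Your route---a $\lambda(T-t)$ perturbation to force a strict supersolution, a coercive barrier $\delta'\beta$ with both growing and singular parts subtracted directly from the doubling functional, and the parabolic Crandall--Ishii lemma with a common time---is equally standard and works here. The nonlocal estimate you sketch, $\limsup_{\epsilon\downarrow 0}(\text{jump difference})\le \bar H(M-M_{\delta'})$, is exactly what the global definition of $M$ over all $z\in I$ delivers, and matches the paper's treatment of $J_2$. Two minor points to tighten: first, because your barrier sits inside $\Phi_\epsilon$, the Crandall--Ishii matrix inequality picks up an additive $\delta'\nabla^2\beta$ block, so strictly speaking $J_{\bar z}(\widehat\pi)$ satisfies Assumption~\ref{structure} only up to an $O(\delta')$ remainder---harmless, but worth stating; second, the paper avoids the singular terms $x^{-1},s_i^{-1}$ in the barrier by invoking the boundary condition $(v_z)^*=(v_z)_*$ directly, which is slightly more economical. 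Either packaging buys the same conclusion.
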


\begin{remark}
The condition  $(v_z)^*(t,x,s)=(v_z)_*(t,x,s)$ for $(x,s)$ on the boundary  is equivalent to the existence of the limit of the value function $v_z$ at boundary points. This condition is needed as the domain of  $(x,s)$ variables is $(0,\infty)^{N_z+1}$, not $(-\infty,\infty)^{N_z+1}$, in which case one may impose some polynomial growth conditions on~$v_z$, see Pham (2009), Remark 4.4.8, for further discussions on this point.
\end{remark}

\section{ Numerical Tests} 
In this section, we perform some statistical and robust  tests for  log and power utilities.  We assume that there are two defaultable stocks and one risk-free bank account in the market (Example \ref{example2}).

\subsection{Optimal  strategies for log utility} \label{section_log}
For  $U(x) = \ln x$, the post-default case $z=(1,1)$ is investing into the risk-free bank account, thus $\pi^S=\pi^P=0$ and $v_{(1,1)}(t,x)=\ln x + r(T-t)$. We conjecture that the pre-default value function $v_{(0,0)}(t,x,s,p)$ takes the form
\begin{equation} \label{value function of log utility}
v_{(0,0)}(t,x,s,p) = \ln x + f_{(0,0)}(t,s,p),
\end{equation}
and the value function $v_{(1,0)}(t,x,p), v_{(0,1)}(t,x,s)$ respectively take the forms
\begin{equation} \label{value function of log utility for other z}
v_{(1,0)}(t,x,p) = \ln x + f_{(1,0)}(t,p), \quad v_{(0,1)}(t,x,s) = \ln x + f_{(0,1)}(t,s).
\end{equation}
Substituting (\ref{value function of log utility}) and (\ref{value function of log utility for other z}) into (\ref{General pre-default HJB equation}), we get a linear PDE for $f_{(0,0)}$ depending on the value of $f_{(1,0)}$ and $f_{(0,1)}$:
\begin{equation} \label{Log utility HJB equation}
\begin{split}
& \frac{\partial f_{(0,0)}}{\partial t} + b^T(s,p) \mathcal{D}f_{(0,0)} + \frac{1}{2}\Tr\left(\sigma\sigma^T(s,p)\mathcal{D}^2f_{(0,0)}\right) - \left(h_{(0,0)}^S(s,p) + h_{(0,0)}^P(s,p)\right)f_{(0,0)}(t,s,p) \\
&  + h_{(0,0)}^S(s,p)f_{(1,0)}(t,p(1-L^P)) + h_{(0,0)}^P(s,p)f_{(0,1)}(t,s(1-L^S)) + r + \sup_{\pi\in A}G_{(0,0)}(s,p,\pi) = 0
\end{split}
\end{equation}
with the terminal condition $f_{(0,0)}(T,s,p) = 0$, where $G_{(0,0)}$ is defined by
\begin{equation*} \label{G function of log utility}
G_{(0,0)}(s,p,\pi): = -\frac{1}{2}\pi^T\Sigma\pi + \theta^T\pi + h_{(0,0)}^S(s,p)\ln(1-\pi^S-L^P\pi^P) + h_{(0,0)}^P(s,p)\ln(1-L^S\pi^S-\pi^P),
\end{equation*}
and the other notations are given by
$$
b(s,p): =  \begin{pmatrix} \mu^Ss\\   \mu^Pp \end{pmatrix}, \;
\mathcal{D}f_{(0,0)}: =  \begin{pmatrix}  \frac{\partial f_{(0,0)}}{\partial s}\\   \frac{\partial f_{(0,0)}}{\partial p} \end{pmatrix}, \;
\sigma(s,p):=
 \begin{pmatrix}
  \sigma^Ss & 0 \\
 \rho\sigma^Pp& \sqrt{1-\rho^2}\sigma^Pp
   \end{pmatrix},\;
\mathcal{D}^2f_{(0,0)}:=
 \begin{pmatrix}
  \frac{\partial^2 f_{(0,0)}}{\partial s^2} & \frac{\partial^2 f_{(0,0)}}{\partial s\partial p} \\
  \frac{\partial^2 f_{(0,0)}}{\partial s\partial p} & \frac{\partial^2 f_{(0,0)}}{\partial p^2}
 \end{pmatrix}.  
$$

By the same argument, we get a linear PDE for $f_{(1,0)}$:
\begin{equation*} \label{Log utility HJB equation}
\frac{\partial f_{(1,0)}}{\partial t} + \mu^Pp \frac{\partial f_{(1,0)}}{\partial p} + \frac{1}{2}(\sigma^P)^2p^2 \frac{\partial^2 f_{(1,0)}}{\partial p^2} - h_{(1,0)}^P(p)f_{(1,0)}(t,p) + r + h_{(1,0)}^P(p)r(T-t) + \sup_{\pi\in A}G_{(1,0)}(p,\pi) = 0
\end{equation*}
with the terminal condition $f_{(1,0)}(T,p) = 0$, where $G_{(1,0)}$ is defined by
\begin{equation*} \label{G function of log utility}
G_{(1,0)}(p,\pi): = -\frac{1}{2}(\sigma^P)^2(\pi^P)^2 + (\mu^P-r)\pi^P  + h_{(1,0)}^P(p)\ln (1-\pi^P).
\end{equation*}
The PDE associated with $f_{(0,1)}$ can be obtained similarly.

Assume the control constraint set $A$ is given by
\[
A:= \big\{\pi \mid a^S \leq \pi^S \leq b^S \mbox{ and } a^P \leq \pi^P \leq b^P
\big \},
\]
where $a^S, b^S, a^P, b^P\in \mathbb{R}$ are chosen such that $1-L^T\pi\geq \epsilon_A$ for $\forall \pi\in A$. We need to solve a constrained optimization problem:
$$ \max_{\pi\in A} G_{(0,0)}(s,p,\pi).
$$
Since $A$ is compact and $G_{(0,0)}$ is continuous, there exists an optimal solution which satisfies the 
Kuhn-Tucker optimality condition
\begin{equation} \label{control equation of log utility}
 \begin{cases}
    \mu^S-r-(\sigma^S)^2\pi^S - \rho\sigma^S\sigma^P\pi^P - \frac{h_{(0,0)}^S(s,p)}{1-\pi^S-L^P\pi^P} - \frac{L^S h_{(0,0)}^P(s,p)}{1-L^S\pi^S-\pi^P} + \mu_1-\mu_2 = 0 \\
   \mu^P-r-(\sigma^P)^2\pi^P - \rho\sigma^S\sigma^P\pi^S - \frac{L^Ph_{(0,0)}^S(s,p)}{1-\pi^S-L^P\pi^P} - \frac{h_{(0,0)}^P(s,p)}{1-L^S\pi^S-\pi^P} + \mu_3-\mu_4 = 0
\end{cases}
\end{equation}
and the complementary slackness condition
\begin{equation}\label{slackness}
 \mu_1(\pi^S-a^S)=0,\; \mu_2(b^S-\pi^S)=0, \;
\mu_3(\pi^P-a^P)=0, \; \mu_4(b^P-\pi^P)=0,
\end{equation}
where $\mu_i\geq 0$, $i=1,\ldots,4$, are Lagrange multipliers. Since $\pi^S$ can only take value either in the interior of  interval $[a^S,b^S]$ or one of two endpoints, the same applies to $\pi^P$, we have nine possible combinations. 

If both $\pi^S$ and $\pi^P$ are interior points, then $\mu_i=0$ for $i=1,\ldots,4$ from (\ref{slackness}). Assuming that there exists a unique solution $\left((\pi^S)^{*}_{(0,0)},(\pi^P)^{*}_{(0,0)}\right)$ of (\ref{control equation of log utility}) such that $(\pi^S)^{*}_{(0,0)}\in (a^S, b^S)$ and $(\pi^P)^{*}_{(0,0)}\in (a^P, b^P)$, then $\left((\pi^S)^{*}_{(0,0)},(\pi^P)^{*}_{(0,0)}\right)$ is the optimal control. We can discuss other cases one by one. For example, if $(\pi^S)^{*}_{(0,0)}=a^S$ and $(\pi^P)^{*}_{(0,0)}\in (a^P, b^P)$, then $\mu_2=\mu_3=\mu_4=0$ from (\ref{slackness}) and $(\pi^P)^{*}_{(0,0)}$ and $\mu_1$ are solutions of equation (\ref{control equation of log utility}). If solutions do not satisfy $(\pi^P)^{*}_{(0,0)}\in (a^P, b^P)$ and $\mu_1\geq0$, then this case is impossible. 

\begin{remark} \label{only one stock left optimal control}
Applying Kuhn-Tucker optimality condition to $G_{(1,0)}(p,\pi)$, we get the explicit optimal control for $z=(1,0)$ as
$$
(\pi^S)^*_{(1,0)} = 0, \quad (\pi^P)^*_{(1,0)} = \frac{\mu^P-r+(\sigma^P)^2 - \sqrt{(\mu^P-r-(\sigma^P)^2)^2+4(\sigma^P)^2h_{(1,0)}^P(p)}}{2(\sigma^P)^2},
$$
provided $(\pi^P)^*_{(1,0)}\in (a^P,b^P)$, otherwise, $(\pi^P)^*_{(1,0)}$ equals $a^P$ or $b^P$. 
\end{remark}

\begin{remark}\label{CF(2007)comments}
Capponi and Frei (2017) derive explicit optimal trading strategies for log utility investors when there are $N$ stocks and $N$ CDSs for these stocks. Applying Ito's formula to log wealth process and taking expectation, they get
\begin{equation} \label{capponi and frei equation}
\mathbb{E}[\ln X_T] = \ln x + \int_0^T \mathbb{E}[\alpha_t] dt,
\end{equation}
where $\alpha_t:=r + f(\bar{x})+\sum_{n\in I_z} h_ng_n(y_n)$ and $\bar{x}$ is a vector of dimension $N_z$ such that each component is a linear combination of $N_z$ controls $\pi$ into stocks and $N_z$ controls $\psi$ into CDSs, and $y_n$, $n\in I_z$, are similarly defined.
To maximize $\alpha_t$ over controls $\pi$ and $\psi$, Capponi and Frei (2017)  use a clever trick of maximizing $f(\bar{x})$ and $g_n(y_n)$ separately and derive a linear equation system with $2N_z$ equations and $2N_z$ variables in $\pi$ and $\psi$. The explicit optimal controls come from solving the equation system, see equation (B.3) in the E-companion paper of Capponi and Frei (2017).

The success of finding the explicit optimal control in Capponi and Frei (2017) crucially relies on the existence of equal number of CDSs in the model. When there is no CDS in the portfolio as in our case, maximizing $f(\bar{x})$ and $g_n(y_n)$ separately would result in  an incompatible system of $2N_z$ equations with  $N_z$ variables.  It is therefore impossible to get the closed-form optimal control for log utility investors in our looping contagion model by applying Capponi and Frei's technique. In fact, applying Ito's formula  to log wealth process and taking  expectation in our model, we get 
$$ \mathbb{E}[\ln X_T] =\ln x +  \int_0^T\mathbb{E}[\widetilde{\alpha}_t] dt,
$$
where $\widetilde{\alpha}_t:=r+\pi_t^T D_t \theta-\frac{1}{2}\pi_t^TD_t\Sigma D_t \pi_t + \sum_{j\in I_z} h_z^j(s)\ln\left(1-\sum_{i\in I_z} L_{ij}\pi_{t-}^i\right)$. Taking derivatives of $\widetilde{\alpha}_t$ with respect to $\pi$ would lead to the same  equation system as  that in (\ref{control equation of log utility}).
\end{remark}

\subsection{Performance comparison of  state-dependent and constant intensities}
We now do some statistical analysis. The data used are the same as the benchmark case and:
$$T=1, S_0=100, P_0=100, x_0=100.$$
Assume the intensity function  $h$ is given by

\begin{equation} \label{h}
h(x,y) = \min \left\{ \max \left \{ h_0 \left(k_1x + k_2y\right)^{-\alpha}, h_m \right \}, h_M \right \}
\end{equation}
with minimum intensity $h_m=0.05$, maximum intensity $h_M=1.0$, and  parameter~$\alpha=1$. The default intensity functions with respect to each stock and default state are given by 
$$
h^S_{(0,0)}(s,p) = h(s,p), \quad h^P_{(0,0)}(s,p) = h(p,s), \quad h^S_{(0,1)}(s) = h(s,0), \quad h^P_{(1,0)}(p) = h(p,0).
$$
Note that $h_0$ controls the initial intensity and weights $k_1, k_2$ control the sensitivity of intensity $h$ to stock prices $s$ and $p$.   We set $h_0=10.0$ such that the initial intensity is 0.1 and $k_1=0.7, k_2=0.3$ which means the default intensity of one stock is slightly more sensitive to its own stock price. Moreover, the intensity of one stock jumps up when the other stock defaults, which captures the virtue of interacting default intensity model, see Bo and Capponi (2013). 

\begin{figure}[!htb]
\hfill \minipage{0.31\textwidth}
  \includegraphics[width=\linewidth]{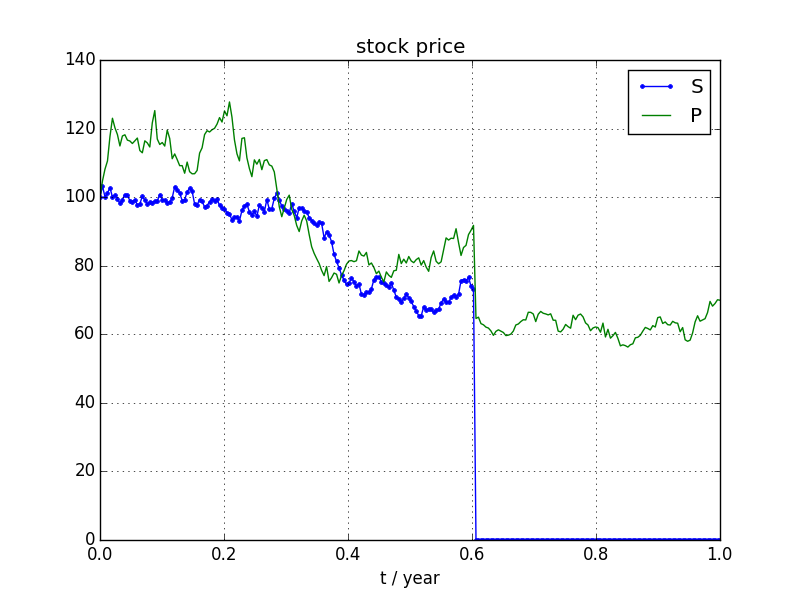}
\endminipage
\minipage{0.31\textwidth}
  \includegraphics[width=\linewidth]{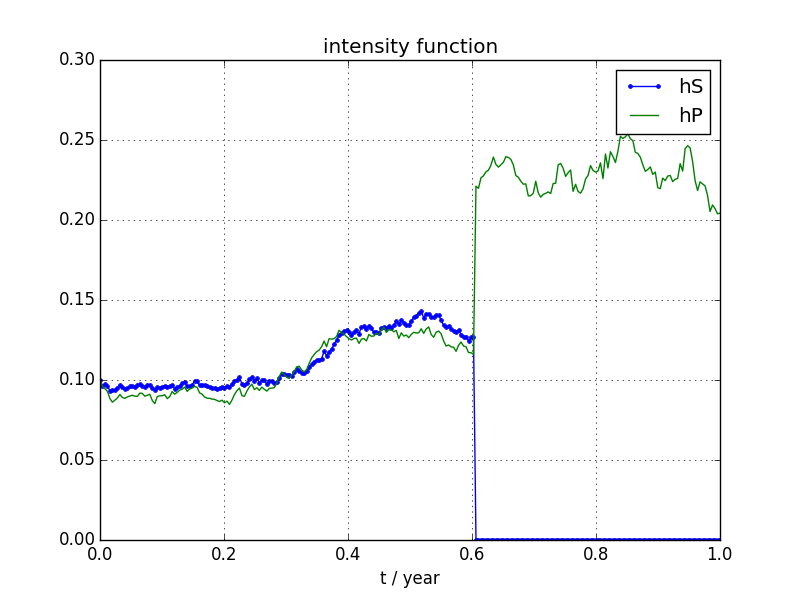}
\endminipage\hfill
\minipage{0.31\textwidth}
\includegraphics[width=\linewidth]{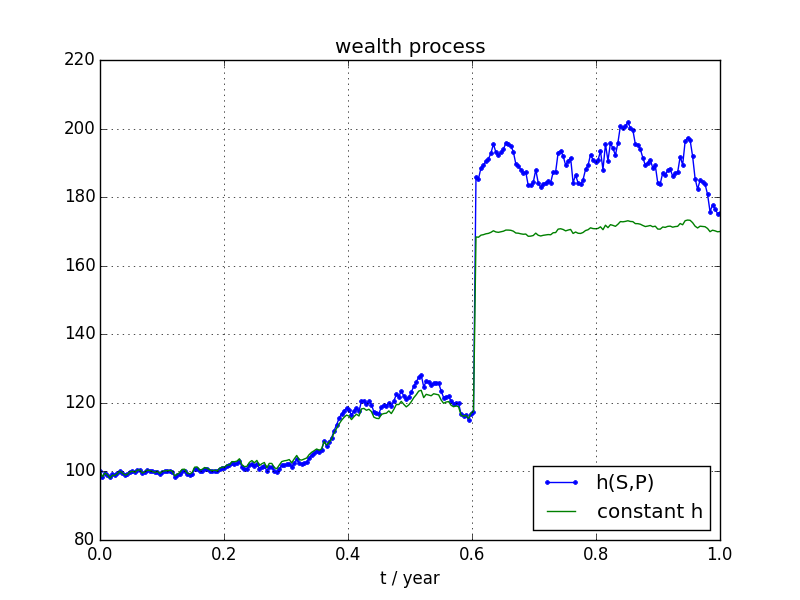}
\endminipage\hfill
\caption{Sample paths of stock price, default intensity, and wealth}
\label{fig:simulated stock price paths and corresponding intensity paths}
\end{figure}

Figure \ref{fig:simulated stock price paths and corresponding intensity paths} shows  sample paths of stock prices, default intensities, and optimal wealth with two different trading strategies. The left panel shows stock price paths of $S$ and $P$. In this scenario, only stock $S$ defaults. At time of default, stock price $S$ drops to zero and stock price $P$ jumps down then continues. The middle  panel shows the default intensity processes $h^S_z(S_t,P_t)$ and $h^P_z(S_t,P_t)$, which are functions of stock prices $S_t, P_t$. The intensity of stock $S$ becomes zero after default, while the default intensity $h^P_{(0,0)}(S_t,P_t)$ jumps up to $h^P_{(1,0)}(P_t)$.  The right panel shows the sample wealth paths when optimal control strategies used are based on $S,P$-dependant intensities and constant intensities (value equal to $0.1$). Both the wealth paths with $S,P$-dependant intensities and constant intensities jump up when default occurs and then two wealth paths move in the same pattern. Compared with the constant intensities, the $S,P$-dependant wealth path jumps more. This is not surprising as at time of default, strategies with intensity $h(S_t,P_t)$ short sells more stocks $S$ and $P$ than strategies with constant intensity, which means gain is more, see Figure \ref{fig: optimal trading strategies and final wealth distribution}. Of course, this is due to the fact that at time of default the default intensities of $S$ and $P$ are both above 0.1. The opposite phenomenon happens when the intensity $h^P(S_t,P_t)$ at time of default is larger than constant intensity 0.1.

\begin{figure}[H]
\hfill \minipage{0.31\textwidth}
  \includegraphics[width=\linewidth]{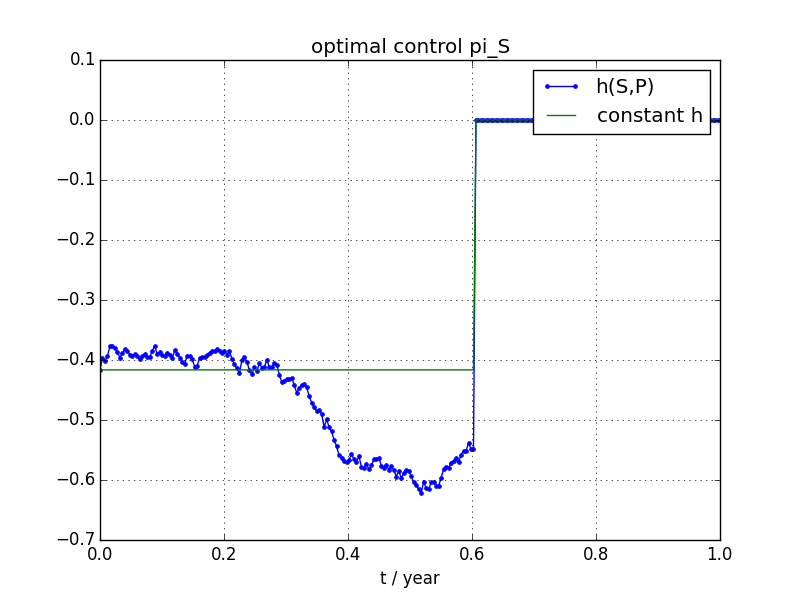}
\endminipage
\minipage{0.31\textwidth}
  \includegraphics[width=\linewidth]{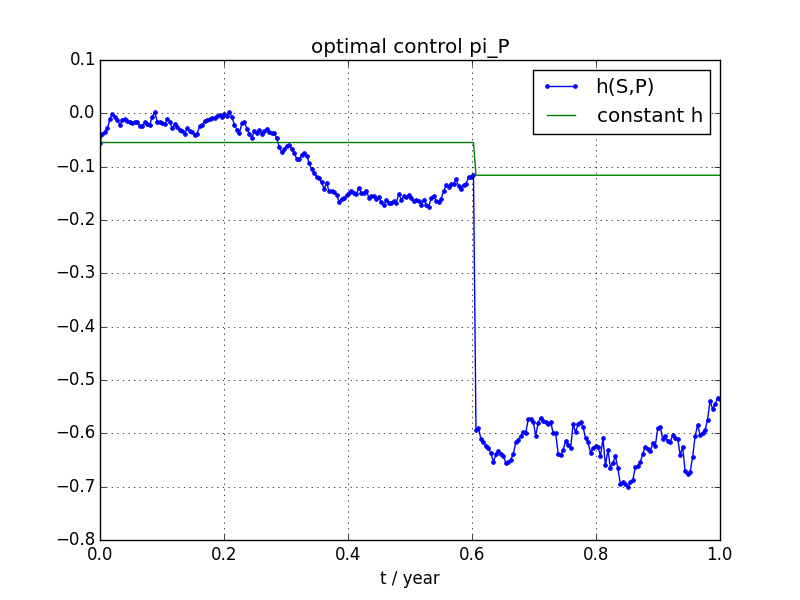}
\endminipage
\minipage{0.31\textwidth}
  \includegraphics[width=\linewidth]{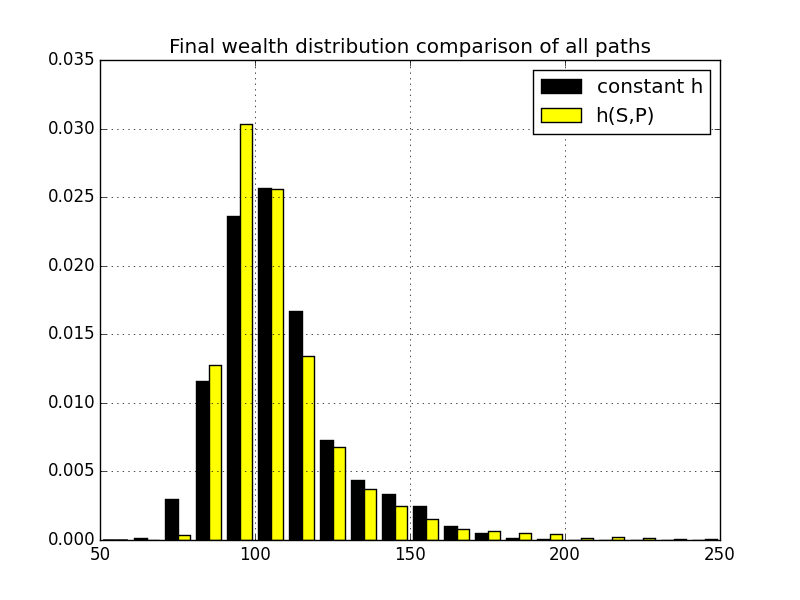}
\endminipage
\hfill 
\caption{Optimal controls and terminal wealth distribution.}
\label{fig: optimal trading strategies and final wealth distribution}
\end{figure}

Figure \ref{fig: optimal trading strategies and final wealth distribution} shows  optimal controls $\pi^S$ and $\pi^P$ associated with the stock paths in Figure \ref{fig:simulated stock price paths and corresponding intensity paths} and the statistical distributions of the wealth at time $T$.  
 The left panel and mid panel are  proportions of wealth invested in stocks $S$ and $P$, respectively. It is clear that as default intensity increases,  investments in stocks  $S$ and $P$ both decrease and investment in savings account $B$ increases, which is intuitively expected as if the default probability of one stock increases, then one would reduce the holdings of both stocks $S$ and $P$ to reduce the risk of loss in case the default of system indeed occurs. In this scenario, both the optimal investment strategies to stock $S$ and $P$ are short-selling, which is a combination effect of parameters chosen and default in the place. 
    We simulate 10000 paths of both stock prices $S$ and $P$, using $S,P$-dependent default intensity $h(S_t,P_t)$. Among all these paths, about 1/5 (precisely 1752 paths) contain defaults of either $S$ or $P$.  The terminal wealth is generated by two strategies: one is  optimal strategy based on the full information of $h(S_t,P_t)$, the other is optimal strategy based on constant intensity 0.1. 
The right panel shows the histograms of terminal wealth of these two strategies. 
It is clear that their distributions are similar but with some slight differences at tail parts, that is, probability of over-performance is higher and probability of under-performance is lower with $S,P$-dependant intensities.
These histograms seem to indicate, for log utility,  the overall performance of $S,P$-dependent optimal strategies and constant strategies are similar, while the $S,P$-dependent optimal strategies perform better in extreme scenarios. 
                      
\begin{table}[!htbp]   
\begin{center}
  \begin{tabular}{| l | c | c | c| c|}
  \hline 
     & mean & std dev &  2.3\% quantile & 97.7\% quantile  \\ \hline
       All samples $+$ $h(S,P)$ & 107.78 & 22.60 & 84.08 & 171.82  \\ \hline 
    All samples $+$ constant $h$ & 107.59 & 19.27 & 78.60 & 157.80  \\ \hline
    Default $+$ $h(S,P)$ & 134.81 & 36.04 & 83.80 & 225.20 \\ \hline 
    Default $+$ constant $h$ & 134.68 & 21.50 & 95.29 & 178.04  \\ \hline 
    No-default $+$ $h(S,P)$ & 102.17 & 12.82 & 84.11 & 134.68 \\ \hline 
    No-default $+$ constant $h$ & 101.96 & 12.99 & 78.13 & 130.48  \\ \hline
    \end{tabular}
    \end{center} 
    \caption{Sample means, standard deviations, and quantile values.} 
    \label{table1}
\end{table} 

Table \ref{table1}  contains sample means, sample standard deviations, and quantile values at low end (2.3\%) and high end (97.7\%) for both $S,P$-dependant intensity and constant intensity $0.1$. It is clear that the overall sample mean with $S,P$-dependent optimal strategies is (slightly) higher than constant intensity optimal strategies, which is expected as the former is the genuine optimal control,  however, the sample standard deviation with $S,P$-dependent optimal strategies is also higher, which implies the $S,P$-dependent optimal strategies can be volatile and risky, while the constant optimal strategies are more conservative. However, if we check the quantiles of the distribution (which is a different risk measure), we find that the $S,P$-dependent optimal strategies overall generate both higher 2.3\% quantile (less loss) and higher 97.7\% quantile (more gain), which implies the $S,P$-dependent optimal strategies outperform the constant strategies in the extreme scenarios. Note that the outperformance in upper quantile comes from more short selling (anticipating the default when stock price is very low). 

\begin{remark} \label{other parameters}
By far the conclusion drawn relies on the benchmark parameter values, in which case the optimal controls for both stocks are short selling in most scenarios. We repeat the same comparison tests on two other parameter sets (if not specified, the parameter value is the same as benchmark case).
\begin{itemize}
	\item Parameter set 1. $\sigma^S=0.2, \sigma^P=0.3, L^S=0.1, L^P = 0.2, \rho=0.4, h_0=5.0, h_m=0.01.$
	\item Parameter set 2. $r=0.01, \mu^S=0.15, \mu^P=0.2, L^S=0.05, L^P=0.1, \rho = 0.7, h_0=5.0, h_m=0.01.$
\end{itemize}
 In most scenarios, the strategies of parameter set 1 is short selling stock $S$ and longing stock $P$ and the strategies of parameter set 2 are longing both stocks $S,P$. 
The overall performance with $S,P$-dependent optimal strategies is very similar to that with constant intensity optimal strategies. The overall sample mean with $S,P$-dependent optimal strategies is (slightly) higher than constant intensity optimal strategies, and the sample standard deviation with $S,P$-dependent optimal strategies is also higher, which implies the difference in the tail distribution and that $S,P$-dependant optimal strategies can be more volatile.
\end{remark}

\subsection{Robust  tests of model parameters}
Assume intensity function $h$ is given by (\ref{h}) and stock prices $S$ and $P$ are generated  
based on that. It may be difficult to calibrate parameters  accurately even one knows the exact form of the intensity function. We do some robust tests for parameters $k_1, k_2, \alpha, h_m, h_M, h_0$, that is,  
we compare the optimal performances of two investors, one uses benchmark parameter values and the other incorrect estimated values. We change one parameter  only in each test while keep all other parameters fixed at benchmark values.

\begin{table}[!htbp] 
\begin{center}
\begin{tabular}{| l | c | c | c| c|}
  \hline 
     & mean & std dev &  2.3\% quantile & 97.7\% quantile  \\ \hline
    benchmark  & 107.78 & 22.60 & 84.08 & 171.82 \\ \hline
       $k_1=0.5,k_2=0.5$ & 107.69 (-0.09\%) & 22.85 (1.10\%) & 83.00 (-1.29\%) & 173.82 (1.16\%) \\ 
       $k_1=0.3,k_2=0.7$ & 107.69 (-0.09\%) & 23.15 (2.44\%) & 80.95 (-3.73\%) & 174.67(1.66\%) \\ \hline  
       $\alpha=0.8$ & 111.59 (3.53\%) & 53.15 (135.20\%) & 53.18 (-36.75\%) & 262.58 (52.82\%) \\ 
        $\alpha=1.2$ & 105.17 (-2.42\%) & 9.43 (-58.26\%) & 84.52 (0.53\%) & 122.93 (-28.45\%) \\ \hline 
     $h_m=0.01, h_M=1.5$ & 107.77 (-0.01\%) & 22.60 (0.01\%) & 84.08 (0.00\%) & 171.82 (0.00\%) \\ 
    $h_m=0.07, h_M=0.5$ & 107.76 (-0.02\%) & 22.62 (0.07\%) & 83.91 (-0.20\%) & 171.82 (0.00\%) \\ \hline 
   $h_0=5$ & 105.36 (-2.25\%) & 9.52 (-57.86\%) & 85.24 (1.39\%) & 124.36 (-27.62\%) \\
 $h_0=15$ & 109.76 (1.84\%) & 37.64 (66.55\%) & 70.59 (-16.04\%) & 221.49 (28.91\%) \\ 
    \hline
    \end{tabular}
    \end{center} 
  \caption{Robust test of intensity parameters} 
  \label{table4}
\end{table}

Table \ref{table4} shows that sample means are essentially the same over a broad range of model parameters. The main difference is sample standard deviations. Percentage changes over the benchmark values are listed in parentheses. The performance of state-dependent intensity strategies is robust for some parameters, including weight $k_1, k_2$, minimum intensity level $h_m$ and maximum intensity level $h_M$. Changes of these parameters do  not greatly change sample standard deviations and quantile values at low and high ends.   On the other hand, it seems important to have correct estimations of parameters $\alpha$ and $h_0$  
to avoid large changes of the standard deviation. Those parameters have strong impact on the estimated intensity levels. For example, if one overestimates the initial default intensity ($h_0=15$ instead of correct value $h_0=10$) then the sample standard deviation is greatly increased with large loss at low end quantile value. 

Next we do some robust tests to see the impact of changes of model parameters on the distribution of optimal terminal wealth, including drift $\mu$, volatility $\sigma$, correlation $\rho$, and percentage loss $L^S$. 
 We change drift and volatility parameters by 20\% of their benchmark values and correlation and percentage loss parameters by some big  deviations. 

\begin{table}[!htbp] 
\begin{center}
  \begin{tabular}{| l | c | c | c| c|}
  \hline 
     &  mean & std dev &  2.3\% quantile & 97.7\% quantile  \\ \hline
   benchmark  & 107.78 & 22.60 & 84.08 & 171.82  \\ \hline
      $\mu^S=0.12$ & 107.05 (-0.68\%) & 17.54 (-22.39\%) & 91.98 (9.39\%) & 158.13 (-7.97\%)  \\
  $\mu^S=0.08$  & 108.55 (0.71\%) & 28.35 (25.44\%) & 75.99 (-9.62\%) & 187.88 (9.35\%) \\  
          $ \mu^P=0.18$ & 107.53 (-0.23\%) & 21.78 (-3.61\%) & 80.93 (-3.75\%) & 165.05 (-3.94\%) \\ 
   $ \mu^P=0.12$  & 108.08 (0.28\%) & 25.58 (13.17\%) & 83.68 (-0.48\%) & 182.54 (6.24\%) \\  \hline
      $\sigma^S=0.36$  & 107.28 (-0.46\%) & 18.87 (-16.49\%) & 88.04 (4.71\%) & 161.89 (-5.78\%)  \\
    $\sigma^S=0.24$  & 108.47 (0.64\%) & 27.88 (23.37\%) & 78.37 (-6.79\%) & 188.35 (9.62\%) \\ 
      $ \sigma^P=0.48$  & 107.74 (-0.04\%) & 21.94 (-2.92\%) & 84.12 (0.04\%) & 169.01 (-1.63\%) \\ 
 $\sigma^P=0.32$ & 107.86 (0.07\%) & 23.56 (4.24\%) & 83.84 (-0.29\%) & 175.82 (2.33\%) \\ \hline
     $\rho=-0.3$ & 108.21 (0.40\%) & 25.50 (12.83\%) & 83.45 (-0.75\%) & 183.56 (6.83\%) \\
   $\rho=0.3$  & 107.57 (-0.20\%) & 21.83 (-3.39\%) & 82.74 (-1.59\%) & 167.26 (-2.65\%)  \\ \hline  
   $L^S=0.1$  & 107.49 (-0.26\%) & 20.49 (-9.33\%) & 87.32 (3.85\%) & 166.49 (-3.10\%) \\
   $L^S=0.4$  & 108.24 (0.43\%) & 26.45 (17.05\%) & 77.52 (-7.80\%) & 181.95 (5.90\%) \\
   $L^P=0.15$  & 107.70 (-0.07\%) & 21.94 (-2.91\%) & 83.26 (-0.97\%) & 169.43 (-1.39\%) \\
   $L^P=0.6$  & 107.88 (-0.10\%) & 23.90 (5.74\%) & 84.69 (0.73\%)& 177.53 (3.32\%) \\ 
   \hline
       \end{tabular}
    \end{center} 
    \caption{Robust test of model parameters} 
    \label{table5}
\end{table}

Table \ref{table5} lists statistical results of distributional sensitivity to changes of parameters. It is clear that sample means are essentially the same for all parameters, but sample standard deviations are sensitive to changes of drift, volatility, correlation and percentage loss, which would significantly affect overall distributions of optimal terminal wealth. This requires one to have good estimations of these parameters to have correct distributions. It is well known that it is easy to estimate volatility but difficult to estimate drift (see Rogers (2013)) and information of percentage loss is rarely available. Since optimal trading strategies and optimal wealth distributions are greatly influenced by these parameters which are difficult to be correctly estimated, one needs to be cautious                                                                                                                                                                                                                                                                                                                                                                                                                                                                                                                                                                                                                                                                                                                                                                                                                                                                                                                                                                                                                                                                                                                                                                                                                                                                                                                                            in using state-dependent intensity to model and solve optimal investment problems. Using sub-optimal but conservative and robust trading strategies, instead of optimal ones based on unobservable parameters and intensities, might be more sensible  and less risky.

\subsection{Performance comparison of different initial stock prices}
Table \ref{table1} shows the overall distributions of the terminal wealth are similar whether one uses the intensity $h^i_z(s, p)$ or constant intensity 0.1 as approximation. This is possibly due to the fact that the initial price of $S$ and $P$ are both 100, which results in the inital intensity $h^i_z(s,p)$ being equal to the constant intensity.  The value 0.1 comes from the calibration which relies only on the historical data, while $h(s,p)$ is a forward-looking function which depends on the future stock prices. Table \ref{table1} represents the normal situation where the default probability in calibration window is close to that in investment window. However, if the initial intensity $h^i_z(s,p)$ is vastly different from $0.1$ which comes from the estimation of calibration window (one example is that the calibration window is just before the financial crisis, while the investment starts from the financial crisis period), the distributions of terminal wealth  can be significantly different. We use a numerical example to illustrate this. 

For simplicity, let the intensity function be given by $h^i_z(s,p)=20/(s+p)$. This means the default intensity of $S$ jumps from $20/(S_t+P_t)$ to $20/S_t$ after $P$ defaults. So is the situation when $S$ defaults. Assume that the initial prices of $S$ and $P$ are $s=10, p=10$ respectively, then the initial intensity is $h^S_{(0,0)}(s,p)=h^P_{(0,0)}(s,p)=1$, which makes the stocks ten times more likely to default than the constant intensity $h=0.1$ would have suggested (from the calibration window). This would cause one to take different control strategies (more shortselling when $s=10, p=10$) and would have large impact on the distributions of the terminal wealth as shown in the table below.

\begin{table}[!htbp]   
\begin{center}
  \begin{tabular}{| l  | c | c | c| c|}
  \hline 
     &  mean & std dev &  2.3\% quantile & 97.7\% quantile  \\ \hline
       All samples $\&$ $h(S,P;S_0=10,P_0=10)$ &  179.74 & 63.21 & 58.63 & 316.08  \\ \hline 
    All samples $\&$ $h\equiv 0.1$ & 66.26 & 22.87 & 44.62 & 138.28  \\ \hline
    Default $\&$ $h(S,P;S_0=10,P_0=10)$ & 195.90 & 52.78 & 99.20 & 318.65 \\ \hline 
    Default $\&$ $h\equiv 0.1$ & 57.70 & 7.16 & 44.23 & 73.82  \\ \hline 
    No-default $\&$ $h(S,P;S_0=10,P_0=10)$ & 88.18 & 31.46 & 50.36 & 172.72 \\ \hline 
    No-default $\&$ $h\equiv 0.1$ & 114.78 & 20.69 & 78.99 & 157.11  \\ \hline
    \end{tabular}
    \end{center} 
    \caption{Sample statistics. Data: $x=100$, $S_0=10$, $P_0=10$, $T=1$} 
    \label{table6}
\end{table}

Table \ref{table6} shows the statistics of the terminal wealth with 10000 simulation scenarios which produces 8542 default scenarios, a reflection of the high initial default intensity $h^i_z(S_0,P_0)=1$.   
When stock prices are small, defaultable stocks are very likely to default. With $S,P$ dependent intensity, the optimal controls are to short sell more stocks, which results in a much larger mean (195.90) than the mean (57.70) with constant intensity $h\equiv 0.1$ if stock $S$ or $P$ indeed defaults (anticipated). However, if stock does not default (non-anticipated), then the opposite outcomes appear. This numerical test shows $S,P$-dependant control strategies may outperform or under-perform $S,P$ independent control strategies, depending on the anticipated market event (default of stock) occurring or not. 

\begin{remark}
We repeat the same tests on the other two parameter sets defined in Remark \ref{other parameters}. Numerical results display the similar patterns as those in Table \ref{table6}, so the conclusion drawn in this section is robust.

\end{remark}

\subsection{Numerical method for  power utility} \label{section_power}
For power utility  $U(x) = (1/\gamma)x^\gamma$, $0<\gamma<1$, the post-default case  is well known with the optimal control  $\pi^S =(\mu^S-r)/((\sigma^S)^2(1-\gamma))$ (and $\pi^P = 0$) and the  post-default value function 
$v_1(t,x) =(1/\gamma) x^\gamma g_1(t)$, where
$$ g_1(t)= \exp{\left(\left( r\gamma+\frac{\gamma}{2(1-\gamma)} \left(\frac{\mu^S-r}{\sigma^S}\right)^2\right)(T-t)\right)}.$$
We conjecture that  the pre-default value function  takes the form
\begin{equation} \label{value function of power utility}
w(t,x,s,p) = \frac{x^\gamma}{\gamma} f(t,s,p).
\end{equation}
Substituting (\ref{value function of power utility}) into (\ref{General pre-default HJB equation}), we get  a semilinear PDE for $f$:
\begin{equation} \label{Power utility HJB equation}
-\frac{\partial f}{\partial t} -\frac{1}{2}\Tr\left(\sigma\sigma^T(s,p)\mathcal{D}^2f\right)
-\sup_{\pi\in A}\left\{ b^T(s,p,\pi)\mathcal{D}f - \beta(s,p,\pi)f + g(t,s,p,\pi)\right\}=0,
\end{equation}
with terminal condition $f(T,s,p) = 1$, where 
$$
b(s,p,\pi): =  \begin{pmatrix}  \left(\mu^S+\gamma m^T\pi\sigma^S\right)s\\   \left(\mu^P+\gamma n^T\pi\sigma^P\right)p \end{pmatrix}, \;
\mathcal{D}f: =  \begin{pmatrix}  f_s\\   f_p \end{pmatrix}, \;
\sigma(s,p):=
 \begin{pmatrix}
  \sigma^Ss & 0 \\
 \rho\sigma^Pp& \sqrt{1-\rho^2}\sigma^Pp
   \end{pmatrix},\;
\mathcal{D}^2f:=
 \begin{pmatrix}
  f_{ss} & f_{sp} \\
  f_{sp} & f_{pp}
 \end{pmatrix},  
$$
and 
$$
\beta(s,p,\pi):=-r\gamma+h(s,p)-\gamma\left(\theta^T\pi+\frac{1}{2}(\gamma-1)\pi^T\Sigma\pi\right),
$$
$$
g(t,s,p,\pi):=h(s,p)g_1(t)(1-L^T\pi)^\gamma.
$$

Equation (\ref{Power utility HJB equation}) is a nonlinear PDE with two state variables and it is highly unlikely, if not impossible, to find a closed form solution $f$. However,
by Pham (2009) (Remark 3.4.2), equation  (\ref{Power utility HJB equation}) is 
the HJB equation for the value function $v$ of  the following optimal control problem:
\begin{equation} \label{power utility value function}
v(t,y) = \sup_{\pi\in\mathcal{A}} \mathbb{E}\left[\int_t^T \Gamma(t,u)g(u,Y_u,\pi_u)du + \Gamma(t,T)\bigg| Y_t=y\right],
\end{equation}
where $Y_u:=(S_u,P_u)^T$, $t\leq u\leq T$, is a controlled Markov state process satisfying the following SDE: 
\begin{equation} \label{dynamic of power wealth}
dY_u = b(Y_u,\pi_u)du + \sigma(Y_u)dW_u,\ t\leq u\leq T,
\end{equation}
with the initial condition $Y_t = y:=(s,p)^T$, 
$W$ is a 2-dimensional standard Brownian motion and 
$\Gamma(t,u):=\exp\left\{ -\int_t^u \beta(Y_l,\pi_l) dl\right\}$ is a discount factor. 

By our theoretical result, we claim that the value function $v(t,y)$ is the unique viscosity solution of the HJB equation (\ref{Power utility HJB equation}). Moreover, if the HJB equation (\ref{Power utility HJB equation}) has a classical solution, then  it is the value function  $v(t,y)$. In other words, we may find the solution $f(t,s,p)$ of equation (\ref{Power utility HJB equation}) by solving a stochastic optimal control problem (\ref{power utility value function}). Since the diffusion coefficient of SDE (\ref{dynamic of power wealth}) does not contain control variable $\pi$, we may use the numerical method of Kushner and Dupuis (2001) to find the optimal  value function in (\ref{power utility value function}), which would give us a numerical approximation to the solution  $f(t,s,p)$ of equation (\ref{Power utility HJB equation}). Next we give some details.

According to Kushner and Dupius (2001),   the process $Y$ can be approximated by a Markov chain process, which transits a point $Y_t=(s,p)$ at time $t$ to one of nine points $Y_{t+\Delta t}$ may take at time $t+\Delta t$, that is, $(s,p), (s\pm \delta,p), (s,p\pm \delta), (s+\delta,p\pm \delta), (s-\delta,p\pm \delta)$,  with the following transition probabilities:
\begin{eqnarray*}
a^{\delta,\Delta t}\left((s,p),(s,p)\mid \pi\right) &:= & 1 - \frac{\Delta t}{\delta}(|b_1|+|b_2|) - \frac{\Delta t}{\delta^2}\left((\sigma^Ss)^2+(\sigma^Pp)^2 - |\rho|\sigma^S\sigma^Psp\right) \\
a^{\delta,\Delta t}\left((s,p),(s\pm \delta,p)\mid \pi\right) &:= & \frac{\Delta t}{\delta}b_1^{\pm} + \frac{\Delta t}{2\delta^2}(\sigma^Ss)^2 - \frac{\Delta t}{2\delta^2}|\rho|\sigma^S\sigma^Psp \\
a^{\delta,\Delta t}\left((s,p),(s,p\pm \delta)\mid \pi\right) &:= & \frac{\Delta t}{\delta}b_2^{\pm} + \frac{\Delta t}{2\delta^2}(\sigma^Pp)^2 - \frac{\Delta t}{2\delta^2}|\rho|\sigma^S\sigma^Psp \\
a^{\delta,\Delta t}\left((s,p),(s+\delta,p\pm \delta)\mid \pi\right) &:= & \frac{\Delta t}{2\delta^2}\rho^{\pm}\sigma^S\sigma^Psp \\
a^{\delta,\Delta t}\left((s,p),(s-\delta,p\mp\delta)\mid \pi\right) &:= & \frac{\Delta t}{2\delta^2}\rho^{\pm}\sigma^S\sigma^Psp, 
\end{eqnarray*}
where $\delta$ is the step size of space,  $\Delta t:=(T-t)/N$ is the step size of time with $N\geq 1$ an integer,  $b_1:=(\mu^S+\gamma m^T\pi\sigma^S)s$, $b_2:=(\mu^P+\gamma n^T\pi\sigma^P)p$ and $x^+:=\max\{x,0\}$, $x^-:=\max\{-x,0\}$.

The numerical scheme is based on the following discretized dynamic programming principle:
\begin{eqnarray*}
&&v(k\Delta t, S_{k\Delta t}, P_{k\Delta t}) \nonumber \\
&\approx& {}  \sup_{\pi_k\in A} \left(g(k\Delta t,S_{k\Delta t},P_{k\Delta t},\pi_k) \Delta t + \exp\left\{ -\beta(S_k^N,P_k^N,\pi_k) \Delta t \right\}
\mathbb{E}\left[v\left((k+1)\Delta t,S_{k+1}^N,P_{k+1}^N\right)\right] \right) 
\end{eqnarray*}
for $k=N-1,\ldots, 1,0$, 
where $\pi_k$ is the piece-wise constant control and the expectaton is computed with the help of the above Markov chain transition probabilities. The terminal condition is given by $v(N\Delta t, S_{N\Delta t},P_{N\Delta t}) = 1$. 

We compare the passive investment and active investment under the power utility setting. Most  parameter values used in power utility case are the same as log utility benchmark case, except  the step size of space $\delta=5$, the step size of time $\Delta t=0.1$, and  the set of control parameters are $a^S=a^P=-1.0, b^S=b^P=1.0$. Table \ref{table9} lists the numerical results with mean, variance, and quantile values at lower and upper ends. It is clear the performance is similar to that of the log utility as one would expect. We have also done other tests defined in log utility scope and drawn the similar conclusions for the power utility investors.
   
\begin{table}[!htbp]   
\begin{center}
  \begin{tabular}{| l | c | c | c| c|}
  \hline 
     & mean & std dev &  2.3\% quantile & 97.7\% quantile  \\ \hline
       All samples $+$ $h(S,P)$ & 106.38 & 17.45 & 77.30 & 148.20  \\ \hline 
    All samples $+$ constant $h$ & 105.99 & 14.90 & 79.92 & 139.08  \\ \hline
    Default $+$ $h(S,P)$ & 106.70 & 21.00 & 71.67 & 158.63 \\ \hline 
    Default $+$ constant $h$ & 106.18 & 15.78 & 78.50 & 140.40  \\ \hline 
    No-default $+$ $h(S,P)$ & 106.35 & 17.06 & 77.83 & 146.52 \\ \hline 
    No-default $+$ constant $h$ & 105.97 & 14.80 & 80.03 & 138.94  \\ \hline
    \end{tabular}
    \end{center} 
    \caption{Sample means, standard deviations, and quantile values.} 
    \label{table9}
\end{table} 

 There is a backward stochastic differential equation (BSDE) representation of the 
 solution  $f(t,s,p)$ of equation (\ref{Power utility HJB equation}). So in theory one may find $f$ if one can solve a highly  nonlinear BSDE, which is  not pursued in this paper, see Cheridito et al. (2007) for details.
 
\section{Proofs 
}\label{proofs}

\subsection{Proof of Theorem~\ref{continuity property of value function}}\label{2.5}
\begin{proof}
We prove the theorem in four steps: 1) $v_0$ is continuous in $x$, uniformly in $t,s,p$, 2)  $v_0$ is continuous in $s$, uniformly in $t,p$, 3)  $v_0$ is continuous in $p$, uniformly in $t,s$ and 4) $v_0$ is continuous in $t$. Combining  these four steps gives the continuity of $v_0$ in $t,x,s,p$.

\textbf{Step 1.} For any $x_1$, $x_2\in [0,\infty)$ and $t,s,p\in [0,T]\times(0,\infty)^2$, using Assumption \ref{utility assumption}, we have 

\begin{equation*}
\begin{split}
|v_0(t,x_1,s,p)-v_0(t,x_2,s,p)| & = \left|\sup_{\pi\in \mathcal{A}} \mathbb{E}\left[U(X_T^{t,x_1,s,p,\pi})\right] - \sup_{\pi\in \mathcal{A}} \mathbb{E}\left[U(X_T^{t,x_2,s,p,\pi})\right]\right| \\
& \leq \sup_{\pi\in \mathcal{A}} \mathbb{E}\left[\left|U(X_T^{t,x_1,s,p,\pi}) - U(X_T^{t,x_2,s,p,\pi})\right|\right] \\
& \leq K \sup_{\pi\in \mathcal{A}} \mathbb{E}\left[\left|X_T^{t,x_1,s,p,\pi} - X_T^{t,x_2,s,p,\pi}\right|^\gamma\right] \\
& \leq K |x_1-x_2|^\gamma.
\end{split}
\end{equation*}
by virtue of (\ref{strong solution property}). Therefore, $v_0$ is continuous in $x$, uniformly in $t,s,p$.

\textbf{Step 2.} 
Fix $0<s_1<s_2<\infty$ and $t,x,p\in [0,T]\times [0,\infty)\times (0,\infty)$. Denote by
$S^i$ the stock price that starts from $s_i$,  $i=1,2$, and $h^i$ and $\tau_i$ the corresponding default intensity and   default time of stock $P$, respectively.
By our model setting, $\tau_i$  can be represented by 
\[
\tau_i:=\inf \left\{s\geq t: \int_t^s h_u^i du \geq \mathcal{X} \right\},
\]
where $\mathcal{X}$ is a standard exponential random variable on the probability space
$(\Omega,\mathcal{G},\mathbb{P})$ and is independent of the filtration $(\mathcal{F}_t)_{t\geq 0}$, which means $\tau_i$ are totally inaccessible stopping times.

Define $\tau_{min}:=\min{\{\tau_1,\tau_2\}}$. It is clear that before $\tau_{min}$, the stock price dynamic is a standard geometric Brownian motion. We have
\[
\mathbb{E}\left[ \left| S_u^1-S_u^2 \right| \mathbb{I}_{\{u< \tau_{min}\}}\right]  \leq K |s_1-s_2|
\]
and
\begin{equation} \label{useful property in continuity}
\mathbb{E}\left[\int_t^{\tau_{min}\wedge \overline{u}} \left | h_u^1-h_u^2 \right | du 
\right]
 \leq K \mathbb{E}\left[\int_t^{\overline{u}} \left| S_u^1-S_u^2 \right| \mathbb{I}_{\{u< \tau_{min}\}}  du \right]  \leq K |s_1-s_2|
\end{equation}
for any $\overline{u}\in [t,T]$. 
 
If there is no jump on interval $[t,T]$, then $\sup_{[t,T]}|H_u^1-H_u^2|=0$ and $X_T^{t,x,s_1,p,\pi}=X_T^{t,x,s_2,p,\pi}$, where $H^i$ is the jump process associated with default time $\tau_i$. If there is at least one jump on interval $[t,T]$, then $\sup_{[t,T]}|H_u^1-H_u^2|=1$ as $\tau_1$ and $\tau_2$ do not jump at the same time. We have the relation
\begin{eqnarray*}
 |X_T^{t,x,s_1,p,\pi}-X_T^{t,x,s_2,p,\pi}| &=& |X_T^{t,x,s_1,p,\pi}-X_T^{t,x,s_2,p,\pi}|\sup_{[t,T]}|H_u^1-H_u^2|\\
&\leq& (|X_T^{t,x,s_1,p,\pi}|+|X_T^{t,x,s_2,p,\pi}|)\sup_{[t,T]}|H_u^1-H_u^2|.
\end{eqnarray*}

Since $\sup_{[t,T]}|H_u^1-H_u^2|$ equals 0 or 1, we have $(\sup_{[t,T]}|H_u^1-H_u^2|)^\alpha=\sup_{[t,T]}|H_u^1-H_u^2|$ for any $\alpha>0$. 
Using $(x+y)^\gamma\leq x^\gamma+y^\gamma$ for $x,y\geq 0$ and $0<\gamma\leq 1$ and  the Cauchy-Schwarz inequality, also noting Remark \ref{Proposition of wealth process}, we have
\begin{eqnarray*}
\mathbb{E}\left[\left|X_T^{t,x,s_1,p,\pi}-X_T^{t,x,s_2,p,\pi}\right|^\gamma\right]
&\leq& \mathbb{E}\left[(|X_T^{t,x,s_1,p,\pi}|^\gamma+|X_T^{t,x,s_2,p,\pi}|^\gamma) \sup_{[t,T]}|H_u^1-H_u^2|\right]\\
&\leq&K\left( \left(\mathbb{E}\left[|X_T^{t,x,s_1,p,\pi}|^{2\gamma}\right]\right)^{1/2}
+ \left(\mathbb{E}\left[|X_T^{t,x,s_2,p,\pi}|^{2\gamma}\right]\right)^{1/2}\right) 
 \left(\mathbb{E}\left[\sup_{[t,T]}|H_u^1-H_u^2|\right]\right)^{1/2}\\
 &\leq& K x^\gamma \left(\mathbb{E}\left[\sup_{[t,T]}|H_u^1-H_u^2|\right]\right)^{1/2}.
\end{eqnarray*}

We therefore have
\[
\begin{split}
|v_0(t,x,s_1,p)-v_0(t,x,s_2,p)| & \leq K \sup_{\pi\in \mathcal{A}} \mathbb{E}\left[|X_T^{t,x,s_1,p,\pi}-X_T^{t,x,s_2,p,\pi}|^\gamma\right] \\
&\leq  K x^\gamma \left(\mathbb{E}\left[\sup_{[t,T]}|H_u^1-H_u^2|\right]\right)^{1/2}.
\end{split}
\]

We can decompose $H^i$ as
$H_u^i = M_u^i + A_u^i$, where $M^i$ 
is a martingale and $A_u^i:= \int_t^{u\wedge \tau_i}h_s^i ds$ is a bounded variation process, see Bielecki and Rutkowski (2003). Applying Doob's sub-martingale inequality, we have 
\begin{eqnarray*}
 \mathbb{E}\left[\sup_{[t,T]}|H_u^1-H_u^2|\right]  
 &=& \mathbb{E}\left[\sup_{[t,T]}|H_u^1-H_u^2|^2\right]  \\
 &\leq&  2 \mathbb{E}\left[\sup_{[t,T]}|M_u^1-M_u^2|^2+ \sup_{[t,T]}|A_u^1-A_u^2|^2\right]\\
 &\leq&  8 \mathbb{E}\left[|M_T^1-M_T^2|^2\right] + 2\mathbb{E}\left[\sup_{[t,T]}|A_u^1-A_u^2|^2\right].
\end{eqnarray*}

Since $h$ is a monotone function of $s$ by Assumption \ref{h assumption}, without loss of generality, we assume $h$ is non-increasing in $s$, then $h^1\geq h^2$ before the first default occurs. By the definition of $\tau_i$, we have $\tau_1\leq \tau_2$ and $H_t^1\geq H_t^2$. Then
\[
\begin{split}
\left|A_u^1-A_u^2\right| 
& = \left|\int_t^{u}h_s^1 ds-\int_t^{u}h_s^2 ds\right| \mathbb{I}_{\{u\leq \tau_1\leq \tau_2 \}} + \left|\int_t^{\tau_1}h_s^1 ds-\int_t^{u}h_s^2 ds\right| \mathbb{I}_{\{\tau_1< u\leq \tau_2 \}} \\
& \ \ \ + \left|\int_t^{\tau_1}h_s^1 ds-\int_t^{\tau_2}h_s^2 ds\right| \mathbb{I}_{\{\tau_1\leq \tau_2<u \}} \\
& = \left(\int_t^{u}h_s^1 ds-\int_t^{u}h_s^2 ds\right) \mathbb{I}_{\{u\leq \tau_1\leq \tau_2 \}} + \left(\mathcal{X}-\int_t^{u}h_s^2 ds\right) \mathbb{I}_{\{\tau_1< u\leq \tau_2 \}} \\
& \ \ \ + \left(\mathcal{X}-\mathcal{X}\right) \mathbb{I}_{\{\tau_1\leq \tau_2<u \}} \\
& = A_u^1-A_u^2
\end{split}
\]
for any $u\in [t,T]$. Therefore,
\[
\sup_{[t,T]}|A_u^1-A_u^2|^2 \leq K \sup_{[t,T]}|A_u^1-A_u^2| = K \sup_{[t,T]}(A_u^1-A_u^2).
\]
Note that $A_u^1-A_u^2$ is non-decreasing before $\tau_1\wedge T$ and non-increasing after $\tau_1\wedge T$, we conclude that
\[
\sup_{[t,T]}(A_u^1-A_u^2) = A_{\tau_1\wedge T}^1-A_{\tau_1\wedge T}^2=\int_t^{\tau_1\wedge T} (h_u^1-h_u^2) du.
\]
By inequality (\ref{useful property in continuity}), we have
\begin{equation} \label{second part}
\mathbb{E}\left[\sup_{[t,T]}|A_u^1-A_u^2|^2\right] 
\leq K\mathbb{E}\left[\int_t^{\tau_1\wedge T} (h_u^1-h_u^2) du\right] 
\leq K|s_1-s_2|.
\end{equation}
Since $H_T^1-H_T^2$ equals 0 or 1, we have
\[
\begin{split}
|M_T^1-M_T^2|^2 & \leq 2|H_T^1-H_T^2|^2 + 2|A_T^1-A_T^2|^2 \\
& \leq 2(H_T^1-H_T^2) + K (A_T^1-A_T^2) \\
& \leq 2(M_T^1-M_T^2) + K (A_T^1-A_T^2).
\end{split}
\]
Since $M^i$ is martingale, also note that $\tau_1\wedge T \leq \tau_2\wedge T$, 
we have
\begin{equation} \label{first part}
\begin{split}
\mathbb{E}|M_T^1-M_T^2|^2 
& \leq  K \mathbb{E}[A_T^1-A_T^2] \\
&= K  \mathbb{E}\left[\int_t^{\tau_1\wedge T} h_s^1 ds - \int_t^{\tau_2\wedge T} h_s^2 ds \right] \\
& \leq K \mathbb{E}\left[\int_t^{\tau_1\wedge T} (h_s^1-h_s^2) ds \right] \\
& \leq K |s_1-s_2|. 
\end{split}
\end{equation}
Combining (\ref{first part}) and (\ref{second part}), we conclude that $\mathbb{E}\left[\sup_{[t,T]}|H_u^1-H_u^2|\right]  \leq K|s_1-s_2|$,  which  gives
\[
|v_0(t,x,s_1,p)-v_0(t,x,s_2,p)| \leq K x^\gamma|s_1-s_2|^{\frac{1}{2}}.
\]
Therefore, $v_0$ is continuous in $s$, uniformly in $t,p$.

\textbf{Step 3.} 
Fix $0<p_1<p_2<\infty$ and $t,x,s\in [0,T]\times [0,\infty)\times (0,\infty)$, by same technique as in Step 2, we can show
\[
|v_0(t,x,s,p_1)-v_0(t,x,s,p_2)| \leq K x^\gamma|p_1-p_2|^{\frac{1}{2}}.
\]
Therefore, $v_0$ is continuous in $p$, uniformly in $t,s$.

\textbf{Step 4.} 
For any $0\leq t_1 < t_2\leq T$ and $x,s,p\in [0,\infty)\times (0,\infty)^2$, by the definition of $v_0$ and the dynamic programming principle, $\forall\delta >0$, $\exists\pi(\delta)\in \mathcal{A}$ such that
\[ 
\begin{split}
v_0(t_1,x,s,p) - \delta & \leq \mathbb{E}\left[v_0\left(t_2,X_{t_2}^{t_1,x,s,p,\pi(\delta)}, S_{t_2}^{t_1,s}, P_{t_2}^{t_1,p}\right)\mathbb{I}_{\{ t_2 < \tau \}} + v_1\left(t_2,X_{t_2}^{t_1,x,\pi(\delta)}\right)\mathbb{I}_{\{ t_2 \geq \tau \}}\right] \\ 
& \leq v_0(t_1,x,s,p).
\end{split}
\]

Rearranging the order, we have
\[
\begin{split}
& \ \ \ \ \left|v_0(t_1,x,s,p) - v_0(t_2,x,s,p)\right| - \delta \\
& \leq \left|\mathbb{E}\left[v_0\left(t_2,X_{t_2}^{t_1,x,s,\pi(\delta)},S_{t_2}^{t_1,s},P_{t_2}^{t_1,p}\right)\mathbb{I}_{\{ t_2 < \tau \}} + v_1\left(t_2,X_{t_2}^{t_1,x,\pi(\delta)}\right)\mathbb{I}_{\{ t_2 \geq \tau \}}\right] - v_0(t_2,x,s,p)\right| \\
& \leq \mathbb{E}\left[\left|v_0\left(t_2,X_{t_2}^{t_1,x,s,\pi(\delta)},S_{t_2}^{t_1,s},P_{t_2}^{t_1,p}\right)\mathbb{I}_{\{ t_2 < \tau \}} - v_0(t_2,x,s,p)\right|\right] + \mathbb{E}\left[\left|v_1\left(t_2,X_{t_2}^{t_1,x,\pi(\delta)}\right)\mathbb{I}_{\{ t_2 \geq \tau \}}\right|\right]. 
\end{split}
\]

Using the Cauchy-Schwartz inequality, we have
\[
\begin{split}
\mathbb{E}\left[\left|v_1\left(t_2,X_{t_2}^{t_1,x,s,\pi(\delta)}\right)\mathbb{I}_{\{ t_2 \geq \tau \}}\right|\right] & \leq \mathbb{E}\left[\left|v_1\left(t_2,X_{t_2}^{t_1,x,s,\pi(\delta)}\right)\right|^2\right]^{1/2}\sqrt{\mathbb{P}\big( t_2 \geq \tau \big)} \\
& \leq K(1+x^\gamma)\sqrt{\mathbb{P}\big( t_2 \geq \tau \big)}, 
\end{split}
\]
which tends to 0 since
$\mathbb{P}\big( t_2 \geq \tau \big)\to0 $ as $t_2-t_1\to 0$.
  
Next we prove the first term $\mathbb{E}\left[\left|v_0\left(t_2,X_{t_2}^{t_1,x,s,p,\pi(\delta)},S_{t_2}^{t_1,s},P_{t_2}^{t_1,p}\right)\mathbb{I}_{\{ t_2 < \tau \}} - v_0(t_2,x,s,p)\right|\right]$ goes to zero as $t_2-t_1\rightarrow 0$.
\[
\begin{split}
& \ \ \ \ \mathbb{E}\left[\left|v_0\left(t_2,X_{t_2}^{t_1,x,s,p,\pi(\delta)},S_{t_2}^{t_1,s},P_{t_2}^{t_1,p}\right)\mathbb{I}_{\{ t_2 < \tau \}} - v_0(t_2,x,s,p)\right|\right] \\
& \leq \mathbb{E}\left[\left|\left(v_0\left(t_2,X_{t_2}^{t_1,x,s,p,\pi(\delta)},S_{t_2}^{t_1,s},P_{t_2}^{t_1,p}\right) - v_0\left(t_2,x,S_{t_2}^{t_1,s},P_{t_2}^{t_1,p}\right)\right)\mathbb{I}_{\{ t_2 < \tau \}}\right|\right] \\
& \ \ \ + \mathbb{E}\left[\left|\left(v_0\left(t_2,x,S_{t_2}^{t_1,s},P_{t_2}^{t_1,p}\right)- v_0(t_2,x,s,P_{t_2}^{t_1,p})\right)\mathbb{I}_{\{ t_2 < \tau \}}\right|\right] \\
& \ \ \ + \mathbb{E}\left[\left|\left(v_0\left(t_2,x,s,P_{t_2}^{t_1,p}\right)- v_0(t_2,x,s,p)\right)\mathbb{I}_{\{ t_2 < \tau \}}\right|\right] + \mathbb{E}\left[\left|v_0(t_2,x,s,p)\mathbb{I}_{\{ t_2\geq \tau \}} \right|\right].
\end{split}
\]

As shown in Step 1, $\left|v_0\left(t_2,X_{t_2}^{t_1,x,s,p,\pi(\delta)},S_{t_2}^{t_1,s},P_{t_2}^{t_1,p}\right) - v_0\left(t_2,x,S_{t_2}^{t_1,s},P_{t_2}^{t_1,p}\right)\right|\leq K \left| X_{t_2}^{t_1,x,s,p,\pi(\delta)}-x \right|^\gamma$, and by 
 (\ref{strong solution property}), 
\[
\mathbb{E}\left[ \left| X_{t_2}^{t_1,x,s,p,\pi(\delta)}-x \right|^2 \right] \leq 2x^2 + 2\mathbb{E}\left[ \left| X_{t_2}^{t_1,x,s,p,\pi(\delta)}\right|^2\right] < \infty.
\]
Therefore, $\left| X_{t_2}^{t_1,x,s,p,\pi(\delta)}-x \right|^\gamma$ is  uniformly integrable, and we can exchange the order of expectation and limit to get
\begin{eqnarray*}
&& \lim_{t_2-t_1\rightarrow 0} \mathbb{E}\left[\left|v_0\left(t_2,X_{t_2}^{t_1,x,s,p,\pi(\delta)},S_{t_2}^{t_1,s},P_{t_2}^{t_1,p}\right) - v_0\left(t_2,x,S_{t_2}^{t_1,s},P_{t_2}^{t_1,p}\right)\right| \mathbb{I}_{\{ t_2 < \tau \}}\right]  \\
&\leq& \mathbb{E}\left[ K \lim_{t_2-t_1\rightarrow 0}\left| X_{t_2}^{t_1,x,s,p,\pi(\delta)}-x \right|^\gamma \right] = 0.
\end{eqnarray*}

The same argument can be applied to the term $\mathbb{E}\left[\left|v_0\left(t_2,x,S_{t_2}^{t_1,s},P_{t_2}^{t_1,p}\right) - v_0(t_2,x,s,P_{t_2}^{t_1,p})| \mathbb{I}_{\{ t_2 < \tau \}}\right|\right]$ based on Step 2 and $\mathbb{E}\left[\left|v_0\left(t_2,x,s,P_{t_2}^{t_1,p}\right) - v_0(t_2,x,s,p)| \mathbb{I}_{\{ t_2 < \tau \}}\right|\right]$ based on Step 3, and we conclude that 
\[
\begin{split}
& \ \ \ \lim_{t_2-t_1\rightarrow 0} \mathbb{E}\left[\left|v_0\left(t_2,x,S_{t_2}^{t_1,s},P_{t_2}^{t_1,p}\right) - v_0(t_2,x,s,P_{t_2}^{t_1,p})| \mathbb{I}_{\{ t_2 < \tau \}}\right|\right]  \leq \mathbb{E}\left[ Kx^\gamma \lim_{t_2-t_1\rightarrow 0}\left| S_{t_2}^{t_1,s}-s \right|^{\frac{1}{2}} \right] = 0
\end{split}
\]
and
\[
\begin{split}
& \ \ \ \lim_{t_2-t_1\rightarrow 0} \mathbb{E}\left[\left|v_0\left(t_2,x,s,P_{t_2}^{t_1,p}\right) - v_0(t_2,x,s,p)| \mathbb{I}_{\{ t_2 < \tau \}}\right|\right]  \leq \mathbb{E}\left[ Kx^\gamma \lim_{t_2-t_1\rightarrow 0}\left| P_{t_2}^{t_1,p}-p \right|^{\frac{1}{2}} \right] = 0.
\end{split}
\]

The last term $|v_0(t_2,x,s,p)|  \mathbb{P}( t_2 \geq \tau) \leq K(1+x^\gamma)  \mathbb{P}( t_2 \geq \tau)$, which tends to zero when $t_2-t_1\rightarrow 0$. Therefore 
\[ \mathbb{E}\left[\left|v_0\left(t_2,X_{t_2}^{t_1,x,s,p,\pi(\delta)},S_{t_2}^{t_1,s},P_{t_2}^{t_1,p}\right)\mathbb{I}_{\{ t_2 < \tau \}} - v_0(t_2,x,s,p)\right|\right]\rightarrow 0 \]
as $t_2-t_1\rightarrow 0$ and we finally have
\[
\lim_{t_2-t_1\rightarrow 0} |v_0(t_1,x,s,p)-v_0(t_2,x,s,p)| \leq \delta.
\]
Since $\delta$ is arbitrary, we conclude that $v_0(t,x,s,p)$ is continuous in $t$.
Combining Steps 1,2,3,4, we  conclude that $v_0(t,x,s,p)$ is continuous in $[0,T]\times [0,\infty)\times (0,\infty)^2$.
\end{proof}

\subsection{Proof of Theorem~\ref{pre-default verification theorem}}\label{2.8}
\begin{proof}
For $\forall \pi\in\mathcal{A}$, define a new process $w\left(u,X_u^{t,x,s,z,\pi},S_u^{t,s},\mathbb{H}_u\right):=\sum_{\bar{z}\in I} w_{\bar{z}}\left(u,X_u^{t,x,s,z,\pi},S_u^{t,s}\right)\mathbb{I}_{\{\mathbb{H}_u=\bar{z}\}}$ where $X_u^{t,x,s,z,\pi}$ denotes the wealth process starting with $X_t=x, S_t=s, \mathbb{H}_t=z$ associated with control process $\pi$, and $S_u^{t,s}$ denotes the prices of surviving stocks at time $u$ starting with $S_t=s$. 

As $w_{\bar{z}}$ is smooth in $(t,x,s)$ for $\forall \bar{z}\in I$, we can apply Ito\rq{}s formula to $w$ and get for any time $u\in [t,T]$ 
\begin{equation*}
w(u,X_u^{t,x,s,z\pi},S_u^{t,s},\mathbb{H}_u) = w_z(t,x,s) + \int_t^u \sum_{\bar{z}\in I} \mathcal{L}^\pi w_{\bar{z}}(\bar u,X_{\bar u}^\pi,S_{\bar u}) \mathbb{I}_{\{\mathbb{H}_{\bar u}=\bar{z}\}}  \ d\bar u + M_u - M_t,
\end{equation*}
where $\mathcal{L}^\pi w_{\bar{z}}$ is defined in (\ref{e2.4}), and $M$ is a local martingale defined by
\begin{eqnarray}
M_u &:= &\sum_{\bar{z}\in I}\left( \sum_{i\in I_{\bar z}} \int_t^u \sigma_i S_{\bar{u}}^i \frac{\partial w_{\bar{z}}}{\partial s_i} \mathbb{I}_{\{\mathbb{H}_{\bar u}=\bar{z}\}} dW_{\bar{u}}^i + \int_t^u \pi_{\bar u}^T \sigma X_{\bar u}^\pi \frac{\partial w_{\bar{z}}}{\partial x}\mathbb{I}_{\{\mathbb{H}_{\bar u}=\bar{z}\}} dW_{\bar u} \right) \nonumber \\
&& {} + \sum_{\bar{z}\in I}\left( \sum_{i\in I_{\bar z}} \int_t^u \left( w_{\bar{z}^i}\left(\bar{u},X_{\bar{u}-}\left(1-\sum_{j=1}^N L_{ji}\pi_{\bar{u}}^j\right),S_{\bar{u}-}^i\right) - w_{\bar{z}}(\bar{u},X_{\bar{u}-},S_{\bar{u}-})\right) \left(dH_{\bar{u}}^i - h^i_{\bar z}(S_{\bar{u}-})\mathbb{I}_{\{\mathbb{H}_{\bar u}=\bar{z}\}}d\bar{u}\right)\right) \nonumber
\end{eqnarray}

Since $w_{\bar{z}}$ satisfies the HJB equation (\ref{General pre-default HJB equation}), we have $\mathcal{L}^\pi w_{\bar{z}}\leq 0$. 
Define stopping times 
$$\widetilde{\tau}_n:=\inf\left\{u\geq t: \left|X_u^{t,x,s,z,\pi}-x\right| + \sum_{i\in I_z}\left|S_u^i-s_i\right| \geq n \right\}\wedge (T-1/n),$$
 then $M_{u\wedge\widetilde{\tau}_n}$ is a martingale due to the boundedness of control set $A$ and values and derivatives of $w_{\bar{z}}$. Letting $u=T$ and taking expectation on both sides, we have
\[ \mathbb{E}\left[w\left(\widetilde{\tau}_n,X_{\widetilde{\tau}_n}^\pi,S_{\widetilde{\tau}_n},\mathbb{H}_{\widetilde{\tau}_n}\right)\right] 
\leq w_z(t,x,s) \] 
with equality if $\pi=\widehat{\pi}$.
 Next we show that 
\begin{equation} \label{e4.7}
\lim_{n\rightarrow \infty} \mathbb{E}\left[w_{\bar{z}}\left( \widetilde{\tau}_n,X_{\widetilde{\tau}_n}^\pi,S_{\widetilde{\tau}_n}\right)\mathbb{I}_{\{\mathbb{H}_{\widetilde{\tau}_n}=\bar{z}\}}\right] =\mathbb{E}\left[w_{\bar{z}}\left(T,X_T^\pi,S_T\right)\mathbb{I}_{\{\mathbb{H}_T=\bar{z}\}}\right]
=\mathbb{E}\left[U\left(X_T^\pi\right)\mathbb{I}_{\{\mathbb{H}_T=\bar{z}\}}\right],
\end{equation}
for $\forall \bar{z}\in I$.
Since $|w_{\bar{z}}(t,x,s)| \leq K(1+x^\gamma)$, also noting (\ref{strong solution property}), we have
\[
\mathbb{E}\left[\left|w_{\bar{z}}\left( \widetilde{\tau}_n,X_{\widetilde{\tau}_n}^\pi,S_{\widetilde{\tau}_n}\right)\mathbb{I}_{\{\mathbb{H}_{\widetilde{\tau}_n}=\bar{z}\}}\right|^{\alpha}\right] \leq K\left(1 + \mathbb{E}\left[\left(X_{\widetilde{\tau}_n}^\pi\right)^{\alpha \gamma}\right] \right)\leq K(1+x^{\alpha \gamma})<\infty
\]
for  any $\alpha>1$.  Since $w_{\bar{z}}\left( \widetilde{\tau}_n,X_{\widetilde{\tau}_n}^\pi,S_{\widetilde{\tau}_n}\right)\mathbb{I}_{\{\mathbb{H}_{\widetilde{\tau}_n}=\bar{z}\}}$ is  uniformly integrable, we can take the limit under the  expectation to get (\ref{e4.7}). This shows that 
 $ \mathbb{E}[U(X_T^\pi)] \leq w_z(t,x,s)$ with equality if $\pi=\widehat{\pi}$. 
Furthermore, SDE (\ref{wealth process}) 
admits a unique strong solution by the assumption, therefore, $w_z$ coincides with the  value function $v_z$ and $\widehat{\pi}$ is the optimal control process.
\end{proof}

\subsection{Proof of Theorem~\ref{vis}}\label{2.11a} 
\begin{lemma} \label{Lemma for continuity of F in viscosity solution proof}
Denote by $R_z:=R_z(t,x,s)$  the following function
\begin{eqnarray}
R_z &:= &\sup_{\pi\in A}\left \{ \theta^T\pi x \frac{\partial w_z}{\partial x} + \frac{1}{2}\pi^T\Sigma\pi x^2 \frac{\partial^2 w_z}{\partial x^2} + \sum_{i\in I_z}\rho_i^T\sigma\pi\sigma_i xs_i \frac{\partial^2 w_z}{\partial x\partial s_i} + \sum_{i\in I_z} h^i_z(s)  w_{z^i}\left(t,x\left(1-\sum_{j=1}^N L_{ji}\pi^j\right),s^i\right)\right \}, \nonumber
\end{eqnarray}
where $w_z\in C^{1,2,\ldots,2}$ for $\forall z\in I$. Then $R_z$ is continuous in $(t,x,s)$.
\end{lemma}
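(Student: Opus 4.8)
The plan is to recognize $R_z$ as the supremum over the control $\pi$ of a function that is \emph{jointly} continuous in $(t,x,s,\pi)$, and then to invoke the standard fact that the supremum of a jointly continuous function over a fixed compact parameter set is continuous in the remaining variables. Concretely, write $R_z(t,x,s)=\sup_{\pi\in A}G_z(t,x,s,\pi)$ with
\[
G_z(t,x,s,\pi):=\theta^T\pi\, x\,\frac{\partial w_z}{\partial x}+\frac12\pi^T\Sigma\pi\, x^2\,\frac{\partial^2 w_z}{\partial x^2}+\sum_{i\in I_z}\rho_i^T\sigma\pi\,\sigma_i x s_i\,\frac{\partial^2 w_z}{\partial x\partial s_i}+\sum_{i\in I_z}h^i_z(s)\,w_{z^i}\!\left(t,x\Big(1-\sum_{j=1}^N L_{ji}\pi^j\Big),s^i\right),
\]
and, since $A$ is bounded and $G_z$ is continuous in $\pi$, observe that the supremum is unchanged if $A$ is replaced by its compact closure $\overline A$, which still satisfies $1-\sum_j L_{ji}\pi^j\ge\epsilon_A$ for every $i$.

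First I would verify that $G_z$ is continuous on $[0,T]\times(0,\infty)^{N_z+1}\times\overline A$. The coefficients $\partial w_z/\partial x$, $\partial^2 w_z/\partial x^2$ and $\partial^2 w_z/\partial x\partial s_i$ are continuous because $w_z\in C^{1,2,\dots,2}$; each intensity $h^i_z$ is continuous by Assumption~\ref{h assumption}; and $w_{z^i}\in C^{1,2,\dots,2}$ is continuous on $[0,T]\times(0,\infty)^{N_{z^i}+1}$. The one thing to check is that the inner argument of $w_{z^i}$ really lands in that open domain: $x\big(1-\sum_j L_{ji}\pi^j\big)\ge\epsilon_A x>0$ by the constraint defining $A$, and each component of $s^i$ equals $s_\ell(1-L_{\ell i})>0$ because $L_{\ell i}<1$ for $\ell\ne i$. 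Granting this, the composition $h^i_z(s)\,w_{z^i}(\cdots)$ is continuous and so is $G_z$.

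Next I would pass from joint continuity of $G_z$ to continuity of $R_z$ by the usual compactness argument. Fixing a base point $(t_0,x_0,s_0)$, I would take a closed ball $\overline B\subset[0,T]\times(0,\infty)^{N_z+1}$ around it; then $\overline B\times\overline A$ is compact, the images $\big(x(1-\sum_j L_{ji}\pi^j),s^i\big)$ over $\overline B\times\overline A$ stay in a compact subset of $(0,\infty)^{N_{z^i}+1}$, and $G_z$ is uniformly continuous on $\overline B\times\overline A$. For $\varepsilon>0$ choose $\eta>0$ so that $|G_z(t,x,s,\pi)-G_z(t',x',s',\pi)|<\varepsilon$ for all $\pi\in\overline A$ whenever $(t,x,s),(t',x',s')\in\overline B$ are within $\eta$; taking the supremum over $\pi$ in $G_z(t,x,s,\pi)\le G_z(t',x',s',\pi)+\varepsilon$ and symmetrizing gives $|R_z(t,x,s)-R_z(t',x',s')|\le\varepsilon$. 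Since the base point is arbitrary, $R_z$ is continuous on $[0,T]\times(0,\infty)^{N_z+1}$.

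I expect the only step needing real care to be the domain control: one must ensure that the argument of $w_{z^i}$ stays bounded away from the boundary of $(0,\infty)^{N_{z^i}+1}$, uniformly in $\pi\in\overline A$ and over a neighbourhood of the base point, since otherwise the composed term $h^i_z(s)\,w_{z^i}(\cdots)$ might fail to be (uniformly) continuous. This is exactly what the structural constraint $1-\sum_j L_{ji}\pi^j\ge\epsilon_A$ built into $A$ together with the strict inequalities $L_{\ell i}<1$ deliver; everything else is the routine ``supremum of a continuous family is continuous'' argument.
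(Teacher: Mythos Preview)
Your proposal is correct and follows essentially the same approach as the paper: both arguments exploit the joint continuity of the integrand $G_z(t,x,s,\pi)$ together with compactness of the control set to conclude continuity of the supremum. The paper carries this out explicitly by picking an $\epsilon$-optimal control at the base point and comparing values, whereas you invoke the standard ``supremum of a uniformly continuous family over a compact parameter set is continuous'' lemma; these are the same argument at different levels of abstraction, and your domain-control remark (ensuring the argument of $w_{z^i}$ stays in $(0,\infty)^{N_{z^i}+1}$ via $\epsilon_A$ and $L_{\ell i}<1$) is a useful point that the paper leaves implicit.
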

\begin{proof}
Let $z\in I$ and the point $(\bar{t},\bar{x},\bar{s})\in [0,T)\times (0,\infty)^{N_z+1}$ and $B_{\eta}(\bar{t},\bar{x},\bar{s})$ be the ball with center $(\bar{t},\bar{x},\bar{s})$ and radius $\eta$. By the definition of supremum function, for any $\delta>0$, there exists a control $\pi\in A$ such that
\begin{eqnarray}
R_z(\bar{t},\bar{x},\bar{s}) - \delta &\leq & \theta^T\pi \bar{x} \frac{\partial w_z}{\partial x}(\bar{t},\bar{x},\bar{s}) + \frac{1}{2}\pi^T\Sigma\pi \bar{x}^2 \frac{\partial^2 w_z}{\partial x^2}(\bar{t},\bar{x},\bar{s}) + \sum_{i\in I_z}\rho_i^T\sigma\pi\sigma_i \bar{x}\bar{s_i} \frac{\partial^2 w_z}{\partial x\partial s_i}(\bar{t},\bar{x},\bar{s}) \nonumber \\
&&{}  + \sum_{i\in I_z} h^i_z(\bar{s})  w_{z^i}\left(\bar{t},\bar{x}\left(1-\sum_{j=1}^N L_{ji}\pi^j\right),\bar{s}^i\right), \label{1}
\end{eqnarray}

For any point $(t,x,s)\in B_{\eta}(\bar{t},\bar{x},\bar{s})$, we have
\begin{eqnarray}
R_z(t,x,s) &\geq & \theta^T\pi x \frac{\partial w_z}{\partial x}(t,x,s) + \frac{1}{2}\pi^T\Sigma\pi x^2 \frac{\partial^2 w_z}{\partial x^2}(t,x,s) + \sum_{i\in I_z}\rho_i^T\sigma\pi\sigma_i x s_i \frac{\partial^2 w_z}{\partial x\partial s_i}(t,x,s) \nonumber \\
&&{}  + \sum_{i\in I_z} h^i_z(s)  w_{z^i}\left(t,x\left(1-\sum_{j=1}^N L_{ji}\pi^j\right),s^i\right), \label{2}
\end{eqnarray}

Subtracting (\ref{2}) from (\ref{1}), we have
\begin{eqnarray}
R_z(\bar{t},\bar{x},\bar{s}) - R_z(t,x,s) - \delta &\leq & \theta^T\pi \left( \bar{x} \frac{\partial w_z}{\partial x}(\bar{t},\bar{x},\bar{s}) - x \frac{\partial w_z}{\partial x}(t,x,s)\right) + \frac{1}{2}\pi^T\Sigma\pi \left(\bar{x}^2 \frac{\partial^2 w_z}{\partial x^2}(\bar{t},\bar{x},\bar{s}) - x^2 \frac{\partial^2 w_z}{\partial x^2}(t,x,s)\right) \nonumber \\
&& {} + \sum_{i\in I_z}\rho_i^T\sigma\pi\sigma_i \left(\bar{x}\bar{s_i} \frac{\partial^2 w_z}{\partial x\partial s_i}(\bar{t},\bar{x},\bar{s}) - x s_i \frac{\partial^2 w_z}{\partial x\partial s_i}(t,x,s) \right) \nonumber \\
&&{}  + \sum_{i\in I_z} \left(h^i_z(\bar{s})  w_{z^i}\left(\bar{t},\bar{x}\left(1-\sum_{j=1}^N L_{ji}\pi^j\right),\bar{s}^i\right) - h^i_z(s)  w_{z^i}\left(t,x\left(1-\sum_{j=1}^N L_{ji}\pi^j\right),s^i\right) \right). \nonumber
\end{eqnarray}

Taking the limit superior and then letting $\delta$ tend to 0, we have
\begin{equation} \label{4}
R_z(\bar{t},\bar{x},\bar{s}) \leq \liminf_{(t,x,s)\rightarrow (\bar{t},\bar{x},\bar{s})} R_z(t,x,s).
\end{equation}
Similarly, using the smoothness of $w_z$ and boundedness of $\pi$ and $h_z^i$ for $\forall z\in I$ and $i\in \{1,\ldots, N\}$, we can get
\begin{equation} \label{5}
R_z(\bar{t},\bar{x},\bar{s}) \geq \limsup_{(t,x,s)\rightarrow (\bar{t},\bar{x},\bar{s})} R_z(t,x,s).
\end{equation}
Inequalities (\ref{4}) and (\ref{5}) imply that $R_z(t,x,s)$ is continuous in $(t,x,s)$ for $\forall z\in I$.
\end{proof}

Combining Lemma~\ref{Lemma for continuity of F in viscosity solution proof} and the smoothness of $w$, we conclude that $F_z$ given in Definition \ref{viscosity} is continuous in $(t,x,s)$. Based on this result, we prove the value function $v:=(v_z)_{z\in I}$ is a viscosity solution to the PDE system (\ref{General pre-default HJB equation}).

\begin{proposition} \label{viscosity solution}
The value function $v:=(v_z)_{z\in I}$ is a  viscosity supersolution to equation (\ref{General pre-default HJB equation}) on $[0,T)\times (0,\infty)^{N+1}$.
\end{proposition}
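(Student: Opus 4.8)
The plan is to obtain the supersolution inequality of Definition~\ref{viscosity}(ii) from the dynamic programming principle (DPP) by a standard test‑function/Dynkin argument, the only model‑specific ingredient being the continuity of $(t,x,s)\mapsto\mathcal{L}^\pi\varphi(t,x,s)$ for smooth $\varphi$, which is immediate from the smoothness of $\varphi$ and the continuity of $h^i_z$ (cf.\ Lemma~\ref{Lemma for continuity of F in viscosity solution proof}). Fix $\bar z\in I$, a point $(\bar t,\bar x,\bar s)\in[0,T)\times(0,\infty)^{N_{\bar z}+1}$, and a test tuple $\varphi=(\varphi_z)_{z\in I}\in C^{1,2,\ldots,2}$ with $(v_{\bar z})_*(\bar t,\bar x,\bar s)=\varphi_{\bar z}(\bar t,\bar x,\bar s)$ and $(v_z)_*\ge\varphi_z$ everywhere, for every $z\in I$. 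Since $v_z\ge(v_z)_*\ge\varphi_z$ for all $z$, it suffices to prove $\mathcal{L}^\pi\varphi_{\bar z}(\bar t,\bar x,\bar s)\le0$ for every constant $\pi\in A$; taking $\sup_{\pi\in A}$ then gives $F_{\bar z}(\bar t,\bar x,\bar s,\varphi,\nabla\varphi_{\bar z},\nabla^2\varphi_{\bar z})\ge0$.

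First I would argue by contradiction: suppose $\mathcal{L}^\pi\varphi_{\bar z}(\bar t,\bar x,\bar s)>0$ for some constant $\pi\in A$, and set $c:=\tfrac12\mathcal{L}^\pi\varphi_{\bar z}(\bar t,\bar x,\bar s)>0$. By continuity, pick $\eta>0$ with $\overline{B_\eta(\bar t,\bar x,\bar s)}\subset[0,T)\times(0,\infty)^{N_{\bar z}+1}$ and $\mathcal{L}^\pi\varphi_{\bar z}\ge c$ on $B_\eta(\bar t,\bar x,\bar s)$. Choose $(t_n,x_n,s_n)\to(\bar t,\bar x,\bar s)$ with $v_{\bar z}(t_n,x_n,s_n)\to(v_{\bar z})_*(\bar t,\bar x,\bar s)$, let $(X_u,S_u)$ be the state started from $(t_n,x_n,s_n,\bar z)$ under the constant control $\pi$, and for a fixed small $h>0$ set
\[
\vartheta_n:=(t_n+h)\wedge\tau^{(n)}\wedge\inf\{u\ge t_n:(u,X_u,S_u)\notin B_\eta(\bar t,\bar x,\bar s)\},
\]
with $\tau^{(n)}$ the first default time after $t_n$. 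The (weak) DPP, applied with the continuous lower test and using $\varphi_z\le(v_z)_*\le v_z$, gives $v_{\bar z}(t_n,x_n,s_n)\ge\mathbb{E}[\varphi_{\mathbb{H}_{\vartheta_n}}(\vartheta_n,X_{\vartheta_n},S_{\vartheta_n})]$, where the current state $\bar z$ and each post‑default state $\bar z^{i}$ enter. Applying Dynkin's formula to $u\mapsto\varphi_{\mathbb{H}_u}(u,X_u,S_u)$ on $[t_n,\vartheta_n]$ — in which the compensator of the default indicators converts the jumps into exactly the term $\sum_{i\in I_{\bar z}}h^i_{\bar z}(s)\big(\varphi_{\bar z^{i}}(t,x(1-\sum_jL_{ji}\pi^j),s^i)-\varphi_{\bar z}\big)$ of $\mathcal{L}^\pi\varphi_{\bar z}$, and whose local‑martingale part is a true martingale up to $\vartheta_n$ by boundedness of $\pi$, of the intensities, and of $\varphi,\nabla\varphi,\nabla^2\varphi$ on $\overline{B_\eta}$ — yields
\[
v_{\bar z}(t_n,x_n,s_n)-\varphi_{\bar z}(t_n,x_n,s_n)\ \ge\ \mathbb{E}\!\left[\int_{t_n}^{\vartheta_n}\mathcal{L}^\pi\varphi_{\bar z}(u,X_u,S_u)\,du\right]\ \ge\ c\,\mathbb{E}[\vartheta_n-t_n].
\]
Because the intensities are bounded and $(\bar t,\bar x,\bar s)$ is interior to $B_\eta$, for all large $n$ the probability of no default and no exit before $t_n+h$ is bounded below, so $\mathbb{E}[\vartheta_n-t_n]\ge c'h$ with $c'>0$ independent of $n$. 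Letting $n\to\infty$, the left side tends to $(v_{\bar z})_*(\bar t,\bar x,\bar s)-\varphi_{\bar z}(\bar t,\bar x,\bar s)=0$ while the right side is $\ge cc'h>0$, a contradiction; hence $\mathcal{L}^\pi\varphi_{\bar z}(\bar t,\bar x,\bar s)\le0$ for all constant $\pi\in A$.

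The steps I expect to need the most care are three. The precise form of the DPP: to sidestep measurability/regularity of $v$ I would use the weak (Bouchard--Touzi type) formulation, which directly produces the inequality against a continuous lower test function, invoking the growth bound $|v_z|\le K(1+x^\gamma)$ together with (\ref{strong solution property}) for the requisite integrability. The bookkeeping at a default, where the state space drops from dimension $N_{\bar z}+1$ to $N_{\bar z^{i}}+1=N_{\bar z}$, $X$ jumps to $X(1-\sum_jL_{ji}\pi^j)$ and $S$ to $S^i$: here $L_{ij}<1$ and $1-\sum_jL_{ji}\pi^j\ge\epsilon_A$ are what keep the post‑jump state in the open set $(0,\infty)^{N_{\bar z^{i}}+1}$ where $\varphi_{\bar z^{i}}$ is smooth and $v_{\bar z^{i}}\ge\varphi_{\bar z^{i}}$ may be invoked. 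And the uniform lower bound $\mathbb{E}[\vartheta_n-t_n]\ge c'h$, which reduces to the boundedness of the default intensities and a standard exit‑time estimate for a nondegenerate diffusion from a ball. The companion viscosity subsolution statement (not part of this proposition) would follow by the mirror argument, replacing the fixed $\pi$ by an $\varepsilon$‑optimal control and reversing the inequalities, and it is there that the full strength of Lemma~\ref{Lemma for continuity of F in viscosity solution proof} (continuity of the supremum over $\pi$) is actually needed.
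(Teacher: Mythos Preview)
Your proposal is correct and takes essentially the same approach as the paper: both invoke the weak DPP of Bouchard--Touzi with a constant control $\pi\in A$, approximate $(v_{\bar z})_*(\bar t,\bar x,\bar s)$ by a sequence $(t_n,x_n,s_n)$, apply Dynkin's formula up to a stopping time combining a small time step, the exit from a ball, and the first default, and use $(v_z)_*\ge\varphi_z$ to conclude $-\mathcal{L}^\pi\varphi_{\bar z}(\bar t,\bar x,\bar s)\ge0$. The only cosmetic difference is that the paper lets the time step $h_m\to0$ with $\gamma_m/h_m\to0$ and passes to the limit directly, whereas you fix $h>0$ small and argue by contradiction via the lower bound $\mathbb{E}[\vartheta_n-t_n]\ge c'h$; both variants are standard.
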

\begin{proof}
Let $\bar{z}\in I$, $(\bar{t},\bar{x},\bar{s})\in [0,T)\times (0,\infty)^{N_{\bar z}+1}$ and $\varphi:=(\varphi_z)_{z\in I}\in C^{1,2,...,2}\left([0,T)\times (0,\infty)^{N_z+1}\right)$ be tuple of test functions such that
\begin{equation} \label{super}
0=\left((v_{\bar{z}})_{*}-\varphi_{\bar{z}}\right)(\bar{t},\bar{x},\bar{s}) = \min_{[0,T)\times (0,\infty)^{N_{\bar z}+1}} \left((v_{\bar z})_{*}-\varphi_{\bar z}\right)(t,x,s),
\end{equation}
and $(v_z)_* \geq \varphi_z$ for $\forall z\in I$ on $[0,T)\times (0,\infty)^{N_z+1}$.

By definition of $(v_{\bar{z}})_{*}$, there exists a sequence $(t_m,x_m,s_m)$ in $[0,T)\times (0,\infty)^{N_{\bar z}+1}$ such that 
\[ 
(t_m,x_m,s_m)\rightarrow (\bar{t},\bar{x},\bar{s}) \ \ \text{and} \ \ v_{\bar{z}}(t_m,x_m,s_m)\rightarrow (v_{\bar{z}})_{*}(\bar{t},\bar{x},\bar{s}),
\]
when $m$ goes to infinity. By the continuity of $\varphi_{\bar z}$ and by (\ref{super}) we also have that
\[
\gamma_m:=v_{\bar{z}}(t_m,x_m,s_m)-\varphi_{\bar{z}}(t_m,x_m,s_m) \rightarrow 0,
\]
when $m$ goes to infinity.

Let $\pi\in \mathcal{A}$ be a constant control process and $B_{\eta}(x_m,s_m)\in (0,\infty)^{N_{\bar z}+1}$ be the ball with center $(x_m,s_m)$ and radius $\eta>0$. Note that when $m$ is large enough, $(x_m,s_m)\in B_{\eta}(\bar{x},\bar{s})$, thus $\forall (x,s) \in B_{\eta}(x_m,s_m)$, we have $(x,s)\in B_{2\eta}(\bar{x},\bar{s})$. We denote by $X_u^{t_m,x_m}$ the associated controlled wealth process. Let $\tau_m^{\pi}$ be the stopping time given by
\[
\tau_m^{\pi}:=\inf \left \{ u\in [t_m,T):\left(X_u^{t_m,x_m},S_u^{t_m,s_m}\right) \notin B_{\eta}(x_m,s_m) \right \}.
\]
Let $(h_m)$ be a strictly positive sequence such that
\[
h_m \rightarrow 0 \ \ \text{and} \ \ \frac{\gamma_m}{h_m} \rightarrow 0,
\]
when $m$ goes to infinity. Then we can define a stopping time $\theta_m$ give by $\theta_m:=\tau_m^{\pi}\wedge (t_m+h_m) \wedge \widetilde{\tau}_m$ where $\widetilde{\tau}_m$ is the first default time of the surviving stocks, starting from $t_m$.

Next we use the weak dynamic programming principle (weak-DPP) proved in Bouchard and Touzi (2011), that is,
\[
v_{\bar{z}}(t,x,s)\geq \mathbb{E}\left[\sum_{z\in I}(v_z)_{*}\left( \theta, X_{\theta}^{t,x,s,\bar{z}},S_{\theta}^{t,s}\right)\mathbb{I}_{\{\mathbb{H}_{\theta}=z\}}\right],
\]
for any $\mathcal{G}$-measurable stopping time $\theta \in [t,T]$ such that $X_{\theta}$ and $S_{\theta}$ are $\mathbb{L}^\infty$-bounded.

Since under stopping time $\theta_m$, the processes $S$ and $X$ are both bounded, we can apply above weak dynamic programming principle (weak-DPP) for $v_{\bar{z}}(t_m,x_m,s_m)$ to $\theta_m$ and get
\[
v_{\bar{z}}(t_m,x_m,s_m)\geq \mathbb{E}\left[\sum_{z\in I}(v_z)_{*}\left( \theta_m, X_{\theta_m}^{t_m,x_m,s_m,\bar{z}},S_{\theta_m}^{t_m,s_m}\right)\mathbb{I}_{\{\mathbb{H}_{\theta_m}=z\}}\right].
\]

Equation (\ref{super}) implies $(v_z)_{*}\geq \varphi_z$ for $\forall z\in I$, thus
\[
\varphi_{\bar{z}}(t_m,x_m,s_m) + \gamma_m\geq \mathbb{E}\left[\sum_{z\in I}\varphi_z\left( \theta_m, X_{\theta_m}^{t_m,x_m,s_m,\bar{z}},S_{\theta_m}^{t_m,s_m}\right)\mathbb{I}_{\{\mathbb{H}_{\theta_m}=z\}}\right].
\]
Applying Ito's formula to the whole term in bracket, we obtain
\begin{equation} \label{super1}
\frac{\gamma_m}{h_m} - \mathbb{E}\left[ \frac{1}{h_m}\int_{t_m}^{\theta_m} \sum_{z\in I} \mathcal{L}^\pi \varphi_{z}(u,X_{u}^\pi,S_{u}) \mathbb{I}_{\{\mathbb{H}_u=z\}} du \right] \geq 0
\end{equation}
after noting that the stochastic integral term cancels out by taking expectations since the integrand is bounded. Note that $\mathcal{L}^\pi\varphi_z\left(u,X_u^{t_m,x_m},S_u^{t_m,s_m}\right)$ is defined the same as (\ref{e2.4}).

Next we investigate the stopping time $\theta_m$ when $m$ is large enough. Firstly, for the stopping time $\tau_m^{\pi}$, denoting $E_m:=\left\{\widetilde{\tau}_m>t_m+h_m\right\}$, we have
\[
\begin{split}
& \ \ \ \ \mathbb{P}\left(\tau_m^{\pi}\leq t_m+h_m \mid E_m \right) \\
& = \mathbb{P}\left(\sup_{t\in [t_m,t_m+h_m]} \left(\left|X_t^{t_m,x_m}-x_m\right|^2 + \sum_{i\in I_{\bar{z}}}\left|S_t^{i,t_m,s_m}-s_{m,i}\right| ^2 \right)\geq \eta^2 \mid E_m \right) \\
&\leq \mathbb{P}\left(\sup_{t\in [t_m,t_m+h_m]} \left|X_t^{t_m,x_m}-x_m\right|^2 \geq \frac{\eta^2}{N_{\bar z}+1} \mid E_m \right) + \sum_{i\in I_{\bar z}} \mathbb{P}\left(\sup_{t\in [t_m,t_m+h_m]} \left| S_t^{i,t_m,s_m}-s_{m,i}\right|^2 \geq \frac{\eta^2}{N_{\bar z}+1} \mid E_m \right)  \\
&\leq \frac{N_{\bar z}+1}{\eta^2}\left ( \mathbb{E}\left[\sup_{t\in [t_m,t_m+h_m]} \left|X_t^{t_m,x_m}-x_m\right|^2 \mid E_m \right] + \sum_{i\in I_{\bar z}}\mathbb{E}\left[\sup_{t\in [t_m,t_m+h_m]} \left| S_t^{i,t_m,s_m}-s_{m,i}\right|^2 \mid E_m \right] \right).   
\end{split}
\]
By Pham (2009), Page 67,  each term in the bracket converges to zero as $m\rightarrow \infty$, which gives
\[
\lim_{m\rightarrow \infty} \mathbb{P}\left(\tau_m^{\pi}\leq t_m+h_m \mid E_m\right) = 0.
\]
By definition of conversion time $\widetilde{\tau}_m$, we have
\[
\mathbb{P}\left(E_m^C\right) = 1 - \mathbb{E}\left[e^{-\int_{t_m}^{t_m+h_m} \sum_{i\in I_{\bar z}} h_{\bar z}^i(S_u) du} \right] \leq 1-e^{-K h_m}
\]
due to the boundedness of intensity function $h_{\bar z}^i$. Thus
\[
\lim_{m\rightarrow \infty} \mathbb{P}(E_m^C) = 0.
\]
Finally for the stopping time $\tau_m^{\pi}\wedge \widetilde{\tau}_m$, we have
\begin{eqnarray}
\mathbb{P}\left(\tau_m^{\pi}\wedge \widetilde{\tau}_m\leq t_m+h_m\right) &\leq & \mathbb{P}\left(\tau_m^{\pi}\leq t_m+h_m\right) + \mathbb{P}\left(E_m^C\right) \nonumber \\
&= & \mathbb{P}\left(\tau_m^{\pi}\leq t_m+h_m, E_m\right) + \mathbb{P}\left(\tau_m^{\pi}\leq t_m+h_m, E_m^C\right) + \mathbb{P}\left(E_m\right) \nonumber \\
&\leq & \mathbb{P}\left(\tau_m^{\pi}\leq t_m+h_m \mid E_m\right) + 2\mathbb{P}\left(E_m^C\right).
\end{eqnarray}
Combining above results, we get
\[
\lim_{m\rightarrow \infty} \mathbb{P}\left(\tau_m^{\pi}\wedge \widetilde{\tau}_m\leq t_m+h_m\right) = 0.
\]

We now estimate 
\[
\begin{split}
-\frac{\gamma_m}{h_m} & \leq 
\mathbb{E}\left[ \frac{1}{h_m}\int_{t_m}^{\theta_m} - \sum_{z\in I} \mathcal{L}^\pi \varphi_{z}(u,X_{u}^\pi,S_{u}) \mathbb{I}_{\{\mathbb{H}_u=z\}} du \right] \\
& \leq 
\mathbb{E}\left[ \frac{1}{h_m}\int_{t_m}^{t_m+h_m} - \sum_{z\in I} \mathcal{L}^\pi \varphi_{z}(u,X_{u}^\pi,S_{u}) \mathbb{I}_{\{\mathbb{H}_u=z\}} du \mid \tau_m^{\pi}\wedge \widetilde{\tau}_m > t_m+h_m \right] \mathbb{P}\left(\tau_m^{\pi}\wedge \widetilde{\tau}_m> t_m+h_m\right) \\
& \ \ \ + \mathbb{E}\left[ \frac{1}{h_m}\int_{t_m}^{\tau_m^{\pi}\wedge \widetilde{\tau}_m} - \sum_{z\in I} \mathcal{L}^\pi \varphi_{z}(u,X_{u}^\pi,S_{u}) \mathbb{I}_{\{\mathbb{H}_u=z\}} du \mid \tau_m^{\pi}\wedge \widetilde{\tau}_m\leq t_m+h_m \right] \mathbb{P}\left(\tau_m^{\pi}\wedge \widetilde{\tau}_m\leq t_m+h_m\right) \\
& \leq \mathbb{E}\left[ \frac{1}{h_m}\int_{t_m}^{t_m+h_m} - \mathcal{L}^\pi\varphi_{\bar{z}}\left(u,X_u^{t_m,x_m},S_u^{t_m,s_m}\right)du \mid \tau_m^{\pi}\wedge \widetilde{\tau}_m > t_m+h_m \right] + K\mathbb{P}\left(\tau_m^{\pi}\wedge \widetilde{\tau}_m\leq t_m+h_m\right).
\end{split}
\]
By the mean value theorem and dominated convergence theorem, taking limit on both sides of the inequality, we have
\[
-\mathcal{L}^\pi\varphi_{\bar{z}}(\bar{t},\bar{x},\bar{s})\geq 0,
\]
which implies 
\[
F_{\bar{z}}\left(\bar{t},\bar{x},\bar{s},\varphi,\nabla_{(t,x,s)}\varphi_{\bar{z}},\nabla^2_{(x,s)}\varphi_{\bar{z}}\right ) \geq 0,
\]
due to the arbitrariness of $\pi\in A$. 
\end{proof}

\begin{proposition} \label{viscosity solution}
The value function $v:=(v_z)_{z\in I}$ is a  viscosity subsolution to equation (\ref{General pre-default HJB equation}) on $[0,T)\times (0,\infty)^{N+1}$.
\end{proposition}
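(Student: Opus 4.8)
\emph{Proof proposal.} The plan is to mirror the proof of the supersolution property, making three changes: the lower-semicontinuous envelope $(v_z)_{*}$ is replaced by the upper-semicontinuous envelope $(v_z)^{*}$, the weak dynamic programming principle is used in the opposite (subsolution) direction, and the localizing horizon is kept fixed instead of sent to zero. So I would fix $\bar z\in I$, a point $(\bar t,\bar x,\bar s)\in[0,T)\times(0,\infty)^{N_{\bar z}+1}$, and a tuple of test functions $\varphi=(\varphi_z)_{z\in I}\in C^{1,2,\ldots,2}\big([0,T)\times(0,\infty)^{N_z+1}\big)$ with $0=\big((v_{\bar z})^{*}-\varphi_{\bar z}\big)(\bar t,\bar x,\bar s)=\max\big((v_{\bar z})^{*}-\varphi_{\bar z}\big)$ and $(v_z)^{*}\le\varphi_z$ for all $z\in I$; the goal is $F_{\bar z}\big(\bar t,\bar x,\bar s,\varphi,\nabla_{(t,x,s)}\varphi_{\bar z},\nabla^2_{(x,s)}\varphi_{\bar z}\big)\le0$. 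Arguing by contradiction, suppose $\sup_{\pi\in A}\mathcal L^\pi\varphi_{\bar z}(\bar t,\bar x,\bar s)=:-2\varepsilon<0$. Applying Lemma~\ref{Lemma for continuity of F in viscosity solution proof} with $\varphi_z$ in place of $w_z$ (the remaining terms of $\mathcal L^\pi\varphi_z$ carry no $\pi$ and are continuous), the map $(t,x,s)\mapsto\sup_{\pi\in A}\mathcal L^\pi\varphi_{\bar z}(t,x,s)$ is continuous, so there is $\eta>0$ with $\{|t-\bar t|\le2\eta\}\times\overline{B}_{2\eta}(\bar x,\bar s)\subset[0,T)\times(0,\infty)^{N_{\bar z}+1}$ and $\mathcal L^\pi\varphi_{\bar z}(t,x,s)\le-\varepsilon$ for all $\pi\in A$ and all $(t,x,s)$ in that set.

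Next I would choose $(t_m,x_m,s_m)\to(\bar t,\bar x,\bar s)$ with $v_{\bar z}(t_m,x_m,s_m)\to(v_{\bar z})^{*}(\bar t,\bar x,\bar s)=\varphi_{\bar z}(\bar t,\bar x,\bar s)$, so that $\beta_m:=\varphi_{\bar z}(t_m,x_m,s_m)-v_{\bar z}(t_m,x_m,s_m)\to0$ by continuity of $\varphi_{\bar z}$. Fix $h\in(0,\eta)$ (specified below). For $\pi\in\mathcal A$ let $\tau_m^\pi$ be the first exit time of $\big(X_u^{t_m,x_m,s_m,\bar z,\pi},S_u^{t_m,s_m}\big)$ from $B_\eta(x_m,s_m)$, let $\widetilde\tau_m$ be the first default time of the surviving stocks after $t_m$, and set $\theta_m^\pi:=\tau_m^\pi\wedge(t_m+h)\wedge\widetilde\tau_m$. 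By the weak dynamic programming principle of Bouchard and Touzi (2011) applied in the subsolution direction, for each $m$ there is $\pi_m\in\mathcal A$ such that, writing $\theta_m:=\theta_m^{\pi_m}$,
$$v_{\bar z}(t_m,x_m,s_m)\le\mathbb E\Big[\sum_{z\in I}(v_z)^{*}\big(\theta_m,X_{\theta_m}^{t_m,x_m,s_m,\bar z,\pi_m},S_{\theta_m}^{t_m,s_m}\big)\mathbb I_{\{\mathbb H_{\theta_m}=z\}}\Big]+\tfrac1m.$$
Using $(v_z)^{*}\le\varphi_z$ and then applying It\^o's formula to $u\mapsto\sum_{z\in I}\varphi_z(u,X_u,S_u)\mathbb I_{\{\mathbb H_u=z\}}$ on $[t_m,\theta_m]$, exactly as in the proof of Theorem~\ref{pre-default verification theorem} (the stochastic integral is a true martingale because $(X,S)$ stays in the compact set $\overline{B}_\eta(x_m,s_m)$ on which $\varphi$ and its derivatives are bounded, and $\mathbb H_u=\bar z$ on $[t_m,\theta_m)$, so the drift term equals $\int_{t_m}^{\theta_m}\mathcal L^{\pi_m}\varphi_{\bar z}(u,X_u,S_u)\,du\le-\varepsilon(\theta_m-t_m)$ once $m$ is large enough that the relevant trajectories stay in the localization set), one gets
$$v_{\bar z}(t_m,x_m,s_m)\le\varphi_{\bar z}(t_m,x_m,s_m)-\varepsilon\,\mathbb E[\theta_m-t_m]+\tfrac1m,$$
i.e.\ $\varepsilon\,\mathbb E[\theta_m-t_m]\le\beta_m+\tfrac1m\to0$.

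It then remains to bound $\mathbb E[\theta_m-t_m]$ away from $0$ for a suitably small \emph{fixed} $h$. Since $A$ is bounded, the drift and diffusion coefficients of $(X,S)$ are bounded uniformly in $\pi\in A$ on the compact set $\overline{B}_{2\eta}(\bar x,\bar s)$, so a standard moment estimate (as on Page~67 of Pham (2009)) gives $\mathbb E\big[\sup_{t_m\le u\le t_m+h}(|X_u-x_m|^2+|S_u-s_m|^2)\big]\le Ch$ with $C$ independent of $m$ and of $\pi_m$, hence $\mathbb P(\tau_m^{\pi_m}\le t_m+h)\le Ch/\eta^2$; moreover $\mathbb P(\widetilde\tau_m\le t_m+h)\le1-e^{-Kh}$ by the boundedness of the intensities in Assumption~\ref{h assumption}. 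Choosing $h$ so small that $Ch/\eta^2+(1-e^{-Kh})\le\tfrac12$ yields $\mathbb P(\theta_m=t_m+h)\ge\tfrac12$, hence $\mathbb E[\theta_m-t_m]\ge h/2>0$ for all large $m$, contradicting $\varepsilon\,\mathbb E[\theta_m-t_m]\to0$. Therefore $F_{\bar z}\big(\bar t,\bar x,\bar s,\varphi,\nabla_{(t,x,s)}\varphi_{\bar z},\nabla^2_{(x,s)}\varphi_{\bar z}\big)\le0$, so $v$ is a viscosity subsolution; combined with the preceding proposition this proves Theorem~\ref{vis}.

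The step I expect to be the main obstacle is making the localization uniform in the control: one needs $\sup_{\pi\in A}\mathcal L^\pi\varphi_{\bar z}$ to remain strictly negative on a full space-time neighbourhood, which is precisely where Lemma~\ref{Lemma for continuity of F in viscosity solution proof} (continuity of the Hamiltonian after the supremum over the compact set $A$) is essential, and, more delicately, the lower bound $\mathbb P(\theta_m=t_m+h)\ge\tfrac12$ must hold with constants independent both of $m$ and of the near-optimal controls $\pi_m$ produced by the weak DPP, which forces one to use coefficient bounds uniform over all of $A$ rather than along a single control. Some care is also needed in invoking the weak DPP on the subsolution side, since there the stopping time is allowed to depend on the control and one must check that $(X_{\theta_m},S_{\theta_m})$ is $\mathbb L^\infty$-bounded (it stays in $\overline{B}_\eta(x_m,s_m)$ up to the single possible default, after which the wealth lies in $[\epsilon_A x,x]$ and the surviving prices are scaled by factors in $(0,1)$); unlike some presentations, no strict-maximum modification of $\varphi_{\bar z}$ is required here because the contradiction is extracted from the expected time spent in the ball rather than from a separation of $\varphi$ and $v^{*}$ on the lateral boundary.
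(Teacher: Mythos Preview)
Your argument is correct and runs along the same contradiction/weak-DPP/It\^o template as the paper, but with one genuine technical twist. The paper takes a vanishing horizon $h_m\to0$ chosen so that $\gamma_m/h_m\to0$, divides the It\^o identity by $h_m$, shows $\mathbb P(\tau_m^\pi\wedge\widetilde\tau_m\le t_m+h_m)\to0$, and obtains the contradiction $0-\delta/2\le-\delta$ in the limit. You instead keep $h$ \emph{fixed} and small and extract the contradiction from a uniform lower bound $\mathbb E[\theta_m-t_m]\ge h/2$. Your route avoids the auxiliary sequence $(h_m)$ and the mean-value/normalisation step, but in exchange you must make the exit-probability estimate $\mathbb P(\tau_m^{\pi_m}\le t_m+h)\le Ch/\eta^2$ uniform over all admissible controls $\pi_m$; you rightly flag this as the crux, and in fact the paper glosses over the same uniformity issue with the phrase ``by the similar technique as the supersolution proof'' (where the control was a fixed constant), so your version is arguably more careful here.

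One small wrinkle: the moment estimate $\mathbb E\big[\sup_{[t_m,t_m+h]}(|X_u-x_m|^2+|S_u-s_m|^2)\big]\le Ch$ as written ignores the possible jump at $\widetilde\tau_m$. It holds for the pre-default continuous dynamics, so the clean way is to bound $\mathbb P(\tau_m^{\pi_m}\le t_m+h,\ \widetilde\tau_m>t_m+h)\le Ch/\eta^2$ first (where the process is purely continuous and the coefficients are bounded uniformly over $A$ on $\overline B_{2\eta}(\bar x,\bar s)$) and then absorb the remaining event into $\mathbb P(\widetilde\tau_m\le t_m+h)\le1-e^{-Kh}$; this only changes the constant in front of the default term and leaves your choice of $h$ and the conclusion intact.
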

\begin{proof}
Let $\bar{z}\in I$, $(\bar{t},\bar{x},\bar{s})\in [0,T)\times (0,\infty)^{N_{\bar z}+1}$ and $\varphi:=(\varphi_z)_{z\in I}\in C^{1,2,...,2}\left([0,T)\times (0,\infty)^{N_z+1}\right)$ be tuple of test functions such that
\begin{equation} \label{sub}
0=\left((v_{\bar{z}})^{*}-\varphi_{\bar{z}}\right)(\bar{t},\bar{x},\bar{s}) = \max_{[0,T)\times (0,\infty)^{N_{\bar z}+1}} \left((v_{\bar z})^{*}-\varphi_{\bar z}\right)(t,x,s),
\end{equation}
and $(v_z)^* \leq \varphi_z$ for $\forall z\in I$ on $[0,T)\times (0,\infty)^{N_z+1}$.

We prove the result by contradiction. Assume on the contrary that
\[
F_{\bar z}\left(\bar{t},\bar{x},\bar{s},\varphi,\nabla_{(t,x,s)}\varphi_{\bar z},\nabla^2_{(x,s)}\varphi_{\bar z}\right ) > 0
\]
Then by the continuity of $F_{\bar z}$, there exists $\delta>0$ and $\eta>0$ such that
\[
F_{\bar z}\left(t,x,s,\varphi,\nabla_{(t,x,s)}\varphi,\nabla^2_{(x,s)}\varphi\right ) = -\sup_{\pi\in A}\mathcal{L}^\pi\varphi_{\bar z}(t,x,s) > \delta
\]
for $(t,x,s)\in B_{\eta}(\bar{t},\bar{x},\bar{s})$.
By definition of $(v_{\bar z})^{*}$, there exists a sequence $(t_m,x_m,s_m)$ taking values in $B_{\frac{\eta}{2}}(\bar{t},\bar{x},\bar{s})$ such that 
\[ 
(t_m,x_m,s_m)\rightarrow (\bar{t},\bar{x},\bar{s}) \ \ \text{and} \ \ v_{\bar z}(t_m,x_m,s_m)\rightarrow (v_{\bar z})^{*}(\bar{t},\bar{x},\bar{s}),
\]
when $m$ goes to infinity. By the continuity of $\varphi_{\bar z}$ and by (\ref{sub}) we also have that
\[
\gamma_m:=v_{\bar z}(t_m,x_m,s_m)-\varphi_{\bar z}(t_m,x_m,s_m) \rightarrow 0,
\]
when $m$ goes to infinity.

We denote by $X_u^{t_m,x_m,\pi}$ the controlled wealth process associated with control process $\pi\in \mathcal{A}$. Let $\tau_m^{\pi^m}$ be the stopping time given by
\[
\tau_m^{\pi}:=\inf \left \{ u\in [t_m,T):\left(u,X_u^{t_m,x_m,\pi},S_u^{t_m,s_m}\right) \notin B_{\frac{\eta}{2}}(t_m,x_m,s_m) \right \}.
\]
Let $(h_m)$ be a strictly positive sequence such that
\[
h_m \rightarrow 0 \ \ \text{and} \ \ \frac{\gamma_m}{h_m} \rightarrow 0,
\]
when $m$ goes to infinity. Then we can define a stopping time $\theta_m$ give by $\theta_m:=\tau_m^{\pi}\wedge (t_m+h_m) \wedge \widetilde{\tau}_m$ where $\widetilde{\tau}_m$ is the first default time of surviving stocks starting from $t_m$.

Next we use the weak dynamic programming principle (weak-DPP) proved in Bouchard and Touzi (2011), that is, for any $\epsilon>0$, there exists a control process $\pi\in \mathcal{A}$ such that 
\[
v_{\bar z}(t,x,s)-\epsilon \leq \mathbb{E}\left[\sum_{z\in I}(v_z)^{*}\left( \theta, X_{\theta}^{t,x,s,\bar{z},\pi},S_{\theta}^{t,s}\right)\mathbb{I}_{\{\mathbb{H}_{\theta}=z\}} \right],
\]
for any $\mathcal{G}$--stopping time $\theta \in [t,T]$.

We apply above weak dynamic programming principle (weak-DPP) for $v_{\bar z}(t_m,x_m,s_m)$ to $\theta_m$ and get for $\epsilon=\delta\frac{h_m}{2}>0$, there exists $\pi\in \mathcal{A}$ such that
\begin{eqnarray}
v_{\bar z}(t_m,x_m,s_m)-\delta\frac{h_m}{2} &\leq &\mathbb{E}\left[\sum_{z\in I}(v_z)^{*}\left( \theta_m, X_{\theta_m}^{t_m,x_m,s_m,\bar{z},\pi},S_{\theta_m}^{t_m,s_m}\right)\mathbb{I}_{\{\mathbb{H}_{\theta_m}=z\}}\right]. \nonumber
\end{eqnarray}
Equation (\ref{sub}) implies $(v_z)^{*}\leq \varphi_z$ for $\forall z\in  I$, thus
\begin{eqnarray}
\varphi_{\bar z}(t_m,x_m,s_m) + \gamma_m -\delta\frac{h_m}{2} &\leq & \mathbb{E}\left[ \sum_{z\in I}\varphi_z\left( \theta_m, X_{\theta_m}^{t_m,x_m,s_m,\bar{z},\pi},S_{\theta_m}^{t_m,s_m}\right)\mathbb{I}_{\{\mathbb{H}_{\theta_m}=z\}} \right]. \nonumber
\end{eqnarray}

Applying Ito's formula to the whole term in bracket, we obtain
\begin{equation} \label{sub1}
\begin{split}
\frac{\gamma_m}{h_m} - \frac{\delta}{2} & \leq \mathbb{E}\left[ \frac{1}{h_m}\int_{t_m}^{\theta_m} \sum_{z\in I} \mathcal{L}^\pi \varphi_{z}(u,X_{u}^\pi,S_{u}) \mathbb{I}_{\{\mathbb{H}_u=z\}} du \right] \\
& \leq \mathbb{E}\left[ \frac{1}{h_m}\int_{t_m}^{t_m+h_m} \mathcal{L}^\pi \varphi_{\bar z}(u,X_{u}^\pi,S_{u}) du \mid \tau_m^\pi\wedge \widetilde{\tau}_m > t_m+h_m \right] + K\mathbb{P}\left(\tau_m^\pi\wedge \widetilde{\tau}_m \leq t_m+h_m\right)
\end{split}
\end{equation}
after noting that the stochastic integral term cancels out by taking expectations since the integrand is bounded. By the similar technique as the supersolution proof, we can show that $\mathbb{P}\left(\tau_m^\pi\wedge \widetilde{\tau}_m \leq t_m+h_m\right)\rightarrow 0$ as $m\rightarrow \infty$. 

Since $\left(u,X_u^{t_m,x_m,\pi^m},S_u^{t_m,s_m}\right)\in B_\eta(\bar{t},\bar{x},\bar{s})$ in $[t_m,t_m+h_m]$ if $\tau_m^\pi\wedge \widetilde{\tau}_m > t_m+h_m$, we have 
$$
\mathcal{L}^\pi\varphi_{\bar z}\left(u,X_u^{t_m,x_m,\pi^m},S_u^{t_m,s_m}\right)<-\delta
$$
in $[t_m,t_m+h_m]$. Thus
\[
\frac{\gamma_m}{h_m} - \frac{\delta}{2} \leq \mathbb{E}\left[ \frac{1}{h_m}\int_{t_m}^{t_m+h_m} -\delta du \right] + K\mathbb{P}\left(\tau_m^\pi\wedge \widetilde{\tau}_m \leq t_m+h_m\right).
\]
Then we obtain
\[
\lim_{m\rightarrow \infty}\frac{\gamma_m}{h_m} - \frac{\delta}{2} \leq -\delta,
\]
which implies $0\leq -\frac{\delta}{2}$. We thus get the desired contradiction with $\delta>0$.
\end{proof}

\subsection{Proof of Theorem~\ref{vis_unique}}\label{2.11} 
 To prove the comparison principle, we need an alternative definition of viscosity solution in terms of the notions of semijets defined as below.

\begin{definition}
For $z\in I$, given $w_z$ a function on $[0,T)\times (0,\infty)^{N_z+1}$, the superjet of $w_z$ at $(t,x,s)\in [0,T)\times (0,\infty)^{N_z+1}$ is defined by:
\[
\begin{split}
\mathcal{P}^{1,2,\ldots,2,+} w_z(t,x,s) =\Big\{ & (R,q,Q)\in \mathbb{R}\times \mathbb{R}^{N_z+1}\times \mathcal{S}^{(N_z+1)\times(N_z+1)} \ \text{such that} \\
& w_z(t^{\prime},x^{\prime},s^{\prime})\leq w_z(t,x,s)+R(t^{\prime}-t)+\langle q, X^{\prime}-X\rangle+\frac{1}{2}\langle Q(X^{\prime}-X), X^{\prime}-X\rangle \\
& \ \ \ \ \ \ \ \ \ \ \ \ \ \ \ \ \ \ +o(|t^{\prime}-t|^2+|X^{\prime}-X|^2) \Big\},
\end{split}
\]
where $X=(x,s)$, $X^{\prime}=(x^{\prime},s^{\prime})$, and the bracket $\langle\cdot,\cdot\rangle$ is the inner product of two vectors. We define its closure $\bar{\mathcal{P}}^{1,2,\ldots,2,+}w_z(t,x,s)$ as the set of elements $(R,q,Q)\in \mathbb{R}\times \mathbb{R}^{N_z+1}\times \mathcal{S}^{(N_z+1)\times(N_z+1)}$ for which there exists a sequence $(t_m,X_m,R_m,q_m,P_m)\in [0,T)\times (0,\infty)^{N_z+1}\times\mathcal{P}^{1,2,\ldots,2,+}w_z(t,X)$ satisfying $(t_m,X_m,R_m,q_m,Q_m)\rightarrow (t,X,R,q,Q)$. We also define the subjets
\[ \mathcal{P}^{1,2,\ldots,2,-}w_z(t,x,s)=-\mathcal{P}^{1,2,\ldots,2,+}(-w_z)(t,x,s),\ \ \bar{\mathcal{P}}^{1,2,\ldots,2,-}w_z(t,x,s)=-\bar{\mathcal{P}}^{1,2,\ldots,2,+}(-w_z)(t,x,s). \]
\end{definition}

By standard arguments, one has an equivalent definition of viscosity solutions in terms of semijets: $w:=(w_z)_{z\in I}$ is a viscosity subsolution (resp. supersolution) to (\ref{General pre-default HJB equation}) at $(t,x,s)\in [0,T)\times (0,\infty)^{N+1}$ if and only if for all $z\in I$ and $(R,q,Q)\in \bar{\mathcal{P}}^{1,2,\ldots,2,+}w_z(t,x,s)$ (resp. $\bar{\mathcal{P}}^{1,2,\ldots,2,-}w_z(t,x,s)$).
\[ F_z\left(t,x,s,w,(R,q),Q\right) \leq (resp. \geq) \ 0. \]

We can now state and prove  the following comparison principle which gives rise to the uniqueness of viscosity solution.

\begin{proposition} \label{Comparison principle}
Let $W:=(W_z)_{z\in I}$ (resp. $V:=(V_z)_{z\in I}$) be a u.s.c. viscosity subsolution (resp. l.s.c. viscosity supersolution) of (\ref{General pre-default HJB equation}) on $[0,T)\times (0,\infty)^{N+1}$ and satisfy the growth condition
$|W_z(t,x,s)|,  |V_z(t,x,s)| \leq K(1+x^\gamma)$,  
 the terminal relation $W_z(T,x,s) \leq V_z(T,x,s)$, and the boundary relations  $W_z(t,x,s)\leq V_z(t,x,s)$ on the boundary of $[0,\infty)^{N_z+1}$ for $\forall z\in I$.
Then we have $W_z\leq V_z$ for $\forall z\in I$ on $[0,T]\times [0,\infty)^{N_z+1}$.
\end{proposition}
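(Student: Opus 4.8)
Throughout, write $F_z$ for the operator introduced before Definition~\ref{viscosity}. The plan is to prove the comparison principle by induction on the number of surviving stocks $N_z$, starting from $N_z=0$; this exploits the fact that the non-local term in $\mathcal{L}^{\pi}w_z$ couples $w_z$ only to the components $w_{z^i}$ with one \emph{more} default, i.e.\ with $N_{z^i}=N_z-1$. In the base case $N_z=0$ all names have defaulted, $I_z=\emptyset$, and the PDE is a local (degenerate) parabolic HJB equation in $(t,x)$ with bounded controls and Lipschitz coefficients, for which $W_z\le V_z$ follows from the classical viscosity comparison principle, using the terminal relation, the boundary relation at $x=0$, and the growth control at $x=\infty$. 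For the inductive step, fix $\bar z$ with $N_{\bar z}=k\ge1$, assume $W_{z'}\le V_{z'}$ for every $z'$ with $N_{z'}<k$ (in particular for every $z^i$, $i\in I_{\bar z}$), and suppose for contradiction that $M:=\sup_{[0,T)\times(0,\infty)^{k+1}}(W_{\bar z}-V_{\bar z})>0$.

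The next step is the usual penalization and doubling of variables. I would work with the strict-subsolution penalty $-\eta/(T-t)$ and a spatial penalty growing slightly faster than $x^{\gamma}$ (possible since $\gamma<1$; e.g.\ $\kappa(1+x^{\gamma'})$ with $\gamma<\gamma'<1$), chosen so that for all small $\eta,\kappa$ the penalized difference still has strictly positive supremum, attained at an interior point bounded away from $\{t=T\}$, from the spatial boundary (by the boundary relation $W_z\le V_z$ there), and from $x=\infty$ (by the penalty). Then double the spatial variables: for $n\ge1$ maximize $W_{\bar z}(t,X)-V_{\bar z}(t,Y)-\tfrac n2|X-Y|^2$ minus the penalties, obtain a maximizer $(t_n,X_n,Y_n)$, and invoke the standard lemma to get $n|X_n-Y_n|^2\to0$, $(X_n,Y_n)\to(\hat X,\hat X)$ for an interior point $\hat X=(\hat x,\hat s)$, $t_n\to\hat t<T$, and $W_{\bar z}(t_n,X_n)-V_{\bar z}(t_n,Y_n)\to M'$, the (still positive) penalized supremum.

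Now apply the parabolic Crandall--Ishii lemma to produce $(R_n,q_n,Q_n)\in\bar{\mathcal P}^{1,2,\ldots,2,+}W_{\bar z}(t_n,X_n)$ and $(R_n,q_n',Q_n')\in\bar{\mathcal P}^{1,2,\ldots,2,-}V_{\bar z}(t_n,Y_n)$ sharing the same time slot $R_n$, with $q_n-q_n'$ equal to the gradient of the spatial penalty (hence $\to0$), $q_n\approx n(X_n-Y_n)$, and
\[
\begin{pmatrix} Q_n & 0 \\ 0 & -Q_n' \end{pmatrix}\ \le\ 3n\begin{pmatrix} I_{k+1} & -I_{k+1} \\ -I_{k+1} & I_{k+1} \end{pmatrix},
\]
which is precisely the hypothesis of Assumption~\ref{structure} with $\epsilon=1/n$. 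Writing the semijet form of the subsolution inequality for $W_{\bar z}$ at $(t_n,X_n)$ and of the supersolution inequality for $V_{\bar z}$ at $(t_n,Y_n)$, subtracting, and using $\sup_{\pi}a_{\pi}-\sup_{\pi}b_{\pi}\le\sup_{\pi}(a_{\pi}-b_{\pi})$, reduces matters to bounding, uniformly in $\pi\in A$, the difference of the two copies of $\mathcal{L}^{\pi}$. The $\partial_t$ contributions cancel but the $\eta/(T-t)^2$ from the strict penalty survives and gives a term $\le-\eta/T^2<0$; the linear drift terms $(r+\theta^T\pi)x\,\partial_x$ and $\mu_i s_i\,\partial_{s_i}$ contribute $\le Cn|X_n-Y_n|^2\to0$ by Lipschitz continuity of the coefficients and $q_n\approx n(X_n-Y_n)$; and the collection of second-order terms is exactly $J_{\bar z}(\pi)$ (with $Q=Q_n$, $Q'=Q_n'$, $x_1=x_n$, $x_2=y_n$, $s_1=s_n$, $s_2=\tilde s_n$), so by Assumption~\ref{structure}, $J_{\bar z}(\pi)\le Kn\big(|x_n-y_n|^2+\sum_{i\in I_{\bar z}}|s_{n,i}-\tilde s_{n,i}|^2\big)\le Kn|X_n-Y_n|^2\to0$.

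It remains to handle the non-local term, which I expect to be the main obstacle. It equals $\sum_{i}h^i_{\bar z}(s_n)W_{z^i}(\mathrm{jump}_{X_n})-\sum_{i}h^i_{\bar z}(\tilde s_n)V_{z^i}(\mathrm{jump}_{Y_n})-\big(\sum_i h^i_{\bar z}(s_n)\big)W_{\bar z}(t_n,X_n)+\big(\sum_i h^i_{\bar z}(\tilde s_n)\big)V_{\bar z}(t_n,Y_n)$; the last two pieces converge to $-\big(\sum_i h^i_{\bar z}(\hat s)\big)M'+o(1)\le o(1)$ by boundedness and continuity of $h^i_{\bar z}$ together with $M'>0$. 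For the first two, the difficulty is that the jump sends $W_{\bar z}$ (near $X_n$) and $V_{\bar z}$ (near $Y_n$) to the $z^i$-component at two \emph{different} shifted points, so the induction hypothesis $W_{z^i}\le V_{z^i}$ cannot be used at a single point; instead one passes to the limit. Extract a subsequence along which $\pi_n\to\pi_\infty\in A$ (compactness of $A$), so that both $\mathrm{jump}_{X_n}$ and $\mathrm{jump}_{Y_n}$ converge to the common point $\hat X^{i}$; then upper semicontinuity of $W_{z^i}$, lower semicontinuity of $V_{z^i}$, the induction hypothesis, and continuity and boundedness of $h^i_{\bar z}$ give $\limsup_n\big(\sum_i h^i_{\bar z}(s_n)W_{z^i}(\mathrm{jump}_{X_n})-\sum_i h^i_{\bar z}(\tilde s_n)V_{z^i}(\mathrm{jump}_{Y_n})\big)\le\sum_i h^i_{\bar z}(\hat s)\big(W_{z^i}(\hat X^{i})-V_{z^i}(\hat X^{i})\big)\le0$. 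Passing to the limit in the subtracted inequality then yields $0\le-\eta/T^2<0$, the desired contradiction; hence $M\le0$, which closes the induction, and together with the assumed terminal and boundary relations this gives $W_z\le V_z$ on all of $[0,T]\times[0,\infty)^{N_z+1}$. The secondary difficulty is ensuring the penalized doubled function attains its maximum in the interior of the one-sided unbounded domain $(0,\infty)^{N_z+1}$ — away from both $x=0$ and $x=\infty$ — which is exactly why the boundary relation $(v_z)^*=(v_z)_*$ is imposed in Theorem~\ref{vis_unique}.
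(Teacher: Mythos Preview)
Your induction on $N_z$ is a genuinely different route from the paper, and it is the natural one here: since the non-local term in $\mathcal{L}^{\pi}w_z$ only points to states $z^i$ with one \emph{more} default, the system is triangular and you can peel off one layer at a time, using the induction hypothesis $W_{z^i}\le V_{z^i}$ to kill the jump term. The paper instead treats all $z$ simultaneously: it first multiplies by $e^{\Gamma t}$ so that the operator picks up an extra zeroth-order term $-\Gamma w_z$, then adds a penalty $\tfrac1n e^{-\lambda t}(1+x^{2\gamma}+\sum_{i}s_i^{2\gamma})$ to $V$ (with $\lambda$ large enough that the perturbed $V^n$ is \emph{still} a supersolution), sets $M=\max_{z\in I}\sup(\widetilde W_z-\widetilde V_z^n)$, and in the doubling argument bounds the jump term by $\sum_i h^i_{\bar z}(\hat s)\,M$ (because $M_{z^i}\le M$). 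After cancellation with the same quantity on the other side one is left with $\Gamma M\le 0$, a contradiction. Your approach avoids the $e^{\Gamma t}$ trick entirely and replaces the \lq\lq max over $z$\rq\rq\ device by the cleaner induction; both are legitimate, and yours is arguably more transparent for this one-directional default structure.

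There are, however, two technical gaps in your penalization. First, your spatial penalty $\kappa(1+x^{\gamma'})$ controls only $x$; nothing prevents the maximizer from escaping to $s_i\to\infty$, since the growth bound $|W_z|,|V_z|\le K(1+x^\gamma)$ is uniform in $s$ and gives no coercivity there. The paper's penalty includes $\sum_i s_i^{2\gamma}$ precisely for this reason. Second, and more seriously, merely subtracting $\kappa(1+x^{\gamma'})$ from the doubled functional does \emph{not} leave you with a clean subsolution/supersolution pair: the penalty derivatives feed into the drift and second-order terms and produce an error of order $\kappa\,x_n^{\gamma'}$, but the only a priori bound you have is $\kappa\,x_n^{\gamma'}\le C(1+x_n^{\gamma})$, which blows up as $\kappa\to 0$ (since $x_n$ can grow like $\kappa^{-1/(\gamma'-\gamma)}$). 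The paper sidesteps this by choosing the penalty to be $e^{-\lambda t}\phi$ with $\lambda$ large, so that its time derivative $-\lambda\phi$ dominates all the spatial-derivative contributions and $V+\tfrac1n\phi$ remains an honest supersolution; then no error term appears at all. Your argument goes through once you replace $\kappa(1+x^{\gamma'})$ by a penalty of this type (including the $s$-variables), but as written the step ``passing to the limit yields $0\le -\eta/T^2$'' does not follow.
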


\begin{proof} We prove the result in several steps.

\textbf{Step 1.} Let $\widetilde{W}_z=e^{\Gamma t}W_z$ and $\widetilde{V}_z=e^{\Gamma t}V_z$ for constant $\Gamma>0$, then a straightforward calculation shows that $\widetilde{W}$ (resp. $\widetilde{V}$) is a subsolution (resp. supersolution) of
$$
- \sup_{\pi\in A} \widetilde{\mathcal{L}}^\pi w_z(t,x,s) = 0, \ \ \text{on} \ \ [0,T)\times (0,\infty)^{N_z+1},
$$
for $z\in I$, where $\widetilde{\mathcal{L}}^\pi$ is given by
\begin{eqnarray}
\widetilde{\mathcal{L}}^\pi w_z(t,x,s) &= &\frac{\partial w_z}{\partial t} + (r+\theta^T\pi)x \frac{\partial w_z}{\partial x}  + \sum_{i\in I_z}\mu_is_i \frac{\partial w_z}{\partial s_i} + \frac{1}{2}\pi^T\Sigma\pi x^2 \frac{\partial^2 w_z}{\partial x^2} + \frac{1}{2}\sum_{i\in I_z}\sigma_i^2s_i^2 \frac{\partial^2 w_z}{\partial s_i^2}   \nonumber \\
&& {} + \sum_{i,j\in I_z,i<j} \rho_{ij}\sigma_i\sigma_js_is_j \frac{\partial^2 w_z}{\partial s_i\partial s_j} + \sum_{i\in I_z}\rho_i^T\sigma\pi\sigma_i xs_i \frac{\partial^2 w_z}{\partial x\partial s_i} \nonumber \\
&& {} + \sum_{i\in I_z} h^i_z(s) \left(w_{z^i}\left(t,x\left(1-\sum_{j=1}^N L_{ji}\pi^j\right),s^i\right)-  w_z\right) - \Gamma w_z. \label{new L}
\end{eqnarray}

We will show that $\widetilde{W}_z\leq \widetilde{V}_z$ for $\forall z\in I$ on $[0,T]\times [0,\infty)^{N_z+1}$ in the next few steps, thus we conclude $W_z\leq V_z$. We further define $\widetilde{F}$ function by
$$
\widetilde{F}_z\left(t,x,s,w,\nabla_{(t,x,s)}w_z,\nabla^2_{(x,s)}w_z\right) = - \sup_{\pi\in A} \widetilde{\mathcal{L}}^\pi w_z(t,x,s).
$$

\textbf{Step 2.} Define $\widetilde{V}^n_z:=\widetilde{V}_z+\frac{1}{n}\phi_z(t,x,s)$, where
$ \phi_z(t,x,s)=e^{-\lambda t}\left(1+x^{2\gamma}+\sum_{i\in I_z}s_i^{2\gamma}\right)$.
We claim that $\widetilde{V}^n$ is a viscosity supersolution to (\ref{new L}). Note that
\[ \mathcal{P}^{1,2,\ldots,2,-}\widetilde{V}^n_z(t,x,s) = \mathcal{P}^{1,2,\ldots,2,-}\widetilde{V}_z(t,x,s) + \frac{1}{n}\left(R^{\prime}, q^{\prime}, Q^{\prime}\right),  \]
where $R^{\prime} = -\lambda \phi_z$, $q^{\prime} = 2\gamma e^{-\lambda t}\big(x^{2\gamma-1},s_1^{2\gamma-1},\ldots,s_i^{2\gamma-1},\ldots,s_N^{2\gamma-1}\big)$ for $i\in I_z$ and 
\[
Q^{\prime} = 2\gamma(2\gamma-1)e^{-\lambda t}
 \begin{pmatrix}
  x^{2\gamma-2} & 0 & \ldots & 0  \\
  0 & s_1^{2\gamma-2} & \ldots & 0 \\
  \vdots & \vdots & \vdots & \vdots \\
  0 & 0 & \ldots & s_N^{2\gamma-2} 
 \end{pmatrix}.
\]

We have that for all $(R,q,Q)\in\mathcal{P}^{1,2,\ldots,2,-}\widetilde{V}^n_z(t,x,s)$, 
\[ \left(R-\frac{R^{\prime}}{n},q-\frac{q^{\prime}}{n},Q-\frac{Q^{\prime}}{n}\right)\in \mathcal{P}^{1,2,\ldots,2,-}\widetilde{V}_z(t,x,s). \]
Since $\widetilde{V}$ is a viscosity supersolution to (\ref{new L}), we have 
\[ \widetilde{F}_z\left( t,x,s,\widetilde{V},\left(R-\frac{R^{\prime}}{n},q-\frac{q^{\prime}}{n}\right),Q-\frac{Q^{\prime}}{n} \right)\geq 0 \] for $\forall z\in I$
by the equivalent definition of viscosity supersolution.
Using the inequality $\sup\{A-B\}\geq \sup\{A\}-\sup\{B\}$ and the boundedness of controls and coefficients, we have
\[
\widetilde{F}_z\left(t,x,s,\widetilde{V}^n,(R,q),Q\right) \geq \widetilde{F}_z\left( t,x,s,\widetilde{V},\left(R-\frac{R^{\prime}}{n},q-\frac{q^{\prime}}{n}\right),Q-\frac{Q^{\prime}}{n} \right) + \frac{1}{n}\left(\lambda + \Gamma + \sum_{i\in I_z}h_z^i(s) - K\right)\phi_z
\]
for a constant $K>0$.
Therefore, $\widetilde{F}_z\left(t,x,s,\widetilde{V}^n,(R,q),Q\right)\geq 0$ for a large enough $\lambda$, which implies that $\widetilde{V}^n$ is a viscosity supersolution of (\ref{new L}).

\textbf{Step 3.} We show that for all $n\geq 1$, it is $\widetilde{W}_z\leq \widetilde{V}^n_z$ for $z\in I$ on $[0,T)\times (0,\infty)^{N_z+1}$, and thus conclude that $\widetilde{W}\leq \widetilde{V}$. Fix $n\geq 1$ and define
\[ M_z:=\sup_{X\in [0,T)\times (0,\infty)^{N_z+1}}[\widetilde{W}_z(X)-\widetilde{V}^n_z(X)], \]
and
\[ M:=\max_{z\in I} M_z = M_{\bar z}, \]
where $X:=(t,x,s)$. We next show that $M\leq 0$. 
Suppose on the contrary that $M>0$, by the growth condition on $\widetilde{W}_{\bar z}$ and $\widetilde{V}_{\bar z}$ we have
\begin{equation*} \label{eq:x,s goes to infinity}
\lim_{x,s\rightarrow \infty}(\widetilde{W}_{\bar z}-\widetilde{V}^n_{\bar z})(t,x,s) = -\infty
\end{equation*}
for any $t\in[0,T)$.  By the terminal  and boundary conditions, we also have
\begin{equation*} \label{eq:t goes to T}
(\widetilde{W}_{\bar z}-\widetilde{V}^n_{\bar z})(T,x,s)\leq 0,\;
(\widetilde{W}_{\bar z}-\widetilde{V}^n_{\bar z})(t,0,s)\leq 0,\;
(\widetilde{W}_{\bar z}-\widetilde{V}^n_{\bar z})(t,x,0)\leq 0.
\end{equation*}
Note that here $s=0$ denotes $s_i=0$ for any $i\in I_z$.

Since $\widetilde{W}_{\bar z}-\widetilde{V}^n_{\bar z}$ is upper-semicontinuous and $M>0$, there exists some open bounded set $O\in [0,T)\times (0,\infty)^{N_{\bar z}+1}$ such that
\[ M=\max_{X\in O}[\widetilde{W}_{\bar z}(X)-\widetilde{V}^n_{\bar z}(X)]> 0. \]

We now use the doubling variable technique. For any fixed $\epsilon >0$, define  
\[ \Phi(X,Y) := \Phi_{\epsilon}(X,Y)=\widetilde{W}_{\bar z}(X)-\widetilde{V}^n_{\bar z}(Y)-\phi_1(X,Y),\]
where $\phi_1(X,Y):=\frac{1}{\epsilon}\lVert X-Y\rVert^2$. 
Note that $\Phi$ is upper-semicontinuous and hence achieves its maximum $\widetilde{M}=\widetilde{M}_{\epsilon}$ on the compact set $\bar{O}^2$ at $(\widetilde{X},\widetilde{Y})=(\widetilde{X}_{\epsilon},\widetilde{Y}_{\epsilon})$. We may write that, for all $\epsilon>0$,
\[
\begin{split}
M \leq \widetilde{M} & = \widetilde{W}_{\bar z}(\widetilde{X})-\widetilde{V}^n_{\bar z}(\widetilde{Y})-\phi_1(\widetilde{X},\widetilde{Y}) \leq \widetilde{W}_{\bar z}(\widetilde{X})-\widetilde{V}^n_{\bar z}(\widetilde{Y}).
\end{split}
\]
The sequence $(\widetilde{X},\widetilde{Y})$ converges, up to a subsequence, to some $(\hat{X},\hat{Y})\in \bar{O}^2$. Moreover, since $W_{\bar z}(\widetilde{X})-V^n_{\bar z}(\widetilde{Y})$ is upper bounded due to the upper-semicontinuity of $\widetilde{W}_{\bar z}$ and $-\widetilde{V}^n_{\bar z}$, we know $\phi_1(\widetilde{X},\widetilde{Y})$ is bounded, which implies $\hat{X}=\hat{Y}$.
Let  $\epsilon$ tend to 0 and take the $\limsup$, we get $M\leq \widetilde{W}_{\bar z}(\hat{X})-\widetilde{V}^n_{\bar z}(\hat{Y})\leq M$. Therefore, 
$\hat{X}=\hat{Y}\in O$ and $\phi_1(\widetilde{X},\widetilde{Y})\rightarrow 0$. 

\textbf{Step 4.} Since $(\widetilde{X},\widetilde{Y})$ converges to $(\hat{X},\hat{X})$ with $\hat{X}:=(\hat{t},\hat{x},\hat{s})\in O$, we may assume that for $\epsilon$ small enough, $(\widetilde{X},\widetilde{Y})$ lies in $O$.  We may write 
$\widetilde{X}:=(t_1,x_1,s_1)$ and $\widetilde{Y}:=(t_2,x_2,s_2)$. Then we have
\[
\nabla_{\widetilde{X}} \phi_1
=-\nabla_{\widetilde{Y}} \phi_1
={2\over \epsilon} (\widetilde{X}-\widetilde{Y}).
\]
Applying  Crandall-Ishii's lemma (see Crandall et al. (1992)), we have that  there exist $Q$ and $Q^{\prime}$ in $\mathcal{S}^{N_{\bar z}+1}$ such that
\begin{equation*}
\left(\nabla_{\widetilde{X}} \phi_1,Q\right)\in \bar{\mathcal{P}}^{1,2,\ldots,2,+}\widetilde{W}_{\bar z}(\widetilde{X}), \ \ \ \left(-\nabla_{\widetilde{Y}} \phi_1,Q^{\prime}\right)\in\bar{\mathcal{P}}^{1,2,\ldots,2,-}\widetilde{V}^n_{\bar z}(\widetilde{Y})
\end{equation*}
and the following matrix inequality holds in the non-negative definite sense:
\begin{equation*}
 \begin{pmatrix}
  Q & 0 \\
  0 & -Q^{\prime}
 \end{pmatrix}
\leq \frac{3}{\epsilon}
 \begin{pmatrix}
  I_{N_{\bar z}+1} & -I_{N_{\bar z}+1} \\
  -I_{N_{\bar z}+1} & I_{N_{\bar z}+1}
 \end{pmatrix}.
\end{equation*} 

By the viscosity subsolution (resp. supersolution) property of $\widetilde{W}$ (resp. $\widetilde{V}^n$), we have
\begin{equation} \label{eq:subsolution inequality}
\widetilde{F}_{\bar z}\left(t_1,x_1,s_1,\widetilde{W},\nabla_{\widetilde{X}} \phi_1,Q\right) \leq 0
\end{equation}
and
\begin{equation} \label{eq:supersolution inequality}
\widetilde{F}_{\bar z}\left(t_2,x_2,s_2,\widetilde{V}^n,-\nabla_{\widetilde{Y}} \phi_1,Q^{\prime}\right) \geq 0.
\end{equation}

 Subtracting (\ref{eq:subsolution inequality}) from (\ref{eq:supersolution inequality}), using the fact that the difference of the supreme is less than the supreme of the difference, we obtain
\[
\Gamma\left(\widetilde{W}_{\bar z}(\widetilde{X})-\widetilde{V}^n_{\bar z}(\widetilde{Y})\right) + \sum_{i\in I_{\bar z}}\left(h_{\bar z}^i(s_1)\widetilde{W}_{\bar z}(\widetilde{X}) - h_{\bar z}^i(s_2)\widetilde{V}^n_{\bar z}(\widetilde{Y})\right) \leq \sup_{\pi\in A}\Big \{ J_1(\pi) + J_2(\pi) + J_3(\pi) \Big \},
\]
where
$$
J_1(\pi) = (r+\theta^T\pi)\frac{2(x_1-x_2)^2}{\epsilon} + \sum_{i\in I_{\bar z}}\mu_i\frac{2(s_{1i}-s_{2i})^2}{\epsilon},
$$
$$
J_2(\pi) = \sum_{i\in I_{\bar z}}\left(h_{\bar z}^i(s_1)\widetilde{W}_{\bar{z}^i}\left( t_1,x_1\left(1-\sum_{j=1}^N L_{ji}\pi^j\right),s_1^i \right) - h_{\bar z}^i(s_2)\widetilde{V}^n_{\bar{z}^i}\left( t_2,x_2\left(1-\sum_{j=1}^N L_{ji}\pi^j\right),s_2^i \right)\right),
$$
and
\begin{eqnarray}
J_3(\pi) &= & \frac{1}{2}\pi^T\Sigma\pi \left(x_1^2 Q_{1,1} - x_2^2 Q_{1,1}^{\prime}\right) + \frac{1}{2}\sum_{i\in I_{\bar z}} \sigma_i^2\left(s_{1i}^2 Q_{k_i,k_i} - s_{2i}^2 Q_{k_i,k_i}^{\prime}\right) \nonumber \\
&& {} + \sum_{i,j\in I_{\bar z},i<j} \rho_{ij}\sigma_i\sigma_j\left(s_{1i}s_{1j}Q_{k_i,k_j} - s_{2i}s_{2j} Q_{k_i,k_j}^{\prime}\right)  + \sum_{i\in I_{\bar z}}\rho_i^T\sigma\pi\sigma_i \left(x_1s_{1i} Q_{1,k_i} - x_2s_{2i} Q_{1,k_i}^{\prime}\right). \nonumber 
\end{eqnarray}

Since $\phi_1(\widetilde{X},\widetilde{Y})\rightarrow 0$, we can derive $\limsup_{\epsilon\rightarrow 0} J_1(\pi) = 0$ for any $\pi$. By the definition of $M$, we have $\limsup_{\epsilon\rightarrow 0} J_2(\pi) \leq \sum_{i\in I_{\bar z}}h_{\bar z}^i(\hat{s})M$ for any $\pi$. By the structure condition and Crandall Ishii's inequality, we have 
\[
J_3(\pi) \leq \frac{K}{\epsilon}\left(|x_1-x_2|^2 + \sum_{i\in I_{\bar z}} |s_{1i}-s_{2i}|^2 \right).
\]
Thus we can derive that $\limsup_{\epsilon\rightarrow 0} J_3(\pi) \leq 0$ for any $\pi$. Therefore 
\[
\limsup_{\epsilon \rightarrow 0} \left(\Gamma\left(\widetilde{W}_{\bar z}(\widetilde{X})-\widetilde{V}^n_{\bar z}(\widetilde{Y})\right) + \sum_{i\in I_{\bar z}}\left(h_{\bar z}^i(s_1)\widetilde{W}_{\bar z}(\widetilde{X}) - h_{\bar z}^i(s_2)\widetilde{V}^n_{\bar z}(\widetilde{Y})\right) \right) = \Gamma M + \sum_{i\in I_{\bar z}}h_{\bar z}^i(\hat{s})M \leq \sum_{i\in I_{\bar z}}h_{\bar z}^i(\hat{s})M.
\]
Since $\Gamma>0$, we have 
$M \leq 0$, 
which is a contradiction to the assumption that $M>0$.
We conclude that $M\leq 0$, which implies $W_z\leq V_z$ for $\forall z\in I$ on $[0,T)\times (0,\infty)^{N_z+1}$.
\end{proof}

\section{Conclusions}
In this paper we  consider a utility maximization problem with looping  contagion risk.
 We  assume that the default intensity of one company depends on the stock prices of other companies and the default of one company induces immediate drops in the stock prices of the other surviving companies.  In addition to the verification theorem, we   prove 
 the  value function is the unique viscosity solution of the  HJB equation system.  We also compare and analyse the statistical distributions of terminal wealth of log utility based on two optimal strategies,  one using the full information of intensity process, the other a proxy constant intensity process.   Our numerical tests  show that, statistically, using trading strategies based on stock price dependant intensities would  achieve higher return on average, especially when the difference of the stock dependent intensity and the proxy constant intensity is big, but could also be more volatile in extreme scenarios. There remain many open questions in utility maximization with contagion risk, for example, the BSDE simulation method for power utility.   We leave these and other questions to future research.

\bigskip\noindent
{\bf Acknowledgements}. The authors are very grateful to  anonymous reviewers whose constructive comments and suggestions have helped to improve the paper of the previous version.



\begin{thebibliography}{1}
 
    
  
  \bibitem{Bielecki_2} Bielecki, T. R. and M. Rutkowski (2003), {\em Credit Risk: Modeling, Valuation and Hedging}, Springer.
  
  \bibitem{Bo_1} Bo, L. and A. Capponi (2016), {\em Optimal investment in credit derivatives portfolio under contagion risk}, Mathematical Finance, 26(4), 785-834.
  
  \bibitem{Bouchard} Bouchard, B. and N. Touzi (2011), {\em Weak dynamic programming principle for viscosity solutions}, SIAM Journal on Control and Optimization, 49 (3), 948-962.
  
  \bibitem{Bo_2} Bo, L., Y. Wang and X. Yang (2010), {\em An optimal portfolio problem in a defaultable market}, Advances in Applied Probability 42, 689-705.
  
  \bibitem{brigo_morini} Brigo, D. and M. Morini (2013), {\em Counterparty Credit Risk, Collateral and Funding}, Wiley.
  
   \bibitem{cjr}  Callegaro, G., Jeanblanc, M., and Runggaldier, W. {\em Portfolio optimization in a defaultable market under incomplete information}, forthcoming in Decisions in Economics and Finance.
  
  \bibitem{Capponi_1} Capponi, A. and J. E. Figueroa-Lopez (2011), {\em Dynamic portfolio optimization with a defaultable security and regime switching}, Mathematical Finance  24, 207-249.
  
  \bibitem{Capponi_2} Capponi, A. and C. Frei (2017), {\em Systemic influences on optimal equity-credit investment},  Management Science 63, 2756-2771.
  
  

  \bibitem{Cheridito} Cheridito, P. , H. M. Soner , N. Touzi and N. Victoir (2007), {\em Second order backward stochastic differential equations and fully non-linear parabolic PDEs}, Communications on Pure and Applied Mathematics 60, 1081-1110.


  \bibitem{Crandall} Crandall, M., H. Ishii and P.L. Lions (1992), {\em User's guide to   viscosity solutions of second order partial differential equations}, Bulletin of the American Mathematical Society 27, 1-67.

\bibitem{gu} Gu, J.W., W.K. Ching, T.K. Siu, and H. Zheng (2013),  {\em On pricing basket credit default swaps},  Quantitative Finance 13, 1845-1854. 
  
  \bibitem{Hou_1} Hou, Y. and X. Jin (2002), {\em Optimal investment with default risk}, FAME Research Paper No. 46, Switzerland.
  
  
  \bibitem{jarrow_yu} Jarrow, R. and  F. Yu  (2001), {\em Counterparty risk and the pricing of defaultable securities}, Journal of Finance 56, 1765-1799.
  
  
  
  \bibitem{Jiao_1} Jiao, Y. and H. Pham (2011), {\em Optimal investment with counterparty risk: a default-density modeling approach}, Finance and Stochastics 15, 725-753.
  
  \bibitem{Jiao_2} Jiao, Y. and H. Pham (2013), {\em Optimal investment under multiple defaults risk: a BSDE-decomposition approach}, Annals of Applied Probability 23, 455-491.
  
  \bibitem{Korn_2} Korn, R. and H. Kraft (2003), {\em Optimal portfolios with defaultable securities: a firm  value approach}, International Journal of Theoretical and Applied Finance 6, 793-819.
  
  \bibitem{Kushner} Kushner, H.J. and P. Dupuis (2001), {\em Numerical methods for stochastic control problems in continuous time}, Springer.
  
  



  
  
  
\bibitem{Pham}  Pham, H. (2009),  {\em Continuous-Time Stochastic Control and Optimization with Financial Applications}, Springer.

\bibitem{Rogers} Rogers, L.C.G. (2013), {\em Optimal Investment}, Springer.
  
\end{thebibliography}
\end{document}